\let\a=\alpha \let\b=\beta  \let\g=\gamma  \let\d=\delta
\let\e=\varepsilon
\let\z=\zeta  \let\h=\eta   \let\th=\theta \let\k=\kappa \let\l=\lambda
\let\m=\mu    \let\n=\nu    \let\x=\xi         \let\r=\rho
\let\s=\sigma \let\t=\tau    \let\ph=\varphi
   \let\o=\omega
\let\G=\Gamma   \let\L=\Lambda \let\X=\Xi
\let\O=\Omega
\def\EE{{\cal E}}
 \def\pp{{\bf p}}
 \def\xx{{\bf x}} \def\yy{{\bf y}} 
\def\kk{{\bf k}}
\def\nn{\nonumber}
\newcommand{\SVert}{\Vert\!|}
\def\EE{\mathbb{E}}
\def\\{\hfill\break}
\def\={:=}
\let\io=\infty
\def\tende#1{\,\vtop{\ialign{##\crcr\rightarrowfill\crcr\noalign{\kern-1pt
    \nointerlineskip} \hskip3.pt${\scriptstyle #1}$\hskip3.pt\crcr}}\,}
\def\otto{\,{\kern-1.truept\leftarrow\kern-5.truept\to\kern-1.truept}\,}
\def\to{\rightarrow}
\def\la{\left\langle}
\def\ra{\right\rangle}
\def\qed{\hfill\raise1pt\hbox{\vrule height5pt width5pt depth0pt}}
\def\lis{\overline}
\def\be{\begin{equation}}
\def\ee{\end{equation}}
\def\bp{\begin{pmatrix}}
\def\ep{\end{pmatrix}}
\def\bea{\begin{eqnarray}}
\def\eea{\end{eqnarray}}
\def\nn{\nonumber}
\def\pref#1{(\ref{#1})}
\tikzset{snake it/.style={decorate, decoration=snake}}
\newcommand{\ud}{d}
\newcommand{\ii}{\mathrm{i}}
\newcommand{\bx}{{\mathbf{x}}} 
\newcommand{\bn}{{\mathbf{n}}}
\newcommand{\bk}{{\mathbf{k}}}
\newcommand{\bq}{{\mathbf{q}}}
\newcommand{\balpha}{{\boldsymbol{\alpha}}}
\newcommand{\bchi}{{\boldsymbol{\chi}}}
\newcommand{\bpsi}{{\boldsymbol{\psi}}}
\newcommand{\bxi}{{\boldsymbol{\xi}}}
\newcommand{\by}{{\mathbf{y}}}
\newcommand{\ZZ}{\mathbb{Z}}
\newcommand{\btheta}{\boldsymbol{\vartheta}}
\newcommand{\bfzeta}{\boldsymbol{\zeta}}
\theoremstyle{plain}
\newtheorem{theorem}{Theorem}[section]
\newtheorem{lemma}[theorem]{Lemma}
\newtheorem{corollary}[theorem]{Corollary}
\theoremstyle{definition}
\newtheorem{definition}[theorem]{Definition}
\newtheorem{remark}[theorem]{Remark}
\newtheorem*{remark*}{Remark}
\numberwithin{equation}{section}
\renewcommand{\email}[2][]{%
  \ifx\emails\@empty\relax\else{\g@addto@macro\emails{,\space}}\fi%
  \@ifnotempty{#1}{\g@addto@macro\emails{\textrm{(#1)}\space}}%
  \g@addto@macro\emails{#2}%
}
\renewcommand{\Im}{\mathfrak{Im}}
\title{Universality in the 2d quasi-periodic Ising model and Harris-Luck irrelevance}
\author{Matteo Gallone}
\address{International School for Advanced Studies - SISSA \\ via Bonomea 265 - 34136 Trieste (Italy). \vspace{-0.2cm}}
\email{matteo.gallone@sissa.it}
\author{Vieri Mastropietro}
\address{Dipartimento di Matematica F. Enriques - 
Universit\`a di Milano \\ via C. Saldini 50 - 20129 Milano (Italy).}
\email{vieri.mastropietro@unimi.it}
\date{\today}
\begin{document}

\vspace{-0.4cm}
\maketitle

\vspace{-0.75cm}

\begin{abstract} 
We prove that in the 2D Ising Model with a weak bidimensional quasi-periodic disorder in the interaction, the critical behavior is the same as in the non-disordered case; that is, the critical exponents for the specific heat and energy-energy correlations are identical, and no logarithmic corrections are present. The disorder produces a quasi-periodic modulation of the amplitude of the correlations and a renormalization of the velocities, that is, the coefficients of the rescaling of positions, and of the critical temperature. The result establishes the validity of the prediction based on the Harris-Luck criterion, and it provides the first rigorous proof of universality in the Ising model in the presence of quasi-periodic disorder in both directions and for any angle. Small divisors are controlled assuming a Diophantine condition on the frequencies, and the convergence of the series is proved by Renormalization Group analysis.
\end{abstract}

\vspace{-0.3cm}
\tableofcontents

\section{Introduction }

\subsection{Universality and Harris-Luck criterion}
A certain number of macroscopic properties close to phase transitions show a remarkable independence from microscopic details. In particular, it is both predicted theoretically and observed experimentally that the critical exponents, describing the singularities of thermodynamic functions, are the same in systems sharing only a few general features but having different inter-molecular forces, atomic weights, or lattice structures. This phenomenon is known as \emph{universality}, and the Renormalization Group, introduced by Kadanoff \cite{K} and Wilson \cite{W}, provides an explanation by introducing the concepts of scaling dimension, dimensionally relevant, marginal, or irrelevant interactions, and universality classes. The fact that interactions are dimensionally relevant or marginal \textcolor{black}{does not by itself imply} that they can change the critical behavior; the precise effect on critical exponents is governed by an effective dimension, which can be different from the scaling dimension due to cancellations or other mechanisms.

A paradigmatic model where universality can be investigated is the Ising model, which describes a system of spins with nearest-neighbor interactions and shows a phase transition in dimensions $d \ge 2$ \textcolor{black}{characterized} by certain values of the critical exponents. One can perturb this model with finite-ranged or higher spin interactions, or consider it on different lattices, and ask what happens to the critical behavior. In $d \ge 4$, universality is proven in the context of the closely related $\phi^4$ models (see, \textit{e.g.}, \cite{B} and references therein), where it has been rigorously shown that the values of the exponents are equal to the mean-field ones, \emph{e.g.}, the correlation length exponent is $\nu = 1/2$ and the specific heat exponent $\alpha = (4-d)/2$. We remark, however, that while in \(d \ge 5\) the behavior is exactly the same as in the mean-field theory, in $d = 4$ logarithmic corrections are present; the difference is that in the first case the interaction is irrelevant in the Renormalization Group sense, while in the second it is marginal (or, more precisely, marginally irrelevant).

In $d=2$, the Ising model with nearest-neighbor interaction on a square lattice was solved by Onsager \cite{O}. His solution proves that the value of the critical exponents ($\nu=1$, $\alpha=0$) is different from the \textcolor{black}{ones obtained by approximate methods, such as the mean-field}. With universality in mind, it is natural to ask whether these values are robust under perturbations. One can ask, for example, if the addition of a next-to-nearest neighbor interaction or a non-quadratic one leaves the system in the Onsager universality class or not. In this case, it is not convenient to use $\phi^4$ models, but one can use the representation in terms of Grassmann integrals, at the basis of the exact solution, and analyze it using Renormalization Group methods. This strategy was proposed in \cite{PS1} and applied to the computation of the specific heat and energy correlations in \cite{PS2} and in Appendix N of \cite{M1}. The Grassmann integral representation was then used in \cite{M4,M1} for the case of two Ising models coupled to each other by a quartic interaction, which can be mapped into models like the Eight-vertex, Six-vertex, or the Ashkin-Teller model.

Even if single or coupled Ising models have the same exponents in the absence of quartic interaction, when the interaction is present they belong to different universality classes. In the first case, the interaction is dimensionally irrelevant, implying that, when the strength of the interaction is small enough, the exponents are the same as in the pure Ising model (\textit{e.g.}\ $\nu=1$, $\alpha=0$) and no logarithmic corrections are present. In the second case, the interaction is marginal, and its flow is controlled thanks to the complicated cancellations related to emergent symmetries. The exponents are continuous functions of the strength of the coupling \cite{M1}, verifying suitable Kadanoff extended scaling relations \cite{BFM1,BFM2}. Continuous exponents also appear in the transition between the two universality classes in the Ashkin-Teller model \cite{GM11,Micm}.

Subsequently, the Renormalization Group approach to interacting Ising models was used in the proof of the universality of the central charge \cite{M2}, the scaling limit of all the energy correlations \cite{M3}, and to analyze the role of non-periodic boundary conditions \cite{FG}. Interacting dimer models, which are in the same universality class as coupled Ising models in some parameter regions, were studied in \cite{GMT}. This approach typically requires a small value of the coupling.

Other approaches, different from the Renormalization Group, lead to universality results for the Ising model, like those in \cite{C1,C2}, with nearest-neighbor interactions on different planar graphs. In \cite{C4}, the Ising model with non-planar, or alternatively some non-nearest-neighbor pair interactions, was considered, proving the Gaussianity of correlations without a smallness condition but without providing information on exponents.

Another situation where the issue of universality can be posed in the Ising model is when \textit{disorder} is considered. Disorder can be introduced either in the magnetic field \cite{Az1,Az2,Az3,J} or in the interaction, and we focus here on this second case, for which much less is known at a rigorous level. Typically, one can consider two kinds of disorder in the interaction: random or quasi-periodic. The first describes the effect of impurities, while the second is realized in quasi-crystals or cold atoms experiments. Early investigations were done in the 2D random Ising model; in particular, the Ising model with a layered disorder (that is, constant in one direction) was considered in \cite{MW} (see also \cite{H} and \cite{F36}), and the specific heat was found continuous (instead of logarithmically divergent), while with a bidimensional random disorder, a double logarithmic behavior in the specific heat \cite{DD2} was found.

In more general cases, Harris \cite{H} proposed a criterion to predict when random disorder is irrelevant or not; if $\xi$ is the correlation length and $\Delta^2$ is the covariance of the disorder, the condition for irrelevance is $\sqrt{\Delta^2 / \xi^d} \ll |\beta - \beta_c|$, where the left-hand side is (roughly) the ratio between typical fluctuation of the sum of disorder terms within a distance given by the correlation length $\xi$ and the mean ($\beta_c$ is the critical inverse temperature). As close to criticality $\xi \sim |\beta - \beta_c|^{-\nu}$, with $\nu$ being the critical exponent, irrelevance is predicted for $\nu d/2 > 1$, see \cite{H}, while relevance is expected for $\nu d/2 < 1$. According to this criterion, irrelevance is predicted for $d \ge 5$ ($\nu=1/2 > 2/d$) and relevance for $d=3$ (conformal bootstrap predicts $\nu=0.627 \dots < 2/3$, see \cite{R}). In the marginal cases $d=4$ ($\nu=1/2$) and $d=2$ ($\nu=1$), Harris's criterion gives no predictions in general.

On the rigorous side, a generalization of Harris's result was proved in \cite{Ch}, where it was shown that in all systems with continuous transitions $\tilde{\nu} \ge 2 / d$, with $\tilde{\nu}$ being the index of the disordered system. In the case of layered disorder in $d=2$, the system is effectively one-dimensional as far as the ratio between mean and fluctuations is concerned, so the relevance of disorder is predicted in agreement with \cite{MW}. A rigorous proof is still lacking, despite progress being made in this direction in \cite{Gree}. In addition, the Harris criterion has been verified in simplified models of a probabilistic nature \cite{G3}.

While the Harris criterion regards the case of random hopping, the case of \textit{quasi-periodic} disorder was considered by Luck \cite{L} (Harris-Luck criterion). In the case of the 2D Ising model with layered quasi-periodic disorder, the condition for irrelevance was generalized to ${1/\xi}\sum_{x=0}^\xi \delta_x \ll |\beta-\beta_c|$, where $\delta_x$ is a suitable function measuring the fluctuation of the quasi-periodic hopping, see \cite{L}. Since $\nu=1$, the condition for irrelevance requires that $\sum_{x=0}^\xi \delta_x$ is bounded and small uniformly in $\xi$, a condition verified in the case of weak quasi-periodic modulation, while it is violated for strong quasi-periodic disorder.

Such conjectures were checked in \cite{L} by a perturbative method, but the issue of convergence of the series was not addressed; they have also been confirmed by numerical investigations, see \emph{e.g.}\ \cite{GG,C}. In particular, in \cite{C} it was numerically found that the Ising model with weak quasi-periodic disorder remains in the Onsager class, while evidence of a new universality class is found at stronger disorder. Finite difference equations for the spin correlations have been derived in \cite{Pe1} from which low and high temperatures expansions are obtained.

In this paper, we finally prove that the critical exponents for the specific heat and energy-energy correlations in the weak quasi-periodic Ising model are identical to the Onsager ones, both for layered and non-layered disorder, in agreement with the Harris-Luck criterion. The result is based on convergent series expansions in the disorder, and the small-divisor problem is addressed via Renormalization Group analysis. This provides one of the very few cases in which a rigorous understanding of the critical behavior of the 2D Ising model with disorder is achieved and universality is proven.

\subsection{Main result}

The Hamiltonian of the 2D quasi-periodic Ising model is
\begin{equation} 
H=-\sum_{\bx\in \L_i} \big[
J^{(1)}_{\bx}
 \sigma_\bx \sigma_{\bx+{\bf e}_1}+J^{(0)}_{\bx}
 \sigma_\bx \sigma_{\bx+{\bf e}_0}\big] \,
\label{a1} 
\end{equation}
where ${\bf e}_0=(1,0)$, ${\bf e}_1=(0,1)$, $\bx=(x_0,x_1)$, $\s_\xx=\pm$ and:
\begin{enumerate} 
\item For $i \in \mathbb{N}$,
$\bx\in \Lambda_{ i}$, $\L_{ i}=(-L_{0,i}/2, L_{0,i}/2]\times (-L_{1,i}/2, L_{1,i}/2]\cap \ZZ^2$,
$\sigma_{\bx}=\pm$ and periodic boundary conditions are imposed.
\item
The interaction is given by
\begin{equation}
J^{(j)}_{\bx} =\Big(1+
\lambda \phi^{(j)}(2\pi \omega_{0,i } x_0+\theta_{j,0},
2\pi \omega_{1,i } x_1+\theta_{j,1})\Big) J^{(j)}\, ,\qquad j=0,1 \,
\label{b1}
\end{equation}
where $\phi^{(j)}(\by)$ is such that
\begin{equation}\label{eq:UnoPuntoTre}
\phi^{(j)}(\by)
=\sum_{n_0=-\lfloor L_{0,i}/2\rfloor}^{\lfloor(L_{0,i}-1)/2\rfloor}
\sum_{n_1=-\lfloor L_{1,i}/2\rfloor}^{\lfloor(L_{1,i}-1)/2\rfloor}\hat \phi^{(j)}_{\bn} 
e^{\ii  (n_0 y_0+n_1 y_1)} \, ,
\end{equation}
with $\hat\phi^{(j)}_{\bn}=(\hat\phi^{(j)}_{-\bn})^*$,
$\bn=(n_0,n_1)$ and $\mathbf{y}=(y_0,y_1)$; moreover, for suitable real constants $A, \eta>0$
\begin{equation}\label{eq:UnoPuntoQuattro}
|\hat \phi^{(j)}_{\bn}|\le A e^{- \h |\bn|} \, .
\end{equation}
\item $\{\o_{0,i}\}_{i \in \mathbb{N}},\{\o_{1,i}\}_{i \in \mathbb{N}}$ are the best approximants $\omega_{0,i }=p_{0,i }/q_{0,i }$ 
and $\omega_{1,i }=p_{1,i }/q_{1,i }$ of two irrational numbers
$\o_0,\o_1<1$. For $j=0,1$, the latter are obtained starting from the continuous fraction representation
$
\o_j=a_{j,0}+{1\over a_{j,1}+{1\over a_{j,2}+{1\over a_{j,3}+\cdots}}}$ from which, one has
${p_{j,1}\over q_{j,1}}=a_{j,0}+{1\over a_{j,1}}$, ${p_{j,2}\over q_{j,2}}=a_{j,0}+{1\over a_{j,1}+{1\over a_{j,2}}}$ with  $
\left|\o_j-{ p_{j,i}\over q_{j,i}}\right| \;\le\; { C\over q^2_{j,i}}$ (see \textit{e.g.} Section IV.7 in \cite{D}).
\item $\o_1,\o_0$ are irrational numbers verifying a
\emph{Diophantine condition}, that is, for $j=0,1$, 
\begin{equation}
	|2\pi \o_j n|_T\;{\ge}\; c_j |n|^{-\rho_j}\, \qquad \forall n \in \mathbb{Z} \setminus\{0\} \, ,\label{dd}
\end{equation}
where $|\cdot|_T:=\inf_{m \in \mathbb{Z}}|\cdot+2m \pi|$ and
$\r_j\ge 1$, $c_j>0$. 
\item The side lengths of the boxes are chosen so that
\begin{equation}
L_{1,i}=q_{1,i} \, ,\qquad L_{0,i}=q_{0,i} \, ,
\end{equation}
and $\lim_{i\to\io}L_{1,i}/L_{0,i}=c$ with $0<c<\infty$.
\end{enumerate}
\begin{remark}
\\
\vspace{-0.4cm}
\begin{enumerate}
\item The energy correlations of the quasi-periodic Ising model are obtained as the limit of the energy correlations of a sequence of Ising models in boxes with interactions periodic in space with a period equal to the side of the boxes. 
In the limit $i\to\io$ the modulation
becomes 
$\sum_{n_0,n_1=-\io}^\io
\hat \phi^{(j)}_{\bn} 
e^{\ii  (n_0 (2\pi \omega_{0} x_0+\theta_{j,0})
+n_1(2\pi  \omega_{1} x_1+\theta_{j,1}))} 
$, that is 
quasi-periodic in both directions.
While in principle other ways to define a quasi-periodic Ising model can be imagined, this is the one chosen in numerical simulations in the physical literature, see \textit{e.g.} \cite{C}.
\item The quasi-periodic Ising model has been considered up to now
only with layered disorder, corresponding \textit{e.g.}\ to $\phi^{(0)}=0$; for instance $J^{(0)}_\xx=J$  and 
$J^{(1)}_\xx=J(1+\l \cos (2\pi \o_1 x_1+\th))$. In contrast, we consider a rather more general situation including
interactions of the form, for instance,
$J^{(0)}_\xx=(1+\l \cos (2\pi \o_0 x_0+\th)
\cos (2\pi \o_1 x_1+\phi))J^{(0)}$,  $J^{(1)}_\xx=
(1+\l (\cos (3\pi \o_0 x_0+\psi)\cos (6\pi \o_0 x_0+2\psi)
\cos (2\pi \o_1 x_1+\xi)))J^{(1)}$, with $\th, \ph,\psi,\x$ phases: that is 
the interaction is different in any bond, and quasi periodically modulated in both directions. 
\item The form of disorder we are considering breaks essentially all the symmetries present in the
non-disordered case other than spin-flip symmetry; in particular translation invariance
and inversion 
symmetry $x_j\to -x_j$ in both directions. Less general forms of disorder preserve some symmetry; 
in particular, in the case of layered disorder, translation invariance
and inversion in one space direction is preserved.
\end{enumerate}
\end{remark}
The truncated \emph{energy correlations} are defined for $\bx_1,\bx_2 \in \Lambda_i$ and $j_1,j_2 \in \{\pm\}$ as
\begin{equation}
S_i(\xx_1,j_1;\xx_2,j_2)=\langle 
\s_{\xx_1} \s_{\xx_1+{\bf e}_{j_1}}\s_{\xx_2} \s_{\xx_2+{\bf e}_{j_2}}\rangle_i-\langle \s_{\xx_1} \s_{\xx_1+{\bf e}_{j_1}}
\rangle_{i}\langle
\s_{\xx_2} \s_{\xx_2+{\bf e}_{j_2}}\rangle_{i} \, ,
\end{equation}
with
\begin{equation}
\la O\ra_{i}={1\over Z}  \sum_{ \{\sigma_\bx\} \in \{\pm \}^{\Lambda_i} } e^{- \beta H} O \, ,
\quad\quad Z=
\sum_{ \{\sigma_\bx\} \in \{\pm \}^{\Lambda_i} } e^{- \beta H} \, ,\end{equation}
where $Z$ is the partition function at inverse temperature $\beta>0$.

If $\l=0$, for $\b\neq\b_c$, with $\b_c$ given by
\begin{equation}
	\sinh (2\b_c J^{(0)}) \sinh (2\b_c J^{(1)})=1\label{is} \, ,
\end{equation}
the thermodynamic limit $i \to +\infty$ of the truncated energy correlations exists and is denoted by
$S(\xx_1,j_1;\xx_2,j_2)$. Such limit decays exponentially
for large distances with
correlation length $\x$ diverging at $\b_c$ as
$\x=O(|\b-\b_c|^{-1})$; $\b_c$ is therefore the
{\it critical temperature}. Moreover, 
in the limit $\b\to\b_c$ one has
\begin{equation}
S(\xx_1,j_1;\xx_2,j_2)=
Z_{j_1} Z_{j_2}g^0_+(\xx_1-\xx_2)g^0_-(\xx_2-\xx_1)+R_{j_1,j_2}(\xx_1,\xx_2)\label{bo}
\end{equation}
with $g^0_\pm(\xx-\xx_2)=  (v_1 (x_{1,1}-x_{2,1})\pm \ii (v_0 (x_{1,0}-x_{2,0}))^{-1}$,
$Z_{j}, v_1, v_0$ real constants,
${|R_{j_1,j_2}(\xx_1,\xx_2)|\over |\xx_1-\xx_2|^{2+\th}}\to 0$ for $|\bx_1-\bx_2|\to \io$ and 
$\th= \frac{1}{4}$. $\b_c$ is therefore the critical temperature, defined as the temperature
at which 
the correlation length diverges. 
Note that one is taking the $|\L_i|\to \io$
limit at  $\b\neq\b_c$, so that terms 
$O(e^{- L_i c |\b-\b_c|})$ vanishes in the limit, see Section \ref{sec:5} below, if $c$ is a constant and $L_i=\min\{L_{0,i},L_{1,i}\}$ is the shorter side of $\L_i$.
Note that $v_1,v_0$ are
the coefficients of the anisotropic rescaling of positions 
$g_+(x)=\bar g(v_1 x_{1},  v_0 x_0)$
with $\bar g(x_1,x_0)=
{1\over x_{1} + \ii  x_0}$ (and similar for $g_-$); they will be also called velocities.
Our main result describes the long-distance decay of correlations in the interacting case.
\vskip.3cm
\begin{theorem}\label{mainthm} Consider the Hamiltonian \pref{a1} and assume (1)-(5). There exist $\l_0,C,\kappa>0$, functions $b:(-\l_0,\l_0) \to \mathbb{R}$, $\xi_j:(-\l_0,\l_0) \times \mathbb{T}^2 \to \mathbb{R}$ and $\alpha_j:(-\l_0,\l_0) \to \mathbb{C}$ for $j=0,1$, with  $\sup_{\lambda} |b(\l)|,\,\sup_\lambda |\alpha_j(\l)|,\sup_{\lambda,\boldsymbol{\vartheta}} |\xi_j(\l,\boldsymbol{\vartheta})| < C$ such that the following holds. 
For any $|\lambda| < \lambda_0$ there exists $\beta_c(\lambda)=\beta_c+b(\lambda)$ such that
\begin{enumerate}
\item for $\b\neq\b_c(\l)$ the limit
$\lim_{i \to \infty}
S_{i}(\xx_1,j_1;\xx_2,j_2)=S(\xx_1,j_1;\xx_2,j_2)$
exists and is finite.
\item 
For $\b\neq \b_c(\l)$
\begin{equation}
|S(\xx_1,j_1;\xx_2,j_2)|\le C e^{-\k (|\b-\b_c(\l)||\xx_1-\xx_2|)^{1\over 2}} \, .
\end{equation}
\item
For $\b\to \b_c(\l)$ 
\begin{equation} \label{eq:SdelTeorema}
\lim_{\beta\to \beta_c(\lambda)}S(\xx_1,j_1;\xx_2,j_2)=
Z_{j_1,\bx_1}(\l) Z_{j_2,\bx_2}(\l) g_+(\xx_1-\xx_2)g_-(\xx_2-\xx_1)
+R_{j_1,j_2}(\xx_1,\xx_2)
\end{equation}
with 
\begin{equation}
g_+(\xx)=  {1\over v_1(\l) x_{1} + \ii  v_0(\l) x_0} \, ,\quad g_-(\xx)=  {1\over (v_1(\l))^* x_{1} -\ii  (v_0(\l))^* x_0} \, ,
\end{equation}
 and 
$|R(\xx_1,j_1;\xx_2,j_2)|/|\xx_1-\xx_2|^{2+\th}\to 0$ for $|\bx_1-\bx_2|\to \io$,
$\th=
 1/4$
and 
\begin{equation}
 Z_{j,\bx}(\l)=Z_{j}+\l \x_j(\l,2\pi \o_0 x_0,2\pi \o_1 x_1)\quad 
v_j(\l)=v_j+\l \a_j(\l)
\end{equation}
with $Z_{j}, v_j$ defined in \pref{bo}.
 \end{enumerate}  
\end{theorem}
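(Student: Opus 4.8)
The plan is to recast the model as a Grassmann integral and to analyse it by a multiscale Renormalization Group expansion in which the small divisors generated by the quasi-periodic disorder are controlled by the Diophantine condition (4); the crux is to show that, unlike the marginal quartic perturbations appearing for coupled Ising models, the present quadratic quasi-periodic perturbation produces only \emph{bounded, summable} renormalizations of finitely many running couplings, hence no anomalous exponents and no logarithms. For each finite box $\L_i$ the interaction $J^{(j)}_\bx$ is periodic of period $q_{j,i}=L_{j,i}$, so (as in the Kadanoff--Ceva/dimer formalism of \cite{PS1,PS2,M1}) the partition function and the generating functional of the energy correlations $S_i$ can be written as a finite combination of Grassmann Gaussian integrals over the four spin structures, with action $\AAA=\AAA_0+\l\VV$, where $\AAA_0$ is the translation invariant Dirac-type quadratic action of the pure Onsager model and $\VV$ is quadratic in the fields with kernel the finite Fourier series $\phi^{(j)}$, i.e.\ with weights $\l\hat\phi^{(j)}_{\bn}$, $|\hat\phi^{(j)}_{\bn}|\le Ae^{-\h|\bn|}$. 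The energy observable $\s_\bx\s_{\bx+{\bf e}_j}$ becomes a local bilinear with coefficient proportional to $\tanh(\b J^{(j)}_\bx)$, carrying the local disorder $\phi^{(j)}(2\pi\o_{0,i}x_0+\th_{j,0},2\pi\o_{1,i}x_1+\th_{j,1})$: this is the origin of the $\bx$-dependent amplitude $Z_{j,\bx}(\l)$. Passing to momenta on the torus dual to $\L_i$, the free propagator $g(\bk)$ is singular only at the two Fermi points $\bk=\pm\bp_F$, where it behaves as $(-\ii m(\b)+v_1k_1+\ii v_0k_0)^{-1}$ with Dirac mass $m(\b)=O(\b-\b_c)$, while $\VV$ couples $\bk$ to $\bk+2\pi\bn\cdot\boldsymbol{\o}_i$; by (3), for $|\bn|$ not too large $|2\pi\o_{j,i}n|_T$ is bounded below uniformly in $i$ by the fixed Diophantine bound inherited from (4), and the spurious finite-volume resonances at $|n|\sim q_{j,i}$ are suppressed by $e^{-\h q_{j,i}}$.

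Next I would set up the RG: decompose $g=\sum_{h\le0}g^{(h)}$ into single-scale propagators supported on momentum shells of size $2^h$ around $\pm\bp_F$, integrate the fields iteratively, and expand the effective potentials $\VV^{(h)}$ over Gallavotti--Nicol\`o trees, bounding the fermionic integrations by Gram / Brydges--Battle--Federbush determinant estimates. Localization isolates the relevant and marginal local parts of $\VV^{(h)}$: the effective mass $\n_h$ (scaling dimension $1$, relevant) and the field-strength and velocity couplings $Z_h,v_{j,h}$ (marginal); everything else is irrelevant. The Harris--Luck mechanism enters here: each vertex of $\VV$ transfers a quasi-momentum $2\pi\bn\cdot\boldsymbol{\o}_i$ which is nonzero unless the vertex lies in a cluster of total transfer $\sum_k\bn_k=0$; a non-resonant internal line is then bounded, via the Diophantine condition, by $C|\sum_k\bn_k|^{\r}$ with $\r=\max\{\r_0,\r_1\}$, and these polynomial factors are summable against the weights $\prod_ke^{-\h|\bn_k|}$. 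Hence the only disorder effects that are not manifestly convergent are confined to the resonant clusters, which by power counting renormalize \emph{precisely} $\n_h,Z_h,v_{j,h}$ and generate no new coupling.

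One then controls the flow. The driving term of the beta function for $\n_h$ is bounded uniformly in $h$, because the relevant disorder contributions enter through oscillating sums whose partial sums over the scales are bounded — again by the Diophantine condition, which is the quantitative content of the irrelevance condition $\sum_{x\le\x}\d_x\ll|\b-\b_c|$ of Luck; therefore $\lim_{h\to-\infty}\n_h=0$ has a unique solution, which \emph{defines} $\b_c(\l)=\b_c+b(\l)$ with $|b(\l)|\le C$. The corrections to $Z_h$ and $v_{j,h}$ are likewise \emph{summable} over the scales — this is exactly where the quadratic (rather than quartic) nature of the perturbation matters: there is no logarithmically divergent one-loop contribution, hence no anomalous dimension — so $Z_h\to Z_\infty(\l)$ and $v_{j,h}\to v_j+\l\a_j(\l)$; combining $Z_\infty(\l)$ with the dressing of the external energy vertices, which carries $\phi^{(j)}$ and stays quasi-periodic order by order, yields $Z_{j,\bx}(\l)=Z_j+\l\x_j(\l,2\pi\o_0x_0,2\pi\o_1x_1)$. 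The tree expansion converges for $|\l|<\l_0$ uniformly in $\L_i$ and in $\boldsymbol{\vartheta}$, so the $i\to\infty$ limit exists, giving (1); for $\b\ne\b_c(\l)$ the iteration stops at scale $h^*$ with $2^{h^*}\sim|\b-\b_c(\l)|$, and a contour deformation of the single-scale propagators gives the bound in (2) with correlation length $\sim|\b-\b_c(\l)|^{-1}$, the stretched exponent $1/2$ being the price of the small-divisor bookkeeping; letting $\b\to\b_c(\l)$, the leading contribution is the bubble of two fully dressed propagators, reproducing $Z_{j_1,\bx_1}Z_{j_2,\bx_2}g_+g_-$ with renormalized velocities, while the sum of all irrelevant terms is bounded by $C|\bx_1-\bx_2|^{-2-\th}$ with $\th=1/4$ as in \pref{bo}, giving (3).

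The main obstacle is to run the fermionic multiscale power counting and the KAM-type (Eliasson-style) small-divisor analysis \emph{simultaneously}: one must prove that the resonant clusters contribute only bounded, summable renormalizations of the finitely many couplings $\n_h,Z_h,v_{j,h}$ — so that the critical exponents are untouched and no logarithm appears — while the non-resonant contributions are summable because the Diophantine lower bounds offset the polynomial small-divisor growth against the exponential decay of $\hat\phi^{(j)}_{\bn}$, with all estimates uniform in the volume $\L_i$ and in the phases $\boldsymbol{\vartheta}$.
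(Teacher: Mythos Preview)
Your outline is the paper's strategy: Grassmann representation, integration of the massive modes, multiscale integration of the critical field with localization extracting $\nu_h$ and the velocity couplings, a fixed-point choice of the counterterm defining $\beta_c(\lambda)$, and summable beta functions for the marginal couplings (including the infinitely many $Z^{(j)}_{h,\bn}$ attached to the energy source, which converge and produce the quasi-periodic amplitude). Two points deserve sharpening. First, because $\mathcal{V}$ is \emph{quadratic} the truncated expectations are chain graphs (products of single-scale propagators), not general trees, so no Gram/Brydges--Battle--Federbush determinant bounds are needed; the combinatorics is elementary and the paper exploits this. Second, your description of the small-divisor control --- bounding a non-resonant propagator by $C|\sum_k\bn_k|^{\rho}$ and summing against $e^{-\eta|\bn_k|}$ --- would not survive the accumulation of many small divisors along a long chain. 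The actual mechanism is different: the compact-support constraint that both external momenta of a non-resonant cluster $T$ lie in the shell $|\cdot|\le\gamma^{h^{\rm ext}_T}$, combined with the Diophantine condition, forces $|\bn_T|\ge C_0\gamma^{-h^{\rm ext}_T/\tau}$; a reserved fraction of the vertex weight $\prod_v e^{-\eta|\bn_v|/8}$ then yields a factor $e^{-\zeta\tilde\gamma^{-h_T}}$ \emph{per non-resonant cluster}, which beats the propagator bound $\gamma^{-h_T}$ at every scale. It is this cluster-wise exponential gain in the scale (not a line-wise polynomial bound in $|\bn|$) that makes the renormalized series convergent and gives the summability $|\beta_{\#,h}|\le C\lambda^2\gamma^{h}$ of the beta functions.
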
 
\vskip.3cm
\begin{remark}
\\
\vspace{-0.4cm}
\begin{enumerate}
\item
The asymptotic behavior of the 2-point correlation \eqref{eq:SdelTeorema} at criticality
is similar to the one of the unperturbed case, with the main difference that the
amplitude is the product of two quasi-periodic functions 
$Z_{j_1,\bx_1}(\lambda)$ and $Z_{j_2,\bx_2}(\lambda)$. 
The velocities and the critical temperature are also modified. In contrast, the
exponents are universal and no logarithmic corrections are present; this provides a rigorous confirmation 
of the Harris-Luck criterion.
Outside the critical temperature a stretched exponential decay is found, but this is
just for technical reasons and exponential decay is expected. 
The analysis could  be easily extended to the $n$-point energy correlations.
\item The proof is based on the convergence of the series for the correlations,
showing a small-divisor problem similar to the one appearing in perturbation of
integrable Hamiltonian systems, see \textit{e.g.}\ \cite{Ga}. Convergence is shown
assuming
only a Diophantine condition on the frequencies, the smallness of the coupling and a fast decay property of the harmonics;
without such assumptions a different behavior is expected.
\item 
The result holds for any angle $\th_j$, including cases where inversion or translation invariance is broken
in both directions. 
This is a peculiar fact since in many similar models with small-divisor problems, extra conditions are usually required.
\end{enumerate}
\end{remark}

\subsection{Sketch of the proof}

The starting point of the analysis is the exact representation of the quasi-periodic Ising model as a Grassmann integral, which is an immediate consequence of the dimer representation, see \textit{e.g.}\ \cite{MW}, and the fact that Pfaffians can be expressed as Gaussian Grassmann integrals, see \textit{e.g.}\ \cite{MastropietroNonPerturbative}. The energy correlations can be written as the sum of terms of the form (the exact expressions are in Section \ref{sec:Grassmann})

\begin{equation}
\frac{\int P_\psi(d\psi) P_\xi(d\xi) e^{V} O}{\int P_\psi(d\psi) P_\xi(d\xi) e^{V}} \label{int}
\end{equation}
where $P_\psi(d\psi), P_\x(d\xi)$ are Grassmann Gaussian integrations,
$O$ is a quartic monomial in the Grassmann variables, 
and $V$ is a sum of monomials in $\psi, \xi$ and vanishes for $\l=0$.
The propagator (or covariance) of $P_\xi(d\xi)$ is $\hat{g}_\xi(\kk)$, given by
\begin{equation}\label{fo}
\hat{g}_\xi(\kk) := 
    \begin{pmatrix}
        -\ii t^{(1)} \sin k_1 + t^{(0)} \sin k_0 & \ii m_{\xi}(\bk) \\
        -\ii m_{\xi}(\bk) & -\ii t^{(1)} \sin k_1 - t^{(0)} \sin k_0
    \end{pmatrix}^{-1} \, ,
\end{equation}
with {\color{black}$t^{(j)}=\frac{1}{|\Lambda|} \sum_{\bx \in \Lambda} \tanh (\b J^{(j)}_\bx)$} and $m_\xi = m_\chi = O(1)$. From the explicit expression given below in \eqref{eq:MassaChi2}, $m_{\chi}(\kk) = m_{\chi}(0) + F(\kk)$ with $m_{\chi}(0) = O(1)$ and $F(\kk) = 0$ at $\kk = 0$, and bounded away from zero uniformly in $\beta$ in the other three poles of the diagonal elements of $\hat{g}_\xi(\kk)$. \textcolor{black}{One recognizes in \eqref{fo} the propagator of a lattice Dirac fermion with a mass $m_{\chi}(0)$ and Wilson term $F(\kk)$.}

The propagator $\hat{g}_\psi(\bk)$ of $P_\psi(d\psi)$ has a similar expression with a mass that can vanish as a function of temperature. The variables $\xi$, being associated with a bounded propagator (called non-critical variables for this reason), can be integrated out (see Section \ref{sec:Xi}), expressing the energy correlations as Grassmann integrals of the form
\begin{equation}
\frac{\int P_\psi(d\psi) e^{\tilde{V}} \tilde{O}}{\int P_\psi(d\psi) e^{\tilde{V}}} \label{ss}
\end{equation}
with $\tilde{V} = \frac{1}{| \Lambda_i |}\sum_\bn \sum_\kk \psi_{-\bk} \widehat{W}_\bn(\bk) \psi_{\bk - 2\pi \Omega \bn}$, where $\widehat{W}_\bn(\bk)$ is a matrix with elements exponentially decaying in $\bn$ and analytic in $\lambda$. Here, $\psi = (\psi_+, \psi_-)$, $\Omega = \begin{pmatrix} \omega_0 & 0 \\ 0 & \omega_1 \end{pmatrix}$, and $\tilde{O}$ is still quartic in $\psi$. This representation \textcolor{black}{is an immediate consequence of Wick's theorem}, allowing us to represent $W_\bn(\bk)$ as a sum of chain graphs, that is, products of propagators of the form $\hat{g}_\xi(\bk) \hat{g}_\xi(\bk - 2\pi \Omega \bn_1) \hat{g}_\xi(\bk - 2\pi \Omega \bn_2) \cdots$. Convergence follows from the exponential decay of $\hat{\phi}_\bn$ and the boundedness of $\hat{g}_\xi$.

One could perform the integration in $\psi$ (critical variables) in a similar way, obtaining an expansion for the correlations still expressed in terms of graphs. In this case, however, the propagator of the $\psi$-variables is unbounded, and at criticality there are graphs that are naively bounded by $O(n!^\alpha)$ if $n$ is the order and $\alpha$ a constant, due to the presence of small divisors. To achieve convergence, one needs to improve the bounds, showing that such factorials are indeed not present.

To show this, a multiscale analysis is required, as described in Section \ref{sec:Critical}. One decomposes the propagator as a sum of propagators supported at different momentum shells with scale $h$, that is $|\bk| \sim \gamma^{h}$, $\gamma>1$, with $h=1,0,-1,-2,\dots$. In other words, {\color{black}$\hat{g}_\psi(\bk)=\sum_{h=-\infty}^1 \hat{g}^{(h)}(\bk)$} with $\hat{g}^{(h)}(\bk)=O(\gamma^{-h})$. Integrating the higher momentum scales, we obtain
\begin{equation}
\frac{\int P_\psi^{(\leq h)}(d\psi^{(\le h)} ) e^{V^{(h)} (\psi^{(\le h)})} \tilde{O}^{(h)}}{\int P_\psi^{(\leq h)}(d\psi^{(\le h)} ) e^{V^{(h)} (\psi^{(\le h)})}}
\end{equation}
with $\textcolor{black}{P^{(\leq h)}_\psi(d\psi^{(\le h)} )}$ a Gaussian Grassmann integration corresponding to scales $\leq h$ and again
\[
V^{(h)} = \frac{1}{|\Lambda_i|} \sum_\bn \sum_\kk \psi_{-\bk}^{(\leq h)} \widehat{W}^{(h)}_\bn(\bk) \psi^{(\le h)}_{\bk-2\pi\Omega \bn}
\]
with $\widehat{W}^{(h)}_\bn$ depending on the scale $h$. Using that, for Gaussian Grassmann integrals, $P_\psi^{(\leq h)}(d\psi^{(\le h)} ) =P_\psi^{(\leq h-1)}(d\psi^{(\le h-1)} )P_\psi^{(h)}(d\psi^{(h)} )$, we can integrate the $\psi^{(h)}$ variable iteratively; this again produces chain graphs as a product of propagators of arbitrarily large size $O(\gamma^{-h})$ times products of the $\widehat{W}^{(h)}_\bn$. In Renormalization Group terminology, the terms in $V^{(h)}$ are relevant perturbations that could alter the critical behavior.

To show that this is not the case, one needs to distinguish between the case $\bn=0$, which are called resonant terms or resonances, and the non-resonant case $\bn\neq0$. In the first case, one gets an accumulation of identical small divisors in the perturbative expansion, ending with a non-summable behavior. Such a phenomenon is avoided by modifying the expansion, introducing a counterterm to account for the modification of the critical temperature, and by modifying the velocities at each iteration step, \textcolor{black}{see Section \ref{subsec:PolentaECarciofini}}. That is, the propagator of the $\psi^{(\le h)}$ close to $\kk=0$ acquires the form
$
\sim \chi_h(\kk) \begin{pmatrix}
    -\ii v_{1,h} k_1 + v_{0,h} k_0 & -\ii m \\
    \ii m & -\ii v_{1,h}^* k_1 - v_{0,h}^* k_0
\end{pmatrix}^{-1}
$
where $\chi_h(\kk) \neq 0$ for $|\kk|\le \gamma^{h}$. Note that reabsorbing certain terms in the propagator is possible only if the $\widehat{W}^{(h)}_0$ have a suitable form that does not change the qualitative structure of the propagator; this is indeed what happens. When the angles $\theta_{j}$ are generic, the breaking of symmetries does not allow us to conclude the reality of velocities (which turn out to be real in the layered case).

One has then to deal with the terms in $V^{(h)}$ with $\bn\neq0$; in that case, the repeated small divisors are not identical and they cannot be reabsorbed into the propagator. If the disorder was periodic, that is, $\Omega$ is rational so that $2\pi \Omega \bn$ mod $2\pi$ is bounded, this would mean that there is a scale $\bar{h}$ so that such terms are not present for $h\leq \bar{h}$; hence, they could be easily bounded. In contrast, if $\Omega$ is irrational, that is in the quasi-periodic case, such terms appear at any scale $h$, and the propagators associated with fields multiplying $\widehat{W}^{(h)}_\bn$ are as large as $O(\gamma^{-h})$. One needs therefore, to achieve convergence, to prove that $\widehat{W}^{(h)}_\bn(\bk)$ has a fast decay in $h$ compensating for the small divisor $\gamma^{-h}$. This follows from the Diophantine condition, as it implies that if $\bk$ and $\bk-2 \pi \Omega \bn$ are $O(\gamma^{h})$, then $\bn$ is large, that is $|\bn| \geq \gamma^{- \frac{h}{\tau} }$ for a suitable constant $\tau$. The decay in $\bn$ of $\widehat{W}^{(h)}_\bn(\bk)$ can therefore be converted into a decay in $\gamma^{-h}$ compensating for the $\gamma^{-h}$ of the propagator. 

However, the gain must be obtained at every iteration step and one has to check that no non-summable combinatorial factors are present; this is done \textcolor{black}{using the cluster structure of graphs (see Section \ref{sec:Critical} and in particular Lemma \ref{prop:EstimateResonantClusters} where the convergence of the series expansion is proved)}. The series obtained is in $\lambda$ and in the running coupling constants (corresponding to the renormalizations of the temperature and of the velocities); one has to show that it is possible to fine-tune a parameter, corresponding to the shift of the critical temperature, to prove that they remain small at any iteration, as proved in Section \ref{sec:Critical}. 

Finally, in Section \ref{sec:5}, the full expansion for the energy correlations is considered. In this case, after the integration of the fields of scales $1,0,-1,-2,\dots,h$, one gets source terms of the form $
\frac{1}{|\Lambda_i|^2} \sum_{\bn,j, \bk,{\bf p}} Z_{h,\bn}^{(j)} \psi_{-\bk}^{(\leq h)} \sigma_2 \psi^{(\leq h)}_{\bk+{\bf p}-2\pi\Omega \bn}\widehat{A}_\pp^{(j)}
$
where $Z_{h,\bn}^{(j)}$ are running coupling constants associated with the source terms in the generating function for correlations and $\widehat{A}_\pp^{(j)}$ is the Fourier coefficient of an external field (see \eqref{en} below). In this case, there are running coupling constants corresponding to $\bn\neq0$ as there is no gain due to the Diophantine condition. They have a finite limit as $h\to-\infty$, and this implies that the critical exponents are the same as in the unperturbed case, and they produce the quasi-periodic amplitude of the energy correlations.

\subsection{Comparison with previous results}
The paper uses a fermionic Renormalization Group approach to the Grassmann representation of the Ising model, previously used in the case of non nearest neighbor perturbations, see \cite{M1,M2,M3}, or for coupled Ising and related models like Six-vertex, Ashkin-Teller or dimer models \cite{M1, BFM1,BFM2,GMT}. In such cases, the starting point is a Grassmann integral similar to \eqref{int} but with $V$ a quartic or higher order translation invariant interaction. 

In the case of the quasi-periodic Ising model, the situation is different: the interaction in the Grassmann integral is quadratic but the modulation of the potential breaks translation invariance and it requires the use of KAM methods to solve the small-divisor problem.

The relation with KAM appears from \eqref{int}; as the exponent of the integrand is quadratic in the Grassmann variables, the energy correlations could, in principle, be deduced by a suitable lattice Dirac equation in a quasi-periodic potential, essentially given by
\[
\sigma_2 (\psi_{\mathbf{x}+\mathbf{e}_0}-\psi_\mathbf{x}+\lambda \phi^{(0)}_\mathbf{x}\psi_\mathbf{x})+\sigma_1(\psi_{\mathbf{x}+\mathbf{e}_1}-\psi_\mathbf{x}+\lambda \phi^{(1)}_\mathbf{x}\psi_\mathbf{x})+\ii m  \sigma_3 \psi_\mathbf{x}=E\sigma_1 \psi_\mathbf{x} \, ,
\]
with $\sigma_1, \sigma_2, \sigma_3$ being the Pauli matrices. Indeed, such an equation has not been studied, but an extensive literature has been instead devoted to the related problem of the lattice Schr\"odinger equation with a quasi-periodic potential (which is strictly related to a KAM problem), like
\[
\psi_{x+1}+\psi_{x-1}+\lambda \phi_x \psi_x=E \psi_x\label{allp}
\]
where $x \in \mathbb{Z}$ and $\phi_x=\bar{\phi}(2\pi \omega x+\theta)$ with $\bar{\phi}$ $2\pi$-periodic. For small $\lambda$, the eigenvalues and eigenfunctions of the above equation were studied in \cite{P1} where two Diophantine conditions are assumed, one over the frequency and the other over the energy, using KAM methods. In particular, it was required that $|2\pi \omega n|_T \geq C |n|^{-\tau}$ and $|2\pi \omega n \pm 2 \rho|_T \geq C |n|^{-\tau}$, with $E=\cos \rho$ (first and second Melnikov condition). In \cite{P2} instead, the case $\rho= n \pi \omega$ was studied, corresponding to the gaps in the spectrum. Several attempts were made to improve such conditions, culminating in \cite{P3}, where the second Melnikov condition was removed, and in \cite{P4} where $\omega$ was assumed to be any irrational. In the bidimensional case, more complicated Diophantine assumptions are required \cite{B1} and less detailed knowledge is available.

An important related issue is the computation of the correlations of a system of several particles (fermions in particular) in a quasi-periodic potential, with a single-body interaction described by \eqref{allp}. In the absence of a many-body interaction, the knowledge of the single particle properties of \eqref{allp} could be sufficient to determine the properties of the ground state correlations. If $\phi_x$ in \eqref{allp} is random, this was indeed done in \cite{AG}, and with a periodic potential (in the continuum) it was done in \cite{Bom}, where indeed the asymptotic properties of correlations were determined only by a very precise knowledge of the singularities of the eigenvalues (branch points) in the complex plane.

In the quasi-periodic case, a derivation of the asymptotic behavior of fermionic correlations directly from the Schrödinger equation \eqref{allp} has never been attempted. However, such asymptotic decay has been derived by writing the fermionic correlations as Grassmann integrals similar to \eqref{int}, with interacting measure $P(d\psi) e^V$, propagator $(\ii k_0+\cos (k_1+n \omega)-E)^{-1}$ and $V$ sum of monomials $\psi^+_{k_0,k}\psi^-_{k_0,k+2\pi n\omega}$. The long-distance behavior of the non-interacting ground state correlations in $d=1$ has been determined using a multiscale analysis in \cite{BGM} via fermionic Renormalization Group methods, inspired by the ones used in KAM Lindstedt series \cite{G,GM}. The result was valid for $E=\cos \bar{m}\pi \omega$, $\bar{m}\in \mathbb{N}$, that is assuming a gap condition like the one in \cite{P2}; the ground state correlations decay exponentially both in space and Euclidean time. Note that there are infinitely many gaps with size $O(\lambda \hat{\phi}_{\bar{m}})$, the spectrum being a Cantor set.

Later on, the RG methods were extended to include the presence of a weak many-body interaction (and weak quasi-periodic potential): it was shown in \cite{M11} that the gaps are not closed by the interaction (if the corresponding harmonic is present in the potential), but are strongly modified via the presence of a critical interaction-dependent exponent; the gaps become $O((\lambda \hat{\phi}_{\bar{m}})^{1+\eta})$, $\eta=a U+O(U^2)$, where $U$ is the coupling of the many-body interaction and $\eta$ is a critical exponent. A similar phenomenon was also shown to happen in the interacting Aubry-André model where only one harmonic is present in the initial potential \cite{M11cc} and in the interacting Hofstadter model \cite{M11a} for the Hall effect. In higher dimensions, a class of fermionic systems in $d=2,3$ known as Weyl semimetals have been considered \cite{M11f} in presence of a quasi-periodic disorder and interaction in the weak coupling regime; by assuming a first and second Melnikov condition restricting densities, it was shown the stability of the Weyl phase, that is the absence of localization.

While the above-mentioned results regard the case of fermions on a lattice with a weak quasi-periodic potential and a many-body interaction, the case of strong potential has a different behavior, manifesting the phenomenon of Anderson localization. In this case, one considers the kinetic energy as a perturbation of the quasi-periodic potential, and not the opposite as in the previous case. In \cite{GM1}, localization without many-body interaction was shown, and later the proof of $T=0$ many-body localization of interacting fermions \cite{M11b,M11c,M11d,M11e} was established. It should be remarked that at the moment, such RG methods are the only ones allowing us to take into account rigorously the interaction in the thermodynamic limit.

At the mathematical level, the Renormalization Group methods used to analyze the above fermion systems in the weakly disordered regime are related to the ones used here for the quasi-periodic Ising model, but there are important differences. First of all, in fermionic systems one has to restrict the values of the chemical potential either to ensure the validity of a gap condition, as in \cite{M11,M11cc,M11a}, or a second Melnikov condition \cite{M11f}. There is no analogue of chemical potential in the Ising model, but we can solve the small-divisor problem without imposing any condition. In addition, in fermionic models considered so far, the 2-point fermionic correlation was studied, while here the energy correlations are considered, quartic in the fermions, a fact producing new (infinitely many) marginal operators and the quasi-periodic modulation of the amplitude. Moreover, the quasi-periodic disorder is bidimensional in space and Euclidean time and all possible choices of angles are considered, while previously the only layered or bidimensional cases with angles chosen equal to zero were treated \cite{M11e}. The general form of the disorder considered here breaks the inversion symmetries, an important property to prove the reality of the velocities.

In addition to such technical improvements, it should be also remarked that the application of direct methods, previously developed for apparently unrelated problems like KAM series or non-relativistic fermions, to the quasi-periodic Ising model is a major novelty of this paper and it produces the first rigorous proof of the Harris-Luck criterion, and a natural starting point for the inclusion of next to nearest neighbor interactions.

\section{Grassmann representation}
\label{sec:Grassmann}
\noindent
From the dimer representation of the Ising model, see \textit{e.g.}\ \cite{MW2}, one can express the energy correlations, which are expressed in terms of four Pfaffians, using Grassmann integrals; see \textit{e.g.}\ \cite{MastropietroNonPerturbative}. The energy correlations can therefore be written as
\begin{equation}
S(\mathbf{x}_1,j_1;\mathbf{x}_2,j_2) = \left. \frac{\partial^2}{\partial A_{\mathbf{x}_1,j_1} \partial A_{\mathbf{x}_2,j_2}} \log Z(A) \right|_{A=0}, \label{en}
\end{equation}
with
\begin{equation}
Z(A) = \frac{1}{2} \sum_{\boldsymbol{\alpha} \in \{\pm\}^2} \tau_{\boldsymbol{\alpha}} Z_{\boldsymbol{\alpha}}(A),
\end{equation}
where $\tau_{+,-} = \tau_{-,+} = \tau_{-,-} = -\tau_{+,+} = 1$ and
\begin{equation}
Z_{\boldsymbol{\alpha}}(A) = \left[ \prod_{\mathbf{x} \in \Lambda_i} \prod_{j=0}^1 \cosh(\beta J^{(j)}_{\mathbf{x}} + A_{\mathbf{x},j}) \right] \int \mathcal{D}^{\Lambda_i} \Phi\, e^{S_{\Lambda_i}(\Phi,A)},
\end{equation}
with
\begin{equation} \label{eq:StartingAction}
\begin{split}
S_{\Lambda_i}(\Phi, A) &:= \sum_{\mathbf{x} \in \Lambda_i} \left[ \tanh(\beta J^{(1)}_{\mathbf{x}} + A_{\mathbf{x},1}) \overline{H}_{\mathbf{x}} H_{\mathbf{x} + \mathbf{e}_1} + \tanh(\beta J^{(0)}_{\mathbf{x}} + A_{\mathbf{x},0}) \overline{V}_{\mathbf{x}} V_{\mathbf{x} + \mathbf{e}_0} \right] \\
&\quad + \sum_{\mathbf{x} \in \Lambda_i} \left[ \overline{H}_{\mathbf{x}} H_{\mathbf{x}} + \overline{V}_{\mathbf{x}} V_{\mathbf{x}} + \overline{V}_{\mathbf{x}} \overline{H}_{\mathbf{x}} + V_{\mathbf{x}} \overline{H}_{\mathbf{x}} + H_{\mathbf{x}} \overline{V}_{\mathbf{x}} + V_{\mathbf{x}} H_{\mathbf{x}} \right].
\end{split}
\end{equation}
Here, $\overline{H}_{\mathbf{x}}, H_{\mathbf{x}}, \overline{V}_{\mathbf{x}}, V_{\mathbf{x}}$ are independent Grassmann variables, four for each lattice site, and $E_{\mathbf{x},1} := \overline{H}_{\mathbf{x}} H_{\mathbf{x} + \mathbf{e}_1}$, while $E_{\mathbf{x},0} := \overline{V}_{\mathbf{x}} V_{\mathbf{x} + \mathbf{e}_0}$. Moreover, $\Phi := \{\overline{H}_{\mathbf{x}}, H_{\mathbf{x}}, \overline{V}_{\mathbf{x}}, V_{\mathbf{x}}\}_{\mathbf{x} \in \Lambda_i}$ denotes the collection of all these Grassmann variables, and $\mathcal{D}^{\Lambda_i} \Phi$ is a shorthand for $\prod_{\mathbf{x} \in \Lambda_i} d\overline{H}_{\mathbf{x}} dH_{\mathbf{x}} d\overline{V}_{\mathbf{x}} dV_{\mathbf{x}}$. The Grassmann integration is defined so that, for all $\mathbf{x} \in \Lambda_i$,
\begin{equation}
\int d\overline{H}_{\mathbf{x}} dH_{\mathbf{x}} d\overline{V}_{\mathbf{x}} dV_{\mathbf{x}} = 0 \, , \quad \quad 
\int d\overline{H}_{\mathbf{x}} dH_{\mathbf{x}} d\overline{V}_{\mathbf{x}} dV_{\mathbf{x}} (V_{\mathbf{x}} \overline{V}_{\mathbf{x}} H_{\mathbf{x}} \overline{H}_{\mathbf{x}}) = 1 \, .
\end{equation}
The label $\boldsymbol{\alpha} = (\alpha_1, \alpha_2)$, with $\alpha_1, \alpha_2 \in \{\pm\}$, refers to the boundary conditions, which are periodic or antiperiodic in the horizontal (resp. vertical) direction. Letting $Z = \sum_{\boldsymbol{\alpha} \in \{\pm\}^2} \tau_{\boldsymbol{\alpha}} Z_{\boldsymbol{\alpha}}$ with $Z_{\boldsymbol{\alpha}} = Z_{\boldsymbol{\alpha}}(0)$, the truncated energy correlation \eqref{en} can be written as
\begin{equation}
S(\mathbf{x}_1,j_1;\mathbf{x}_2,j_2) = \sum_{\boldsymbol{\alpha} \in \{\pm\}^2} \frac{\tau_{\boldsymbol{\alpha}} Z_{\boldsymbol{\alpha}}}{2Z} \left\langle E_{\mathbf{x}_1,j_1}; E_{\mathbf{x}_2,j_2} \right\rangle^T_{\boldsymbol{\alpha},i}, \label{LeEquazioniTutteUguali}
\end{equation}
where $\left\langle \cdot \right\rangle_{\boldsymbol{\alpha},i}$ is the average with respect to the Grassmann ``measure'' $\mathcal{D}^{\Lambda_i} \Phi e^{S_{\Lambda_i}(\Phi, \mathbf{0})} / Z_{\boldsymbol{\alpha}}$ with $\boldsymbol{\alpha}$ boundary conditions.

Let us consider first the case $A=0$.

We perform the (well-known) change of variables 
\begin{equation}\label{eq:CambioDiVarHPsi}
\begin{array}{lcl}
\lis H_\xx+\ii H_\xx=e^{\ii \pi/4}\psi_{+,\xx}-e^{\ii \pi/4}\chi_{+,\xx} \, ,&\quad\quad& \lis H_\xx-\ii H_\xx=e^{-\ii \pi/4}\psi_{-,\xx}-e^{-\ii \pi/4}\chi_{-,\xx} \, ,\\
\lis V_\xx+\ii V_\xx=\psi_{+,\xx}+\chi_{+,\xx} \, ,& &\lis V_\xx-\ii V_\xx=\psi_{-,\xx}-\chi_{-,\xx} \, .\\
\end{array}	
\end{equation}
We set $\X_{\boldsymbol\a}=\int \mathcal{D}^{\Lambda_i} \Phi\, e^{S_{\Lambda_i}(\Phi,0)}$, and, for $j=0,1$,
$ t^{(j)} =\frac{1}{|\Lambda_i|}\sum_{\bx \in \Lambda_i} \tanh( \beta J^{(j)}_\bx)$, we define $V_\bx^{(j)}$ as
\begin{equation}\label{eq:defVx}
	\begin{split}
	t^{(j)}_{\bx}&=\tanh( \beta J^{(j)}_{\xx})=\tanh\left( \beta J^{(j)}\Big(1+\lambda \phi^{(j)}(2\pi \omega_{0,i } x_0+\theta_{j,0},2\pi \omega_{1,i } x_1+\theta_{j,1})\Big)\right)\\&\equiv t^{(j)} +V^{(j)}_\bx 
	\end{split}
\end{equation}
so that $\sum_{\bx \in \Lambda_i}V^{(j)}_\bx=0$. We can write
\begin{equation}
\X_{\boldsymbol\a}=\int \prod_{\xx\in\L_i} d\psi_{+,\xx} d\psi_{-,\xx} d\chi_{+,\xx} d\bar\chi_{-,\xx} e^{S^{(\chi)}(\chi)+S^{(\psi)}(\psi)+Q(\psi,\chi)}
\end{equation}
where{\color{black}, denoting with $\cdot$ the Euclidean scalar product,}
\begin{equation}\label{eq:schi}
	\begin{split}
	S^{(\chi)}(\chi) \;:=&\;-\frac{1}{4} \sum_{\bx \in \Lambda_i} t^{(1)}_{\bx} \begin{pmatrix}
		\chi_{+,\bx}\\
		\chi_{-,\bx}
	\end{pmatrix}
	\cdot\begin{pmatrix}
		-1 & +\ii \\
		-\ii & -1 
	\end{pmatrix}
	\begin{pmatrix}
		\chi_{+,\bx+\mathbf{e}_1}\\
		 \chi_{-,\bx+\mathbf{e}_1}		
	\end{pmatrix} + \\
	&-\frac{1}{4 } \sum_{\bx \in \Lambda_i} t^{(0)}_{\bx} \begin{pmatrix}
		\chi_{+,\bx}\\
		\chi_{-,\bx}
	\end{pmatrix}
	\cdot\begin{pmatrix}
		-\ii & +\ii \\
		-\ii & +\ii 
	\end{pmatrix}
	\begin{pmatrix}
		\chi_{+,\bx+\mathbf{e}_0}\\
		 \chi_{-,\bx+\mathbf{e}_0}		
	\end{pmatrix} + \\
	&-\frac{1}{4 } \sum_{\bx \in \Lambda_i}  
	2\ii(\sqrt{2}+1) \big(\chi_{+,\bx} \chi_{-,\bx}-\chi_{-,\bx} \chi_{+,\bx} \big)\, .
	\end{split}
\end{equation}
\begin{equation}
	\begin{split}
	S^{(\psi)}(\psi) \;:=&\;-\frac{1}{4 } \sum_{\bx \in \Lambda_i} t^{(1)}_{\bx} \begin{pmatrix}
		\psi_{+,\bx}\\
		\psi_{-,\bx}
	\end{pmatrix}
	\cdot\begin{pmatrix}
		-1 & +\ii \\
		-\ii & -1 
	\end{pmatrix}
	\begin{pmatrix}
		\psi_{+,\bx+\mathbf{e}_1}\\
		 \psi_{-,\bx+\mathbf{e}_1}		
	\end{pmatrix} + \\
	&-\frac{1}{4 } \sum_{\bx \in \Lambda_i} t^{(0)}_{\bx} \begin{pmatrix}
		\psi_{+,\bx}\\
		\psi_{-,\bx}
	\end{pmatrix}
	\cdot\begin{pmatrix}
		-\ii & +\ii \\
		-\ii & +\ii 
	\end{pmatrix}
	\begin{pmatrix}
		\psi_{+,\bx+\mathbf{e}_0}\\
		\psi_{-,\bx+\mathbf{e}_0}		
	\end{pmatrix} +\\
	&-\frac{1}{4} \sum_{\bx \in \Lambda_i}  \big[-2\ii(\sqrt{2}-1) \big] \big(\psi_{+,\bx} \psi_{-,\bx}-\psi_{-,\bx} \psi_{+,\bx} \big)\, .
	\end{split}
\end{equation}
\begin{equation}
	\begin{split}
	Q(\psi,\chi) \;:=\;& \, {1\over 4} \sum_{\bx \in \Lambda_i} t_{\bx}^{(1)} \begin{pmatrix}
		\psi_{+,\bx} \\
		\psi_{-,\bx}
	\end{pmatrix}
	\cdot
	\begin{pmatrix}
		-1 & \ii \\
		-\ii & -1
	\end{pmatrix}
	\begin{pmatrix}
		\chi_{+,\bx+\mathbf{e}_1} \\
		\chi_{-,\bx+\mathbf{e}_1}
	\end{pmatrix} + \\
	& \frac{1}{4} \sum_{\bx \in \Lambda_i} t_{\bx}^{(0)} \begin{pmatrix}
		\psi_{+,\bx} \\
		\psi_{-,\bx}
	\end{pmatrix}
	\cdot
	\begin{pmatrix}
		\ii & -\ii \\
		\ii & -	\ii
	\end{pmatrix}
	\begin{pmatrix}
		\chi_{+,\bx+\mathbf{e}_0} \\
		\chi_{-,\bx+\mathbf{e}_0}
	\end{pmatrix}+(\psi \leftrightarrow \chi) \, .
	\end{split}
\end{equation}
Note that, by \eqref{eq:defVx}, $V^{(j)}_{\xx}$ is a $2\pi$-periodic function in $2\pi\Omega\bx + \boldsymbol{\vartheta}_j$ with $\Omega = \begin{pmatrix} \omega_{0,i} & 0 \\ 0 & \omega_{1,i} \end{pmatrix}$, and with zero mean so that we can write
\begin{equation}
V^{(j)}_{\xx} = \sum_{\bn} \widehat{V}^{(j)}_\bn e^{\ii \bn \cdot \boldsymbol{\vartheta}_j} e^{ \ii 2 \pi \Omega \bn \cdot \bx}, \quad \text{with} \quad \widehat{V}_{\bn}^{(j)} := \frac{1}{|\Lambda_i|} \sum_{\bx \in \Lambda_i} V_\bx^{(j)} e^{- \ii \bn \cdot (2 \pi \Omega \bx + \boldsymbol{\vartheta}_j)},
\end{equation}
where $V^{(j)}_{\xx}$ is defined in \eqref{eq:defVx}, $\bn$ takes values as in \eqref{eq:UnoPuntoTre},
\begin{equation}
|\widehat{V}^{(j)}_\bn| \leq C |\lambda| e^{-\h |\bn|} \label{app}
\end{equation}
with $C$ and $\eta$ independent of $i$, and $\widehat{V}^{(j)}_\bn = (\widehat{V}^{(j)}_{-\bn})^*$, $\widehat{V}^{(j)}_0 = 0$; these properties follow from \eqref{eq:UnoPuntoQuattro}, \eqref{eq:defVx} and by analyticity of $V_\bx^{(j)}$ as a function of the coordinates.

Denoting by $\zeta_\pm=\psi_\pm,\chi_\pm$,
\begin{equation}
\textcolor{black}{\zeta_{\pm,\bx}} \;:=\; \frac{1}{|\Lambda_i|} \sum_{\bk \in \mathcal{D}_\balpha} \textcolor{black}{\widehat{\zeta}_{\pm,\bk}} e^{\ii \bk\cdot\bx} \, ,
\end{equation}
with
\begin{equation}\label{eq:DuePuntoQuindici}
	\begin{split}
	\mathcal{D}_\balpha \;=\; \left\{ \bk=(k_0,k_1) \in \mathbb{R}^2 \, \Bigg| \begin{array}{c}\, k_j=\frac{\pi}{L_j}(2 \kappa_j+1-\alpha_j1) \\ \kappa_j\in\big\{-\big\lfloor \frac{L_j+1}{2} \big\rfloor,\dots,0,1,\dots,\big\lfloor \frac{L_j}{2} \big\rfloor\big\} \end{array}\right\} \, .
	\end{split}
\end{equation}
Note that 
\begin{equation}
	\begin{split}
		\sum_{\bx \in \Lambda_i} &V_\bx^{(1)} \begin{pmatrix}
			\widehat{\chi}_{+,\bx} \\ \widehat{\chi}_{-,\bx} 
		\end{pmatrix} \cdot \begin{pmatrix}
			-1 & \ii \\
			-\ii & -1
		\end{pmatrix} \begin{pmatrix}
			\widehat{\chi}_{+,\bx+\mathbf{e}_1} \\ \widehat{\chi}_{-,\bx+\mathbf{e}_1}
		\end{pmatrix} \\
	&=\frac{1}{2 |\Lambda_i|}\sum_{\substack{\bk \in \mathcal{D}_\balpha \\ \bn \in \mathbb{Z}^2}}  \widehat{V}_\bn^{{(1)}}e^{\ii \bn \cdot \btheta_1} \begin{pmatrix}
			\widehat{\chi}_{+,-\bk} \\
			\widehat{\chi}_{-,-\bk}
		\end{pmatrix}
		\cdot
		e^{\ii (k_1-2 \pi \omega_1 n_1)}
		\begin{pmatrix}
			-1 & \ii \\
			-\ii & -1
		\end{pmatrix}
		\begin{pmatrix}
			 \widehat{\chi}_{+,\bk- 2 \pi \Omega \bn} \\
			 \widehat{\chi}_{-,\bk- 2 \pi \Omega \bn}
		\end{pmatrix} \\
		&\quad +\frac{1}{2 |\Lambda_i|} \sum_{\substack{\bk \in \mathcal{D}_\balpha \\\bn \in \mathbb{Z}^2}}  \widehat{V}_\bn^{{(1)}}e^{\ii \bn \cdot \btheta_1} \begin{pmatrix}
			\widehat{\chi}_{+,\bk- 2 \pi \Omega \bn} \\
			\widehat{\chi}_{-,\bk- 2 \pi \Omega \bn}
		\end{pmatrix}
		\cdot
		e^{-\ii k_1}
		\begin{pmatrix}
			-1 & \ii \\
			-\ii & -1
		\end{pmatrix}
		\begin{pmatrix}
			 \widehat{\chi}_{+,-\bk} \\
			 \widehat{\chi}_{-,-\bk}
		\end{pmatrix} \\
		&=\frac{1}{2 |\Lambda_i|}\sum_{\substack{\bk \in \mathcal{D}_\balpha \\ \bn \in \mathbb{Z}^2}}  \widehat{V}_\bn^{{(1)}}e^{\ii \bn \cdot \btheta_1} \begin{pmatrix}
			\widehat{\chi}_{+,-\bk} \\
			\widehat{\chi}_{-,-\bk}
		\end{pmatrix}
		\cdot
		\left[
		e^{\ii (k_1-2 \pi \omega_1 n_1)}
		\begin{pmatrix}
			-1 & \ii \\
			-\ii & -1
		\end{pmatrix} -e^{-\ii k_1} \begin{pmatrix}
			-1 & -\ii \\
			\ii & -1
		\end{pmatrix}\right]
		\begin{pmatrix}
			 \widehat{\chi}_{+,\bk- 2 \pi \Omega \bn} \\
			 \widehat{\chi}_{-,\bk- 2 \pi \Omega \bn}
		\end{pmatrix} \\
		&=\frac{1}{|\Lambda_i|} \sum_{\substack{\bk \in \mathcal{D}_\balpha \\ \bn \in \mathbb{Z}^2}} \widehat{V}_{\bn}^{{(1)}} e^{-{ \pi \ii \omega_1 n_1}}e^{\ii \bn \cdot \btheta_1} \begin{pmatrix}
			\widehat{\chi}_{+,-\bk} \\
			\widehat{\chi}_{-,-\bk}
		\end{pmatrix}
		\cdot
		\begin{pmatrix}
			- \ii \sin\big({\textstyle k_1-\pi \omega_1 n_1}\big) & \ii \cos\big({\textstyle k_1-\pi \omega_1 n_1}\big) \\
			-\ii \cos \big({\textstyle k_1-\pi \omega_1 n_1}\big) & -\ii \sin \big({\textstyle k_1-\pi \omega_1 n_1}\big)
		\end{pmatrix}
		\begin{pmatrix}
			 \widehat{\chi}_{+,\bk- 2 \pi \Omega \bn} \\
			 \widehat{\chi}_{-,\bk- 2 \pi \Omega \bn}
		\end{pmatrix} \, .
	\end{split}
\end{equation}
and similar expressions hold for the other quadratic expressions. By setting
\begin{equation}\label{eq:2.17}
	\widehat{A}^{(j)}_\bn =\; \widehat{V}^{(j)}_\bn e^{-{ \ii \pi \omega_j n_j}} e^{ \ii \bn \cdot \boldsymbol{\vartheta}_j} \, ,
\end{equation}
we finally obtain
\begin{equation}
\X_{\boldsymbol\a}=\int \prod_{\kk \in \mathcal{D}_\balpha} \ud\hat\psi_{+,\kk}
\ud\hat\psi_{-,\kk} \ud\hat \chi_{+,\kk} \ud\hat\chi_{-,\kk} e^{S^{(\chi)}_{\mathrm{free}}(\chi)+S^{(\psi)}_{\mathrm{free}}(\psi)+Q_{\mathrm{free}}(\psi,\chi)+S^{(\chi)}_{\mathrm{int}}(\chi)+S^{(\psi)}_{\mathrm{int}}(\psi)+Q_{\mathrm{int}}(\psi,\chi)}
\end{equation}
where, if $\widehat{\bpsi}_{\bk}=(\widehat{\psi}_{\bk,+},\widehat{\psi}_{\bk,-})$ and $\widehat{\bchi}_{\bk}=(\widehat{\chi}_{\bk,+},\widehat{\chi}_{\bk,-})$.
\begin{equation}
	S^{(\textcolor{black}{\zeta})}_{\mathrm{free}}(\textcolor{black}{\zeta}) \;=\; -\frac{1}{4 |\Lambda_i|} \sum_{\bk \in \mathcal{D}_{\balpha}} \textcolor{black}{\widehat{\bfzeta}}_{-\bk} \cdot C_\zeta (\bk) \textcolor{black}{\widehat{\bfzeta}_{\bk}} \, ,
\end{equation}
\begin{equation}
		C_{\textcolor{black}{\zeta}}(\bk) \;:=\;
		\begin{pmatrix}
			-\ii t^{(1)} \sin k_1- t^{(0)} \sin k_0 & -\ii m_{\textcolor{black}{\zeta}}(\bk) \\
			\ii m_{\textcolor{black}{\zeta}}(\bk) & -\ii t^{(1)} \sin k_1 +t^{(0)} \sin k_0
		\end{pmatrix} \, ,
\end{equation}
\begin{equation}\label{eq:MassaChi2}
	m_{\chi}(\bk) \;:=\; t^{(1)} \cos k_1 +t^{(0)} \cos k_0 + 2(\sqrt{2}+1) \, ,
\end{equation}
\begin{equation}\label{eq:MassaPsiIniziale}
	m^0_{\psi}(\bk) \;:=\; t^{(1)}\cos k_1 +t^{(0)} \cos k_0 - 2(\sqrt{2}-1) \, .
\end{equation}
and
\begin{equation}
	Q_{\mathrm{free}}(\psi,\chi) \;=\; \frac{1}{4 |\Lambda_i|} \sum_{\bk \in \mathcal{D}_{\balpha}}  \big[\widehat{\bpsi}_{-\bk} \cdot Q(\bk) \widehat{\bchi}_{\bk}+\widehat{\bchi}_{-\bk}\cdot Q(\bk) \widehat{\bpsi}_{\bk} \big] \, ,
\end{equation}
with
\begin{equation}
Q(\bk)\;:=\;
	\begin{pmatrix}
		\ii t^{(1)}\sin k_1 - t^{(0)}\sin k_0 & \ii \big(t^{(1)}\cos k_1- t^{(0)}\cos k_0 \big) \\
		-\ii \big(t^{(1)} \cos k_1 - t^{(0)} \cos k_0 \big) & \ii t^{(1)}\sin k_1 + t^{(0)} \sin k_0
	\end{pmatrix} \, .
\end{equation}

Moreover,
\begin{equation}\label{eq:InteractingS1}
		S^{(\textcolor{black}{\zeta})}_{\mathrm{int}} \;=\; -\frac{1}{4 |\Lambda_i|} \sum_{\substack{\bk \in \mathcal{D}_\balpha \\ \bn \in \mathbb{Z}^2}} \sum_{j=0,1}  \widehat{A}^{(j)}_\bn \textcolor{black}{\widehat{\bfzeta}}_{-\bk} \cdot P^{(j)}(\bk,\bn) \textcolor{black}{\widehat{\bfzeta}}_{\bk-2\pi\Omega \bn} \, ,
\end{equation}
\begin{equation}\label{eq:InteractingS2}
	Q_{\mathrm{int}}(\psi,\chi) \;=\; \frac{1}{4  |\Lambda_i|} \sum_{\substack{\bk \in \mathcal{D}_\balpha \\ \bn \in \mathbb{Z}^2}} \sum_{j=0,1}  \widehat{A}^{(j)}_\bn  \widehat{\bpsi}_{-\bk}\cdot Q^{(j)}(\bk,\bn) \widehat{\bchi}_{\bk-2\pi\Omega \bn} + (\psi \leftrightarrow \chi) \, ,
\end{equation}
with
\begin{equation}\label{eq:P1}
P^{(1)}(\bk,\bn) =\; \begin{pmatrix}
		-\ii \sin \big({\textstyle k_1 -\pi \omega_1 n_1}\big) & \ii \cos \big({\textstyle k_1-\pi \omega_1 n_1 }\big) \\
		-\ii \cos \big({\textstyle k_1-\pi \omega_1 n_1 }\big) & -\ii \sin \big({\textstyle k_1 -\pi \omega_1 n_1 }\big)
	\end{pmatrix}(1-\delta_{n_1,0}) \, ,
\end{equation} 
\begin{equation}\label{eq:P0}
P^{(0)}(\bk,\bn) \;=\; \begin{pmatrix}
		\sin \big({\textstyle k_0-\pi \omega_0 n_0 }\big) & \ii \cos \big({ k_0-\pi \omega_0 n_0}\big) \\
		-\ii \cos \big({ k_0-\pi \omega_0 n_0}\big) & -\sin \big({\textstyle k_0-\pi \omega_0 n_0 }\big) 
	\end{pmatrix}(1-\delta_{n_0,0}) \, ,
\end{equation}
and 
\begin{equation}\label{eq:Q}
Q^{(1)}(\bk,\bn) \;=\; P^{(1)}(\bk,\bn) \, , \qquad Q^{(0)}(\bk,\bn) \;=\; -P^{(0)}(\bk,\bn) \, .
\end{equation} 
Finally, we introduce new Grassmann variables $\widehat{\bxi}_\bk$
\begin{equation}\label{eq:ChangeOfVariablesXi}
\widehat{\bchi}_{\bk} = 
\widehat{\bxi}_\bk+C^{-1}_\chi(\bk) Q(\bk) \widehat{\bpsi}_\bk
\end{equation} 
and with a straightforward computation yields
\begin{equation}
	S_{\mathrm{free}} \;=\; S^{(\xi)}_{\mathrm{free}}+S^{(\psi)}_{\mathrm{free}} \, , \qquad
	S_{\mathrm{int}} \;=\; S^{(\xi)}_{\mathrm{int}}+S^{(\psi)}_{\mathrm{int}}+Q_{\mathrm{int}}^{(\psi,\xi)} \, .
\end{equation}
Explicitly, we obtain 
$S^{(\xi)}_{\mathrm{free}}(\xi)=
S^{(\chi)}_{\mathrm{free}}(\xi)$ and
\begin{equation}
	S^{(\psi)}_{\mathrm{free}}(\psi) \;=\;-\frac{1}{4 |\Lambda_i|} \sum_{\bk \in \mathcal{D}_\balpha} \widehat{\bpsi}_{-\bk} \cdot
(g_\psi(\bk))^{-1} 
 \widehat{\bpsi}_{\bk}
\end{equation}
with
\begin{equation}\label{eq:CtildePsi}
	(\hat{g}_\psi(\bk))^{-1}  =\; C_\psi(\bk)-Q(\bk) C^{-1}_{\chi}(\bk) Q(\bk) \, ,
\end{equation}
\begin{equation}
Q(\bk) C_\chi^{-1}(\bk) Q(\bk) \;=  M_\psi+R(\bk)
\end{equation}
where, if we denote with $|M|:=\sum_{a,b} |M_{a,b}|$ the chosen norm on the space of matrices, we have
$|R(\bk)|\le C |\bk|$,
$M_\psi=-((t^{(0)}-t^{(1)})^2/m_\chi) \s_2$
and, if $m_\chi(0)=:m_\chi$ and $m^0_\psi(0)=:m^0_\psi$,
\bea
m_\psi\!\!\!\!&=&\!\!\!\!m^0_\psi- (t^{(0)}-t^{(1)})^2/m_\chi \nn\\
\!\!\!\!&=&\!\!\!\!{1\over m^0_\chi}[(t^{(0)}+t^{(1)})-2(\sqrt{2}-1))(t^{(0)}+t^{(0)})+2(\sqrt{2}+1))-(t^{(0)}-t^{(0)})^2)\nn \\
&=&\!\!\!\!{1\over m_\chi}((t^{(0)}+t^{(1)})^2-4+4(t^{(0)}+t^{(0)})-(t^{(0)}-t^{(0)})^2)\nn \\
&=&\!\!\!\!{4\over m_\chi}(t^{(0)} t^{(1)}+t^{(0)}+t^{(1)}-1) \, .\label{eq:MassaPsiInt}
\eea
In conclusion,
\begin{equation}
\Xi_{\boldsymbol\a}={\mathcal{N}}\int P_\xi(d\xi) \int  P_\psi(d\psi) 
 e^{V(\psi,\xi)}
\label{Gra}
\end{equation}
where ${\mathcal{N}}$ is a normalization constant and
$P_\x(d\x)$ is the Gaussian Grassmann integration, see \textit{e.g.}\ Section 4.1 of \cite{GM}, with propagator
$\hat g_\xi(\bk)\equiv C^{-1}_\x(\kk) $
\be
g_\xi(\xx-\yy)={2\over |\L_i|}\sum_{\kk\in\mathcal{D}_\balpha } e^{\ii\kk \cdot (\xx-\yy)} \hat g_\xi(\bk) \, ,\ee			
$P_\psi(d\psi)$ is the Grassmann integration with propagator 
$g_\psi(\bx-\by)={2\over |\L_i|}\sum_\kk e^{\ii\kk \cdot (\xx-\yy)}
\hat g_\psi(\bk)
$ and $V(\psi,\xi)= S^{(\xi)}_{\mathrm{int}}(\xi)+
S^{(\psi)}_{\mathrm{int}}(\psi)+Q_{\mathrm{int}}(\psi,\xi)$
where
\begin{equation}\label{eq:2.39}
S^{(\xi)}_{\mathrm{int}}(\xi) \;=\; -\frac{1}{4 |\Lambda_i|} \sum_{\substack{\bk \in \mathcal{D}_\balpha \\ \bn \in \mathbb{Z}^2}} \sum_{j=0,1}  \widehat{A}^{(j)}_\bn \widehat{\xi}_{-\bk} \cdot P^{(j)}(\bk,\bn) \widehat{\xi}_{\bk-2 \pi\Omega \bn} \, ,
\end{equation}
\begin{equation}\label{eq:2.40}
	S^{(\psi)}_{\mathrm{int}}(\psi) \;=\; -\frac{1}{4  |\Lambda_i|} \sum_{\substack{\bk \in \mathcal{D}_\balpha \\ \bn \in \mathbb{Z}^2}} \widehat{\bpsi}_{-\bk}\cdot \left(\sum_{j=0,1} \widehat{A}_\bn^{(j)}  P_\psi^{(j)}(\bk,\bn)\right) \widehat{\bpsi}_{\bk-2 \pi\Omega \bn} \, ,
\end{equation}
and
\begin{equation}\label{eq:2.40VERA}
	Q_{\mathrm{int}}(\psi,\xi) \;=\; \frac{1}{4 |\Lambda_i|} \sum_{\substack{\bk \in \mathcal{D}_\balpha \\ \bn \in \mathbb{Z}^2}} \sum_{j=0,1}  \widehat{A}^{(j)}_\bn  \widehat{\bpsi}_{-\bk}\cdot 
Q^{(j)}_\psi(\bk,\bn) \widehat{\bxi}_{\bk-2 \pi \Omega \bn} + (\psi \leftrightarrow \chi) \, ,
\end{equation} 
with 
\begin{equation} \label{eq:2.41}
Q_\psi^{(j)}(\bk,\bn) \;=\; Q^{(j)}(\bk,\bn)-Q(\bk) C^{-1}_\x(\bk) P^{(j)}(\bk,\bn) \, ,
\end{equation}
and
\begin{equation}\label{eq:2.42}
	\begin{split}
P^{(j)}_\psi(\bk,\bn)&=\; P^{(j)}(\bk,\bn)-Q^{(j)}(\bk,\bn) 
C^{-1}_\x(\bk- 2 \pi \Omega \bn) Q(\bk- 2 \pi \Omega \bn)\\
&-Q(\bk) C^{-1}_\x(\bk) Q^{(j)}(\bk,\bn)+Q(\bk) C^{-1}_\x(\bk) P^{(j)}(\bk,\bn) C^{-1}_\x(\bk- 2 \pi \Omega \bn) Q(\bk- 2 \pi \Omega \bn) \, .
	\end{split}
\end{equation}
%

\begin{remark}
The partition function is written in terms of Grassmann integrals (see \eqref{Gra}), and a similar representation holds for the energy correlations (see Section \ref{sec:5} below). The propagator $g_{\zeta}(\bx - \by)$ decays exponentially with a rate proportional to $m_{\zeta}(0)$, where $\zeta = \psi, \chi$ and $m_\chi(0) = O(1)$. We call $\psi$ and $\chi$ (or $\xi$) respectively critical and non-critical, or massless and massive variables.

If there is no disorder (i.e., $\lambda = 0$ and $t^{(j)}=\tanh(\beta J^{(j)})$), the critical temperature $\beta_c$, which is the temperature at which the correlation length diverges, is given by the condition $m_\psi = 0$. Indeed, one finds that this happens when $\sinh 2\beta_c J^{(1)} \sinh 2\beta_c J^{(0)} = 1$, noting that $ 4 \frac{t^{(1)}}{1 - (t^{(1)})^2} \frac{t^{(0)}}{1 - (t^{(0)})^2} = 1 $ is true for $t^{(0)} = \frac{1 - t^{(1)}}{1 + t^{(1)}}$. As we will see below, the critical temperature when $\lambda \neq 0$ is different. 
\end{remark}

\section{Integration of non-critical variables}
\label{sec:Xi}
\subsection{Series expansion}

We define
\begin{equation}\label{eq:DefFirstIntegration}
\begin{split}
e^{E^\xi+{\mathcal{V}}(\psi)}&=
\int P_\xi(d\x)   e^{V(\psi,\xi)}=e^{\sum_{q=1}^{\infty}{1\over q!}\EE^T_\xi(V(\psi,\cdot);q) } \\
&=\exp \Bigg({E^\xi+\frac{1}{4 |\Lambda_i|} \sum_{\substack{\bk \in \mathcal{D}_\balpha \\ \bn \in \mathbb{Z}^2}} \bpsi_{-\bk} \cdot \widehat{\mathcal{V}}_\bn(\bk) \bpsi_{\bk-2\pi\Omega \bn}} \Bigg)
\end{split}
\end{equation}
where $\EE^T_\xi(V(\psi,\cdot);q)$ are the truncated expectations with respect to $P_\xi(\ud \xi)$ defined as
\begin{equation} \label{eq:truncatedExpectations}
\EE^T_\xi(V;q)={\partial^q \over \partial \a^q}\log 
\int P_\xi(d\xi) e^{ \a V(\psi,\xi)}\Big|_{\a=0}
\end{equation}
where $\alpha \in \mathbb{R}$ and $E^\xi$ is constant. $\widehat{\mathcal{V}}_\bn(\bk) $ is a $2\times 2$ matrix which can be expressed as sum of connected graphs defined as follows.

\begin{definition}\label{def:Grafici111}
A graph with $q$ vertices and index $\bn$ is defined, see Fig.\ \ref{fig:FeynmanChi}, as a chain of $q$ lines $\ell_1, \dots, \ell_{q+1}$ connecting points (vertices) $v_1, \dots, v_{q}$, so that $\ell_i$ enters $v_i$ and $\ell_{i+1}$ exits from $v_i$; $\ell_1$ and $\ell_{q+1}$ are external lines of the graph and both have a free extreme, while the others are the internal lines. A labeled graph $\Gamma$ is defined from the graph defined above by associating the following labels:

\begin{enumerate}
\item To each point $v$ is associated a label $j_v \in \{0, 1\}$ and a momentum label $\bn_v \in \mathbb{Z}^2$ with the constraint that $\sum_{v=1}^q \bn_{v_i} = \bn$.
\item To each line $\ell$ is associated a momentum $\bk_\ell$ with the constraint that $\bk_{\ell_{i+1}} - \bk_{\ell_i} = -2\pi\Omega \bn_{v_i}$; moreover, $\bk_{\ell_1} = \bk$ and $\bk_{\ell_{q+1}} = \bk - 2\pi\Omega\bn$.
\item $\mathcal{G}_{\bn, q}$ is the set of all possible graphs with $q$ vertices and index $\bn$.
\end{enumerate}
\end{definition}

The value of the labeled graph $\G$ is defined as
\begin{equation}
W_\G(\bk):=F_{v_1}(\bk) \left(\prod_{i=2}^{q}\hat{g}_\xi(\bk_{\ell_i})
F_{v_ i}(\bk_{\ell_{i}}) \right)\label{eq}
\end{equation}
where 
\begin{equation}\label{eq:FaVoloso}
	F_{v}(\bk_\ell) \;=\; \widehat{A}_{\bn_v}^{(j_v)} \times \begin{cases}
		Q_\psi^{(j_v)}(\bk_\ell,\bn_v) \, , \qquad \text{if} \, v=1,q \\
		P^{(j_v)}(\bk_\ell,\bn_v) \, , \qquad \text{if} \, v =2,3,\dots,q-1 \, 
	\end{cases}
\end{equation}
with the definitions in \eqref{eq:defVx}, \eqref{eq:2.17}, \eqref{eq:2.41} and \eqref{eq:2.42}. 

\begin{lemma}\label{lem:LemmaImpestato}
	The effective potential $\mathcal{V}(\psi)$ admits the representation $\frac{1}{4 |\Lambda_i|} \sum_{\bk \in \mathcal{D}_\balpha} \sum_{\bn \in \mathbb{Z}^2} \bpsi_{-\bk} \cdot \widehat{\mathcal{V}}_{\bn}(\bk) \bpsi_{\bk-2 \pi \Omega \bn}$ with
\begin{equation}\label{eq:Wstortoq1}	
	\widehat{\mathcal{V}}_{\bn}(\bk) =\sum_{q=1}^\infty \sum_{\substack{\Gamma \in \mathcal{G}_{\bn,q}}}  W_\G(\bk) \, .
\end{equation}	
\end{lemma}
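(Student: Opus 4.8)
The plan is to expand the Gaussian Grassmann integral $\int P_\xi(d\xi)\, e^{V(\psi,\xi)}$ using the linked-cluster (cumulant) expansion and to track, term by term, how the $\xi$-Wick contractions organize into the chain graphs of Definition \ref{def:Grafici111}. First I would recall that $V(\psi,\xi)=S^{(\xi)}_{\mathrm{int}}(\xi)+S^{(\psi)}_{\mathrm{int}}(\psi)+Q_{\mathrm{int}}(\psi,\xi)$ is a sum of \emph{bilinear} monomials, each carrying exactly one momentum-shift label $\bn_v$ and one direction label $j_v$, and each being either purely $\psi\psi$, purely $\xi\xi$, or mixed $\psi\xi$ (and $\xi\psi$). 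Since $\log\int P_\xi(d\xi)\,e^{\alpha V}=\sum_{q\ge1}\frac{\alpha^q}{q!}\EE^T_\xi(V;q)$, the $\psi$-effective potential $\mathcal V(\psi)$ is obtained by collecting, at each order $q$, those terms of $\EE^T_\xi(V;\cdot)$ that are quadratic in $\psi$ — the constant (zero $\psi$-legs) pieces going into $E^\xi$, and terms with more than two $\psi$-legs being absent because $V$ is bilinear and the $\xi$-contractions preserve the total parity/degree: each vertex contributes exactly two legs, and a connected graph built from bilinear vertices with two $\psi$-external legs is necessarily a chain.

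The key combinatorial step is the standard identification: in $\EE^T_\xi(V(\psi,\cdot);q)$ one sums over all ways of choosing $q$ vertices from $V$ and all ways of contracting the $\xi$-fields among them with the covariance $\hat g_\xi$; connectedness of the truncated expectation forces the contraction pattern to be connected; and because every vertex is bilinear, a connected diagram with two uncontracted $\psi$-legs must have its vertices lined up in a single path $v_1,\dots,v_q$, with internal $\xi$-lines $\ell_2,\dots,\ell_q$ carrying propagators $\hat g_\xi(\bk_{\ell_i})$ and the two external legs being the $\psi$-fields. The endpoint vertices $v_1,v_q$ must be of mixed type ($Q_{\mathrm{int}}$, contributing $Q^{(j)}_\psi$), while the interior vertices $v_2,\dots,v_{q-1}$ must be of pure-$\xi$ type ($S^{(\xi)}_{\mathrm{int}}$, contributing $P^{(j)}$) — exactly the case distinction in \eqref{eq:FaVoloso}. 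Momentum conservation at each contraction gives $\bk_{\ell_{i+1}}-\bk_{\ell_i}=-2\pi\Omega\bn_{v_i}$ and hence $\bk_{\ell_{q+1}}=\bk-2\pi\Omega\bn$ with $\bn=\sum_v\bn_{v_i}$, matching item (2) of the definition. I would then check that the $1/q!$ from the cumulant expansion is exactly cancelled by the $q!$ ways of ordering which vertex sits in which slot of the chain (equivalently, the number of linear orderings of $q$ vertices compatible with a fixed chain topology), up to the factor-of-$2$ bookkeeping coming from the symmetric way the bilinear form is written ($\bpsi_{-\bk}\cdot\widehat{\mathcal V}_\bn(\bk)\bpsi_{\bk-2\pi\Omega\bn}$ versus its transpose with $\bn\to-\bn$), so that the net result is a plain sum over $\Gamma\in\mathcal G_{\bn,q}$ of $W_\Gamma(\bk)$ as in \eqref{eq}.

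Concretely I would (i) write $V=\sum_\tau g_\tau$ as a finite sum over monomial \emph{types} $\tau$ (pure-$\psi$, pure-$\xi$, mixed), each type itself a sum over $(\bk,\bn,j)$; (ii) substitute into $\EE^T_\xi(V;q)=\sum_{\tau_1,\dots,\tau_q}\EE^T_\xi(g_{\tau_1},\dots,g_{\tau_q})$ and use that the truncated expectation of a product of Gaussian monomials equals the sum over connected $\xi$-contractions (Wick/linked-cluster), with each $\xi$-pair giving $\hat g_\xi$; (iii) retain only the quadratic-in-$\psi$ output and argue the chain structure as above; (iv) read off $\bpsi_{-\bk}\cdot\widehat{\mathcal V}_\bn(\bk)\bpsi_{\bk-2\pi\Omega\bn}$ and match $\widehat{\mathcal V}_\bn(\bk)=\sum_q\sum_{\Gamma\in\mathcal G_{\bn,q}}W_\Gamma(\bk)$. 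The main obstacle I anticipate is purely bookkeeping rather than conceptual: getting the combinatorial prefactor exactly right (the cancellation of $1/q!$ against the number of chain orderings, and handling the two mixed pieces $Q_{\mathrm{int}}(\psi,\xi)$ and its $\psi\leftrightarrow\chi$ partner, together with the fermionic signs from reordering Grassmann variables inside the truncated expectation). One must be careful that the anticommutation signs produced when moving $\xi$-fields next to their contraction partners reproduce precisely the matrix-product ordering in \eqref{eq} and do not introduce spurious relative signs between graphs of the same order; this is where I would spend most of the detailed effort, the rest following from the definition of Gaussian Grassmann integration and the explicit forms \eqref{eq:2.39}, \eqref{eq:2.41}, \eqref{eq:2.42}.
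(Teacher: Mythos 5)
Your proposal follows essentially the same route as the paper's Appendix \ref{app:Grafici} proof: decompose $V$ into the bilinear monomial types (pure-$\psi$, pure-$\xi$, mixed), apply Wick's theorem together with the linked-cluster formula so that only connected contraction patterns survive in the truncated expectations, observe that bilinearity forces a connected diagram with two external $\psi$-legs to be a chain with mixed ($Q_{\mathrm{int}}$) vertices at the ends and pure-$\xi$ ($S^{(\xi)}_{\mathrm{int}}$) vertices in the interior, and cancel the $1/q!$ against the $q!$ orderings of the chain. The sign issue you flag as the main remaining work is settled in the paper by the short observation that keeping the contraction pattern connected always requires exchanging an even number of Grassmann variables, so every connected graph of a given order carries the same sign.
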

For the proof, see Appendix \ref{app:Grafici}.

\begin{figure}[h!]
	\begin{center}
\begin{tikzpicture}[thick,scale=0.5]
	\draw[dashed] (0,0) -- (5,0);
	\draw[fill] (5,0) circle[radius=0.09]; 
	\draw[snake it] (5,0) -- (5,4);
	\draw[->] (5,0) -- (7.5,0);
	\draw[-] (7.5,0) -- (10,0);
	\node at(5,-1) {$v_1$};
	\node at(10,-1) {$v_2$};
	\node at (2.5,1) {$\widehat{\psi}_{-\bk}$};
	\node at (7.5,1) {$\hat{g}_\x(\bk_{\ell_2})$};
	\draw[snake it] (10,0) -- (10,4);
	\draw[->] (10,0) -- (12.5,0);
	\draw[-] (12.5,0) -- (15,0);
	\node at (12.5,1) {$\hat{g}_\x(\bk_{\ell_3})$};
	\draw[fill] (10,0) circle[radius=0.09]; 
	\draw[->] (15,0) -- (17.5,0);
	\draw[-] (17.5,0) -- (20,0);
	\node at (17.5,1) {$\hat{g}_\x(\bk_{\ell_4})$};
	\draw[fill] (15,0) circle[radius=0.09]; 
	\draw[snake it] (15,0) -- (15,4);
	\draw[fill] (20,0) circle[radius=0.09]; 
	\draw[snake it] (20,0) -- (20,4);
	\draw[dashed] (20,0) -- (25,0);
	\node at (15,-1) {$v_3$};
	\node at (20,-1) {$v_4$};
	\node at (22.5,1) {$\widehat{\psi}_{\bk-2 \pi\Omega \bn}$};
\end{tikzpicture}
\end{center}
	\caption{A graph $\G$ with $q=4$.}\label{fig:FeynmanChi}
\end{figure}

We denote by $|A|:=\sum_{i,j} |A_{i,j}|$, if $A$ is a square matrix. Note that $W_\G(\bk)$ depends on $\bn$.

\begin{lemma}\label{lem:MassiveIntegration2}
	There exist $C,\l_0>0$ independent of $i$ such that for $|\l|\le \l_0$, $\widehat{\mathcal{V}}_\bn(\bk)$ and its derivatives satisfy, for $s\le 2$,
\begin{equation}\label{eq:BoundOnEffectivePotential1}
		|\partial_{\bk}^s\widehat{\mathcal{V}}_\bn(\bk) | \;
 \leq \; C  |\lambda| e^{-\frac{\eta}{2} |\bn|} \, .
	\end{equation}
Moreover,
\begin{equation}
		\widehat{\mathcal{V}}_0(\bk) \;=\; \begin{pmatrix}
			a(\bk) & \ii b(\bk) \\
			-\ii b(\bk) & -a^*(\bk)
		\end{pmatrix}\label{gra}
	\end{equation}
with $a(\bk)=-a(-\bk) \in \mathbb{C}$ and 
$b(\bk)=b(-\bk) \in \mathbb{R}$.  
\end{lemma}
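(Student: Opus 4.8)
The plan is to extract everything from the chain‑graph representation of Lemma~\ref{lem:LemmaImpestato}, $\widehat{\mathcal V}_\bn(\bk)=\sum_{q\ge 1}\sum_{\Gamma\in\mathcal G_{\bn,q}}W_\Gamma(\bk)$, in which $W_\Gamma$ is the product of $q-1$ propagators $\hat g_\xi$ and $q$ vertex factors $F_v=\widehat A^{(j_v)}_{\bn_v}M_v$, with $M_v$ one of the matrices entering \eqref{eq:FaVoloso} (built from $\hat g_\xi$, $P^{(j)}$, $Q$ through \eqref{eq:2.41}--\eqref{eq:2.42}). The first ingredient is that $\hat g_\xi$ and all the $M_v$, together with their $\bk$‑derivatives up to order $2$, are bounded uniformly in $\bk$, $\bn$ and $\beta$. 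For $\hat g_\xi=C_\xi^{-1}$ this is the \emph{non‑criticality} of $\xi$: by \eqref{eq:MassaChi2} and $|t^{(j)}|\le 1$ one has $m_\chi(\bk)\ge 2(\sqrt2+1)-2=2\sqrt2>0$, hence $|\det C_\xi(\bk)|=(t^{(1)})^2\sin^2 k_1+(t^{(0)})^2\sin^2 k_0+m_\chi(\bk)^2\ge m_\chi(\bk)^2$ is bounded away from $0$ uniformly, so $\hat g_\xi$ is smooth with uniformly bounded derivatives (see \eqref{fo}); for $P^{(j)},Q,Q_\psi^{(j)},P_\psi^{(j)}$ boundedness is immediate since their entries are trigonometric in $\bk$ with arguments $k_j-\pi\omega_j n_j$. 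Finally $|\widehat A^{(j)}_\bn|=|\widehat V^{(j)}_\bn|\le C|\lambda|e^{-\eta|\bn|}$ by \eqref{app} and \eqref{eq:2.17}.

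Granting this, the estimate \eqref{eq:BoundOnEffectivePotential1} is a routine Weierstrass bound. The Leibniz rule distributes the $\le 2$ derivatives among the $2q-1$ factors of $W_\Gamma$ in at most $(2q)^2$ ways, so $|\partial_\bk^s W_\Gamma(\bk)|\le (2q)^2\,(C|\lambda|)^q\prod_{i=1}^q e^{-\eta|\bn_{v_i}|}$ for $s\le 2$. Summing over $\mathcal G_{\bn,q}$ amounts to summing over $j_{v_i}\in\{0,1\}$ (a factor $2^q$) and over $\bn_{v_1},\dots,\bn_{v_q}\in\mathbb{Z}^2$ with $\sum_i\bn_{v_i}=\bn$; writing $e^{-\eta\sum_i|\bn_{v_i}|}\le e^{-\frac\eta2|\bn|}\,e^{-\frac\eta2\sum_i|\bn_{v_i}|}$ (valid because $\sum_i|\bn_{v_i}|\ge|\bn|$) and dropping the constraint gives $\sum_{\bn_{v_1},\dots,\bn_{v_q}}e^{-\frac\eta2\sum_i|\bn_{v_i}|}=B^q$ with $B=\sum_{\bn'\in\mathbb{Z}^2}e^{-\frac\eta2|\bn'|}<\infty$. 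Hence $|\partial_\bk^s\widehat{\mathcal V}_\bn(\bk)|\le e^{-\frac\eta2|\bn|}\sum_{q\ge 1}(2q)^2(2CB|\lambda|)^q\le C'|\lambda|e^{-\frac\eta2|\bn|}$ provided $\lambda_0$ is chosen with $2CB\lambda_0<1$, which is \eqref{eq:BoundOnEffectivePotential1}.

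For the matrix structure \eqref{gra} I would show that $\widehat{\mathcal V}_0(\bk)$ inherits the three symmetries already enjoyed by $(\hat g_\psi(\bk))^{-1}$ of \eqref{eq:CtildePsi}: the reality--exchange relation $\overline{\widehat{\mathcal V}_0(\bk)}=\sigma_1\widehat{\mathcal V}_0(-\bk)\sigma_1$, the parity relation $\widehat{\mathcal V}_0(-\bk)=-\sigma_3\widehat{\mathcal V}_0(\bk)\sigma_3$, and the Grassmann antisymmetry $\widehat{\mathcal V}_0(\bk)^T=-\widehat{\mathcal V}_0(-\bk)$. The last follows at once because $\frac{1}{4|\Lambda_i|}\sum_\bk\widehat{\bpsi}_{-\bk}\cdot\widehat{\mathcal V}_0(\bk)\widehat{\bpsi}_\bk$ is a genuine Grassmann quadratic form, so $\bk\to-\bk$ and anticommutation replace $\widehat{\mathcal V}_0(\bk)$ by $-\widehat{\mathcal V}_0(-\bk)^T$. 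The first two are obtained by checking the corresponding relations on the building blocks — from \eqref{fo}, \eqref{eq:P1}--\eqref{eq:Q} one gets $\overline{\hat g_\xi(\bk)}=\sigma_1\hat g_\xi(-\bk)\sigma_1$, $\hat g_\xi(-\bk)=-\sigma_3\hat g_\xi(\bk)\sigma_3$, $\hat g_\xi(\bk)^T=-\hat g_\xi(-\bk)$, and the analogues for $P^{(j)}$ and $Q$ (hence for $Q_\psi^{(j)}$, $P_\psi^{(j)}$), together with $(\widehat A^{(j)}_\bn)^*=\widehat A^{(j)}_{-\bn}$ from \eqref{eq:2.17} and $\widehat V^{(j)}_\bn=(\widehat V^{(j)}_{-\bn})^*$ — and then transporting them along the chain by pairing each $\Gamma\in\mathcal G_{0,q}$ with the graphs obtained by negating all momentum labels $\bn_v\to-\bn_v$ (together with $\bk\to-\bk$) and by reversing the order of the vertices; both operations map $\mathcal G_{0,q}$ to itself precisely because $\sum_v\bn_v=0$, so summing over the orbits yields the three relations for $\widehat{\mathcal V}_0$. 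These three relations are elementarily equivalent to \eqref{gra}: reality--exchange and parity give $M_{22}=-\overline{M_{11}}$ and $M_{21}=\overline{M_{12}}$, parity together with Grassmann antisymmetry gives $M_{21}=-M_{12}$, whence $M_{12}=-M_{21}\in\ii\mathbb{R}$, so $\widehat{\mathcal V}_0(\bk)=\left(\begin{smallmatrix}a(\bk)&\ii b(\bk)\\ -\ii b(\bk)&-a^*(\bk)\end{smallmatrix}\right)$ with $a(-\bk)=-a(\bk)\in\mathbb{C}$ and $b(-\bk)=b(\bk)\in\mathbb{R}$.

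The main obstacle is precisely this transport step. Transposition and complex conjugation reverse the order of the matrix products, whereas the momentum shifts $2\pi\Omega\bn_v$ run along the chain and the two endpoint vertices carry the more complicated matrices $Q_\psi^{(j)}=Q^{(j)}-Q\,C_\xi^{-1}P^{(j)}$ (resp. $P_\psi^{(j)}$) rather than $P^{(j)}$, so one has to verify carefully that a reflected or conjugated chain reassembles exactly into the value of another graph of $\mathcal G_{0,q}$. The cancellation that makes this work is that $P^{(j)}(\bk,\bn)$ and $Q^{(j)}(\bk,\bn)$ depend on $(\bk,\bn)$ only through $k_j-\pi\omega_j n_j$, which is invariant under replacing the incoming momentum of a vertex by minus its outgoing momentum and $\bn_v$ by $-\bn_v$ — the same resonance‑type identity that will later make the $\bn=0$ terms reabsorbable into the propagator. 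By contrast, the bounds of the first two paragraphs are entirely standard once the $\beta$‑uniform boundedness of $\hat g_\xi$ and its derivatives has been recorded.
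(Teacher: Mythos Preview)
Your bound \eqref{eq:BoundOnEffectivePotential1} is fine and matches the paper's argument: boundedness of $\hat g_\xi$ and of the vertex matrices together with $|\widehat A^{(j)}_\bn|\le C|\lambda|e^{-\eta|\bn|}$, the splitting $e^{-\eta\sum_v|\bn_v|}\le e^{-\frac\eta2|\bn|}\prod_ve^{-\frac\eta2|\bn_v|}$, and a geometric sum in $q$.

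The structure argument has a genuine gap in the \emph{parity} relation. On the building blocks you correctly get $\hat g_\xi(-\bk)=-\sigma_3\hat g_\xi(\bk)\sigma_3$ and $P^{(j)}(-\bk,-\bn)=-\sigma_3 P^{(j)}(\bk,\bn)\sigma_3$, so the \emph{matrix} part $G(\bk,\{\bn_v\})$ of a chain obeys $G(-\bk,\{-\bn_v\})=-\sigma_3 G(\bk,\{\bn_v\})\sigma_3$. But the scalar prefactor $f(\{\bn_v\})=\prod_v\widehat A^{(j_v)}_{\bn_v}$ satisfies $\widehat A^{(j)}_{-\bn}=(\widehat A^{(j)}_\bn)^*$, hence $f(\{-\bn_v\})=f^*$, \emph{not} $f$. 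Relabelling $\bn_v\to-\bn_v$ in the sum therefore gives
\[
\widehat{\mathcal V}_0(-\bk)=-\sigma_3\Big(\sum_{\{\bn_v\}}f^*(\{\bn_v\})\,G(\bk,\{\bn_v\})\Big)\sigma_3,
\]
which is $-\sigma_3\widehat{\mathcal V}_0(\bk)\sigma_3$ only if $\sum (f-f^*)G=0$; there is no reason for that (the phases $e^{\ii\bn\cdot\boldsymbol\vartheta_j}$ in $\widehat A^{(j)}_\bn$ are genuinely complex for generic $\boldsymbol\vartheta_j$). Your two remaining relations --- reality--exchange $\overline{\widehat{\mathcal V}_0(\bk)}=\sigma_1\widehat{\mathcal V}_0(-\bk)\sigma_1$ and Grassmann antisymmetry $\widehat{\mathcal V}_0(\bk)^T=-\widehat{\mathcal V}_0(-\bk)$ --- do hold, but together they only yield
\[
\widehat{\mathcal V}_0(\bk)=\begin{pmatrix} a(\bk) & \ii\beta(\bk)\\ -\ii\beta(-\bk) & -a^*(\bk)\end{pmatrix},\qquad a\ \text{odd},\ \beta\ \text{real},
\]
without forcing $\beta$ even; this is strictly weaker than \eqref{gra} and not enough for the later absorption step (one needs $\partial_j\widehat{\mathcal V}_0(0)$ to be diagonal). (Incidentally, complex conjugation does \emph{not} reverse matrix products; only transposition does.)

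The paper closes this gap differently (Appendix~\ref{app:Sym}): it proves by induction on the chain length that the matrix part of each graph has the form $\left(\begin{smallmatrix} a_\bn(\bk)&b_\bn(\bk)\\ b_\bn^*(\bk)&-a_\bn^*(\bk)\end{smallmatrix}\right)$ with the joint symmetries $a_\bn(\bk)=-a_{-\bn}(-\bk)$ and $b_\bn(\bk)=b_{-\bn}(-\bk)$, and then performs a \emph{four-term} symmetrization (eq.~\eqref{eq:RisommazioneFurba}) over $\{\bn_v\}\to\{-\bn_v\}$ and over $\bk\to-\bk$ together with transposition. In that symmetrized combination the prefactors enter as $f_\bn+f_\bn^*$ and $f_\bn b_\bn-f_\bn^*b_\bn^*\in\ii\mathbb R$, which is exactly what produces the missing even/odd structure of the off-diagonal. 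Your sketch could be repaired along these lines, but as written the parity step does not go through.
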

\begin{proof}
Using that $|\partial^s_{k_j} g_{\x}(\bk)| \; \leq \; G_\x $
and \textcolor{black}{recalling that by \eqref{eq:defVx} and \eqref{app} one has $|F_v(\bk_{\mathrm{in}})| \; \leq \; |\lambda| C_1 \, e^{- \eta |\bn_v|} \, $,} and by \eqref{eq:2.17}, \eqref{eq} and \eqref{eq:FaVoloso} we get, for suitable constants $G_\xi, C_1>0$ independent of $i$, 
\begin{equation}
|\partial_\kk^s W_\G(\kk)|
\le  9^q |\lambda|^q G_\x^{q-1} C_1^q \prod_{v }  e^{-\eta |\bn_v|} \; \leq \; 9^q |\l|^q G_\x^{q-1} C_1^q e^{-\frac{\eta}{2} |\bn|} \prod_{v }
 e^{-\frac{\eta}{2} |\bn_v|}
\end{equation}
where $9$ is an upper bound for the number of derivatives on the propagators and on the 
$F_v$'s. The sum over graphs consists simply in the sums over all possible $j_v$ and $\bn_v$ so that, using that 
$\sum_{\bn_v}  e^{-\frac{\eta}{2} |\bn_v|}\le  {4\over 
(1-e^{-\frac{\eta}{2}})^2 }$ and the sum over $j_v$ is bounded by $2$, one gets
\begin{equation}
	|\partial_\kk^s \mathcal{V}_{\bn}(\bk)|\le \; \sum_{q=1}^\io  |\lambda|^q \frac{1}{G_\x}\Bigg(\frac{ 72 C_1 G_\x}{(1-e^{-\frac{\eta}{2}})^2} \Bigg)^q e^{-\frac{\eta}{2} |\bn|}
\end{equation}
and the sum over $q\ge 1$ is convergent for $|\lambda| <  \frac{(1-e^{-\frac{\eta}{2}})^2}{144  C_1 G_\x}$. The proof of \pref{gra} is in Appendix \ref{app:Sym}.
\end{proof}

\section{Integration of critical modes}
\label{sec:Critical}

\subsection{Multiscale decomposition}\label{subsec:PolentaECarciofini}

We write
\bea
\X_{\boldsymbol\a}\!\!\!\!&=&\!\!\!\!\mathcal{N}\int \!\! P_\psi(\ud\psi)  \exp\Big\{ \frac{1}{4 |\Lambda_i|} \sum_{\substack{\bk \in \mathcal{D}_\balpha \\ \bn \in \mathbb{Z}^2}} \hat\bpsi_{-\bk} \cdot \widehat{\mathcal{V}}_\bn(\bk) \hat\bpsi_{\bk-2\pi\Omega \bn}  \Big\}\nn \\
\!\!\!\!&=&\!\!\!\!
\mathcal{N}_1 \!\int \!\! P^{(\le 1)}(\ud\psi) \exp\Big\{ 
 \frac{1}{4  |\Lambda_i|} \sum_{\bk \in \mathcal{D}_\balpha}\!\!\! \hat\bpsi_{-\bk} \cdot \gamma^2 \nu \sigma_2 \hat\bpsi_{\bk}
+
\frac{1}{4  |\Lambda_i|} \sum_{\substack{\bk \in \mathcal{D}_\balpha \\ \bn \in \mathbb{Z}^2}}\!\! \hat\bpsi_{-\bk} \cdot \widehat{\mathcal{V}}_\bn(\bk) \hat\bpsi_{\bk-2\pi\Omega \bn}  \Big\}
\label{Gra2}
\eea
where $P^{(\le 1)}(\ud\psi):= \frac{\mathcal{N}}{\mathcal{N}_1} P(\ud\psi)\exp\{ 
-\frac{1}{4  |\Lambda_i|} \sum_{\bk \in \mathcal{D}_\balpha} \widehat{\bpsi}_{-\bk} \cdot \gamma^2 \nu \sigma_2 \widehat{\bpsi}_{\bk}\} $. 
Note that, in writing the above expression  we have added and subtracted a counterterm
proportional to $\n$, which will be suitably chosen below.

As we noticed, in the integration over the $\psi$ we cannot repeat the analysis
done for the $\x$ because the propagator is unbounded.
The integration of $\psi$ in \pref{Gra2}
is done via a multiscale analysis.
\textcolor{black}{We introduce a Gevrey class 2 function $\chi$ (see e.g.\ \cite[Appendix A]{GMR})} such that $\chi'(|\bk|_{\mathbb{T}}) \leq 0$ and 
\begin{equation}\label{ChiLHaVisto}
	\chi(\bk)\;=\;\chi(|\bk|_{\mathbb{T}})\;=\; \begin{cases}
		1 \, , \qquad \text{if $|\bk|_{\mathbb{T}} <\gamma^{-1}{\pi\over 2}$} \, \\
		0 \, , \qquad \text{if $|\bk|_{\mathbb{T}} \geq {\pi\over 2}$}
	\end{cases}
\end{equation}
with $\mathbb{T}$ denoting the two dimensional torus of length $2 \pi$, $| \bk |_{\mathbb{T}}:=
\sqrt{|k_0|^2_T+|k_1|_T^2}$
with $|k|_T:=\inf_{m \in \mathbb{Z}}|k+2m \pi|$.
We also define, if $\g>1$, $h\le 0$
\begin{equation}\label{eq:chih}
	\chi_h(\bk) \;:=\; \chi(\gamma^{-h} \bk) \, ,
\end{equation}
and $\chi_1(\bk)=1$.
The functions $f_h(\bk):=\chi_h(\bk)-\chi_{h-1}(\bk)$ and $\tilde f_h(\bk):=\chi_{h}(\bk)(1-\chi_{h-1}(\bk))$
\textcolor{black}{are Gevrey class 2 compact support functions} with support {\color{black}$\frac{\pi}{2}\g^{h-2}\le |\bk|_{\mathbb{T}}\le \frac{\pi}{2}\g^{h}$}, see Fig.\ \ref{fig:ChiF}.

\begin{figure}[h!]
\begin{center}
	\begin{tikzpicture}
		\draw[->] (-0.2,0) -- (4.5,0);
		\draw[->] (0,-0.2) -- (0,4);
		\draw (2.0,-0.1) -- (2.0,0.1);

		\draw (3.5,-0.1) -- (3.5,0.1);
		\node at (2.0,- 0.3) {${\textstyle \frac{\pi}{2}}\gamma^{-1}$};
		\node at (3.5,- 0.3) {$\textstyle{\frac\pi2}$};
		\node at (4.5,-0.3) {$|\bk|_{\mathbb{T}}$};
		\node at (0.7,4) {$\chi(|\bk|_{\mathbb{T}})$};
		\draw[color=gray,dashed] (2,0.1) -- (2,3.5);
		\draw[thick, color=blue] (3.5,0) to[out=180,in=0] (2.0,3);
		\draw[thick, color=blue] (0,3) -- (2.0,3);
		\draw[thick, color=blue] (3.5,0) -- (4,0);
	\end{tikzpicture} 
	\begin{tikzpicture}
		\draw[->] (-0.6,0) -- (9,0);
		\draw[->] (-0,-0.3) -- (0,4);
		\draw (4.0,-0.1) -- (4.0,0.1);
		\draw (2.4,-0.1) -- (2.4,0.1);
		\draw (1.4,-0.1) -- (1.4,0.1);
		\draw (0.8,-0.1) -- (0.8,0.1);
		\draw (0.5,-0.1) -- (0.5,0.1);
		\draw (6,-0.1) -- (6,0.1);
		\node at (4.0,- 0.3) {${\textstyle \frac\pi2}\gamma^{-1}$};
		\node at (2.4,- 0.3) {${\textstyle \frac\pi2}\gamma^{-2}$};
		\node at (6,- 0.3) {${\textstyle\frac{\pi}{2}}$};
		\node at (9,-0.3) {$ |\bk|_{\mathbb{T}}$};
		\node at (1.4,4) {$f_h(|\bk|_{\mathbb{T}})$};
		\draw[color=gray,dashed] (4,0.1) -- (4,3.75);
		\draw[color=gray,dashed] (2.4,0.1) -- (2.4,3.75);
		\draw[color=gray,dashed] (1.4,0.1) -- (1.4,3.75);
		\draw[color=gray,dashed] (0.8,0.1) -- (0.8,3.75);
		\draw[color=gray,dashed] (0.5,0.1) -- (0.5,3.75);
		\draw[color=gray, dashed] (6,0.1) -- (6,3.75);
		\draw[thick, color=cyan] (6,0.5) to[out=180, in=0] (4.0,0);
		\draw[thick, color=cyan] (6,0.5) -- (7,0.5);
		\draw[thick, color=cyan] (0,0) -- (4,0);
		\node at (8.6,0.25) {\textcolor{cyan}{$f_1(\bk)$}};
		
		\draw[->, color=gray, dotted] (-0.4,1) -- (9,1); 
		\draw[thick, color= blue] (6,1) to[out=180, in=0] (4.0,1.5);
		\draw[thick, color=blue] (6,1) to (7,1);
		\draw[thick, color=blue] (4.0,1.5) to[out=180, in =0] (2.4,1);
		\draw[thick, color=blue] (2.4,1) to (0,1);
		\node at (8.6,1.25) {\textcolor{black}{$f_0(\bk)$}};
		
		\draw[->, color=gray, dotted] (-0.4,2) -- (9,2);
		\draw[thick, color=violet] (7,2) -- (4,2);
		\draw[thick, color= violet] (4.0,2) to[out=180, in=0] (2.4,2.5);
		\draw[thick, color=violet] (2.4,2.5) to[out=180, in =0] (1.4,2);
		\draw[thick, color=violet] (1.4,2) to (0,2);
		\node at (8.4,2.25) {\textcolor{violet}{$f_{-1}(\bk)$}};
		
		\draw[->, color=gray, dotted] (-0.4,3) -- (9,3);
		\draw[thick, color=magenta] (7,3) -- (2.4,3);
		\draw[thick, color= magenta] (2.4,3) to[out=180, in=0] (1.4,3.5);
		\draw[thick, color=magenta] (1.4,3.5) to[out=180, in =0] (0.8,3);
		\draw[thick, color=magenta] (0.8,3) to (0,3);
		\node at (8.4,3.25) {\textcolor{magenta}{$f_{-2}(\bk)$}};		
		
	\end{tikzpicture}
\end{center}
\caption{Plot of the function $\chi$ and some of the $f_h$.}\label{fig:ChiF}
\end{figure}
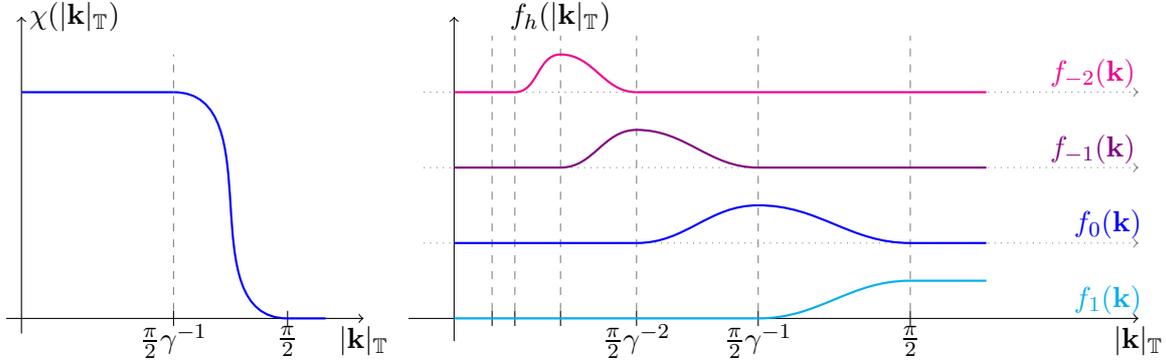

The integration is defined recursively in the following way.
Suppose we have just integrated the field on scale $h$, $h=1,0,-1,-2,\dots$
obtaining
\begin{equation}
	\X_{\boldsymbol\a}=\mathcal{N}_h\int P^{(\leq h)}(\ud \psi^{(\leq h)}) e^{V^{(h)}(\psi^{(\leq h)})}\label{ste1}  \, ,
\end{equation}
with $\mathcal{N}_h$ constant in $\psi$ and
 $P^{(\leq h)}(\ud \psi^{(\leq h)})$ a Grassmann Gaussian integration with propagator 
\begin{equation}\label{eq:PropagatorLeq}
g^{(\le h)}(\bk) = \chi_{h}(\bk) A_{h+1}(\bk)
\end{equation}
with 
\begin{equation}
	 A_h(\bk)=\begin{pmatrix}
		-\ii a_1^{(h)} k_1 - a_0^{(h)}k_0-b_{1}(\bk)  & -\ii \mu-\ii b_2(\bk) \\
		\ii \mu+\ii b_2(\bk) & -\ii(a_1^{(h)})^* k_1 + (a_0^{(h)})^* k_0+b^*_{1}(\bk)
	\end{pmatrix}^{-1} \, .\label{acco}
\end{equation}
and $|b_{1}(\bk)|, |b_{2}(\bk)|\le C |\bk|^2$; moreover
\begin{equation}\label{eq:StellaStellina}
V^{(h)}(\psi^{(\leq h)}) \;=\; \frac{1}{4  |\L_i|} \sum_{\substack{\bk \in \mathcal{D}_\balpha \\ \bn \in \mathbb{Z}^2}} \bpsi_{-\bk}^{(\leq h)} \cdot \widehat{\mathcal{V}}_{\bn}^{(h)}(\bk) \bpsi_{\bk- 2 \pi \Omega \bn}^{(\leq h)} \, .
\end{equation}
If $h=1$, \pref{eq:PropagatorLeq} holds with $\chi_1(\bk)=1$; moreover
$\m=m_\psi+\gamma^2\n$ and $V^{(1)}$ given by the exponent of the second line of
\pref{Gra2}.
\textcolor{black}{
\begin{remark} We will show in the following that $\n$ has to be chosen as a suitable non trivial function
of $\l,\m,\b$; the condition for criticality, that is so that the correlation length diverges, is given by $\m=0$ 
and not by $m_\psi=0$ as in the non disordered case.
\end{remark}
}
%
We define a localization operation as
\begin{equation}
	\mathcal{L} V^{(h)}(\psi^{(\leq h)}) \;:=\; \frac{1}{4 |\Lambda_i|} \sum_{\bk \in \mathcal{D}_{\balpha}} \widehat{\bpsi}^{(\leq h)}_{-\bk}\cdot \bigg(\widehat{\mathcal{V}}^{(h)}_0(0)+\sum_{j=0}^1  k_j \partial_j \widehat{\mathcal{V}}^{(h)}_0(0) \bigg)  \widehat{\bpsi}^{(\leq h)}_{\bk} \, ,\label{loco}
\end{equation}
and
\be \label{eq:RenOp}
\mathcal{R} V^{(h)}(\psi^{(\leq h)}) =
V^{(h)}(\psi^{(\leq h)})-\mathcal{L} V^{(h)}(\psi^{(\leq h)}) \, .
\ee
We move the second term of $\mathcal{L} V^{(h)}(\psi^{(\leq h)})$ in the Gaussian integration and by the \emph{change of integration} property of Gaussian Grassmann Integrals {\color{black}\cite[Eq. 2.24]{GM}}, we have for suitable $\bar{\mathcal{N}}_h \in \mathbb{R}$,
\begin{equation}
\begin{split}
\mathcal{N}_h\int P^{(\leq h)}(\ud \psi^{(\leq h)})& e^{ \mathcal{L}  V^{(h)}(\psi^{(\leq h)})+  \mathcal{R}  V^{(h)}(\psi^{(\leq h)})  }=\\
&=\mathcal{\bar N}_h\int \bar P^{(\leq h)}(\ud \psi^{(\leq h)}) e^{
 \frac{1}{4  |\Lambda_i|} \sum_{\bk \in \mathcal{D}_\balpha} \hat\bpsi^{(\le h)}_{-\bk} \cdot \gamma^{h} \nu_h \sigma_2 \hat\bpsi^{(\le h)}_{\bk}+
\mathcal{R}
V^{(h)}(\psi^{(\leq h)}) }
\label{ste2}  \, ,
\end{split}
\end{equation}
where $\bar P^{(\leq h)}(\ud \psi^{(\leq h)}) $
has propagator
\begin{equation}\label{eq:bargleq}
\bar g^{(\le h)}(\bk) = \chi_{h}(\bk) \bar A_{h}(\bk)
\end{equation}
with
	\begin{equation}
\!\bar{A}_{h}(\bk)\!:=\!\begin{pmatrix}
		-\ii a_1^{(h)}(\bk) k_1 - a_0^{(h)}(\bk)k_0-b_{1}(\bk)  & -\ii \mu-\ii b_2(\bk) \\
		\ii \mu+\ii b_2(\bk) & -\ii(a_1^{(h)}(\bk))^* k_1 + (a_0^{(h)}(\bk))^* k_0+b^*_{1}(\bk)
	\end{pmatrix}^{-1}\!\!\!  \label{accobar}
\end{equation}
and 
\begin{equation}
a_1^{(h)}(\bk)=a_1^{(h+1)}+\ii \chi_h(\bk)\big[\partial_1 
\widehat{\mathcal{V}}_0^{(h)}(0)\big]_{1,1} \, , \qquad a_0^{(h)}(\bk)=a_0^{(h+1)}
-\chi_h(\bk) \big[\partial_0 \widehat{\mathcal{V}}_0^{(h)}(0)\big]_{1,1}
\end{equation}
where $a^{(h+1)}_j:=a^{(h+1)}_j(0)$ for any $j=0,1$ and for any $h$, and with $\nu_h \sigma_2 = 
\gamma^{-h}\widehat{\mathcal{V}}_0^{(h)}(0)$.
To begin the iteration, one can define 
%
\begin{equation}
	a_0^{(2)}:=-\big[ \partial_0 (\hat{g}^{(\leq 1)})^{-1}(0) \big]_{1,1} \, ,\qquad a_1^{(2)}:= \ii \big[ \partial_1 (\hat{g}^{(\leq 1)})^{-1}(0) \big]_{1,1} \, .
\end{equation}
We can write 
\be
\bar P^{(\leq h)}(\ud \psi^{(\leq h)}) = P^{(\leq h-1)}(\ud \psi^{(\leq h-1)})P^{(h)}(\ud \psi^{(h)})
\ee
where $P^{(\leq h-1)}(\ud \psi^{(\leq h-1)})$ has propagator
\begin{equation}\label{propo}
g^{(\le h-1)}(\bk) = \chi_{h-1}(\bk) A_{h}(\bk)
\end{equation}
with $A_{h}(\bk)$ being defined in \pref{acco}. $P^{(h)}(\ud \psi^{(h)})$ has propagator
\begin{equation}\label{eq:gacca}
 g^{(h)}(\bk)= \bar g^{(\le h)}(\bk)-
g^{(\le h-1)}(\bk)
\end{equation}
where the analogous of \pref{gra} has been used.
We can integrate $P^{(h)}(\ud \psi^{(h)})$ and the procedure can be iterated.

\subsection{The single scale propagator}
Inserting \eqref{propo} and \eqref{eq:bargleq} in \eqref{eq:gacca} one obtains
\begin{equation}
	g^{(h)}(\bk)=f_h(\bk) A_{h}(\bk)+ \tilde{f}_h(\bk) \big(\bar A_{h}(\bk)-A_{h}(\bk) \big) \, ,
\end{equation}
where {\color{black}$f_h$ and $\tilde f_h$} are defined after \eqref{eq:chih}.
It is important to notice that  $\mathrm{supp} \,\chi_h(\bk)(\bar{A}_h(\bk)-A_h(\bk)) \subseteq [\frac{\pi}{2} \gamma^{h-1},\frac{\pi}{2} \gamma^h]$ (therefore we can multiply for free with $(1-\chi_{h-1}(\bk)$ to obtain $\tilde{f}_h$) and therefore $g^{(h)}(\bk)$ is a Gevrey compact support function, with $\mathrm{supp} \, g^{(h)} \subseteq [\frac{\pi}{2} \gamma^{h-2},\frac{\pi}{2} \gamma^h]$.
Note also that in the expression of $g^{(1)}$ the second term is not present because $\chi_1(\bk)=1$. 

%
	Assuming iteratively (what will be proved inductively below in Lemma \ref{prop:EstimateResonantClusters} for $|\lambda|$ small enough) that $\frac{7}{8}a_j^{(2)} \leq a_j^{(h)} \leq \frac{9}{8} a_j^{(2)}$, we can show that for $s=0,1,2$,
		\begin{equation}
		| \partial_\bk^s g^{(h)}(\bk) | \leq C_1 \gamma^{-h(1+s)} \, . \label{ello}
	\end{equation}
Indeed,
	\begin{equation}
		|\det \textcolor{black}{A_h^{-1}}(\bk)| = |\ii a_1^{(h)} k_1+a_0^{(h)} k_0 + b_1(\bk) |^2 + |\mu+b_2(\bk)|^2 
	\end{equation}
	with $b_1(\bk),b_2(\bk)=O(|\bk|^2)$ as $\bk \to 0$. Then, by algebraic manipulations, one obtains
	\begin{equation}\label{eq:Determinant}
		|\det \textcolor{black}{A_h^{-1}}(\bk)| \geq |a_1^{(h)}|^2 k_1^2 + |a_0^{(h)}|^2 k_0^2 + 2 \Im (a_1^{(h)} {a_0^{(h)}}^*)k_0k_1+F(\bk)
	\end{equation}
	with $F(\bk)=O(|\bk|^3)$ as $\bk \to 0$. Using now that $a_1^{(2)},a_0^{(2)} \in \mathbb{R}$ and the iterative hypothesis on $a_j^{(h)}$, one has
	\begin{equation}
		\begin{split}
			|\Im (a_1^{(h)} {a_0^{(h)}}^*)|&=|\Im(a_1^{(h)} {a_0^{(h)}}^*-a_1^{(2)} a_0^{(2)})|\\
			&=|\Im((a_1^{(h)}-a_1^{(2)}) {a_0^{(h)}}^*-a_1^{(2)} ({a_0^{(h)}}^*-a_0^{(2)}))| \\
			&\leq |a_1^{(h)}-a_1^{(2)}| |a_0^{(h)}|+|a_1^{(2)}| |a_0^{(h)}-a_0^{(2)}| \leq \left(\frac{1}{8} \cdot \frac{9}{8}+\textcolor{black}{\frac{1}{8}} \right) |a_0^{(2)}| |a_1^{(2)}|\\
			&\leq \textcolor{black}{\frac{17}{64}} |a_0^{(2)}| | a_1^{(2)}| \, .
		\end{split}
	\end{equation}
	Thus, \eqref{eq:Determinant} can be estimated as
	\begin{equation}
		\begin{split}
		|\det \textcolor{black}{A_h^{-1}}(\bk) | &\geq |a_1^{(h)}|^2 k_1^2 + |a_0^{(h)}|^2 k_0^2 - 2 |\Im (a_1^{(h)} {a_0^{(h)}}^*)k_0k_1|-|F(\bk)| \\
		& \geq \frac{49}{64}|a_1^{(2)}|^2 k_1^2 + \frac{49}{64}|a_0^{(2)}|^2 k_0^2 - \textcolor{black}{\frac{17}{32}} |a_0^{(2)}| |a_1^{(2)}| |k_0| |k_1| -|F(\bk)| \\
		&\geq \textcolor{black}{\frac{1}{2}} \Big((a_1^{(2)})^2 k_1^2 + (a_0^{(2)})^2 k_0^2\Big) -|F(\bk)|\, ,
		\end{split}
	\end{equation}
	where in the last step we used $|(a_0^{(2)} k_0) (a_1^{(2)} k_1)| \leq \frac{1}{2}\Big((a_0^{(2)})^2 k_0^2+(a_1^{(2)})^2 k_1^2\Big)$.

\subsection{Graphs and clusters}\label{subsec:RenormalizedGraphAnalysis}
The outcome of the multiscale integration described above is again a representation of the effective potential in terms of graphs, which are called renormalized graphs.

\begin{definition}\label{def:RenGraph}
$\mathcal{G}_{\bn,q}^{R,h}$ 
is the set of {\it renormalized graphs} $\G$, 
which are defined starting from the graphs defined in Definition \ref{def:Grafici111} by associating the following labels
\begin{enumerate}
\item To each point $v$ is associated a label 
$\bn_{v}$ and a label $i_{v}\in \{\n,V\}$,
with the constraint that
$\sum_{i=1}^q \bn_{v_i}=\bn$.
\item
To each line $\ell$ is associated a momentum $\bk_\ell$
with the constraint that $\kk_{\ell_{i+1}}-\kk_{\ell_i}=-2 \pi \O \bn_{v_i}$;
moreover $\kk_{\ell_{1}}=\kk$ and  $\kk_{\ell_{q+1}}=\kk-2\pi\O\bn$.
\item
To each line $\ell$ is associated  a 
{\it scale index} $h_\ell=1,0,\dots,-\infty$; if $\ell$ is an internal line
$h_\ell\ge h+1$; the minimal scale of the internal lines is $h_\G$. To each external line is associated  a scale and $h^{ext}\le h$ is the greatest of such scales.
\end{enumerate}
\end{definition}

Given a renormalized graph, we associate a set of clusters defined in the following way.
\begin{definition}\label{def:ClustersSec4}
Given a renormalized Graph $\G$
\begin{enumerate}
\item A non-trivial cluster $T$ is defined as 
a nonempty connected subset of internal lines
and points attached to them such that if $h_T$ is the minimum
of the scales of the lines of $T$, then $h_T>h^{ext}_T$,
where $h^{ext}_T$ 
is the maximal of the scales of the external lines of $T$
(the lines $\not\in T$ attached to a single point of $T$). The points are trivial clusters and $\G$ is also a 
cluster. 
\item
The difference of the momenta of the external lines of $T$ is given by $2\pi \O \bn_T$ with $\bn_T=\sum_{v\in T}
\bn_v$. 
If $\bn_T=0$ then $T$ is a \emph{resonant cluster} (or resonance), otherwise is a 
\emph{non-resonant cluster}. An inclusion relation is established between clusters
and we say that $\tilde T\subset T$ if all the elements of $\tilde T$ belong also to $T$. 
$\tilde T$ is a maximal cluster (trivial or not trivial) contained in $T$
if $\tilde T\subsetneq T$ and there is no other cluster $\bar T$ such that  $\tilde T\subsetneq \bar T\subsetneq T$.
\item
$Q_T$ is the number of maximal clusters in $T$,
$M_T$ is the number of the maximal non-resonant clusters contained in $T$;
$R_T$ is the number of the maximal resonant clusters contained $T$; $Q_T=M_T+R_T$;
$M_T^\n$  ($M_T^I$ ) is the set of resonant (non-resonant) maximal trivial clusters (\textit{i.e.} points) in $T$.
\end{enumerate}
\end{definition}

Given a cluster, we can associate a value in the following way.
\begin{definition}
The value of a cluster $T$ with maximal clusters $\tilde T_w$, $w=1,\dots,Q_T$
is given by
\begin{equation}\label{assoGatto}
W_T(\kk)
=\;\left[\prod_{w=1}^{Q_T-1} \overline{W}_{\tilde T_{w}}(\bk_w) g^{(h_T)}(\bk_{w+1}) \right] \overline{W}_{\tilde T_{Q_T}}(\bk_{Q_T}) \, ,
\end{equation}
where $\bk_w-\bk_{w-1}=2 \pi\Omega  \bn_{\tilde T_{w-1}} $, $\bk_1=\bk$ and $\overline{W}_{\tilde T_{w}}(\bk_w)$ is defined as
\begin{enumerate}
\item If $\tilde T_w$ is a trivial cluster, then by Definition \ref{def:RenGraph} (item (1)) it has two labels $i_w$ and $\bn_w$. If $i_w=\nu$, then $\bn_w=0$ and $W_{\tilde{T}_w}(\bk_w)=\gamma^{h_T} \nu_{h_T} \sigma_2$; if $i_w=V$ then either $\bn_w=0$ and then $W_{\tilde{T}_w}(\bk_w)=\mathcal{R} \widehat{\mathcal{V}}_0(\bk_w)$ or $\bn_w \neq 0$ and then 
$W_{\tilde{T}_w}(\bk_w)=\widehat{\mathcal{V}}_{\bn_w}(\bk_w)$  defined in \eqref{eq:DefFirstIntegration}.
\item 
If $\tilde T_w$ is a non-trivial cluster
then 
$\bar W_{\tilde{T}_w}=\mathcal{R} W_{\tilde{T}_w}$
with $\mathcal{R}=1-\mathcal{L}$
defined in
\pref{loco}.
\end{enumerate}
\end{definition}

{\color{black}
\begin{remark}\label{rem:CostumeDaBagno}
	Let $v$ be a maximal trivial cluster $v \in T$. If $v$ is a resonant $V$-point (i.e.\ $\bn_v=0$), then by \eqref{eq:RenOp} and Lemma \ref{lem:MassiveIntegration2}, we have
		\begin{equation}
			|\chi_h(\bk) \chi_h(\bk-2 \pi \Omega \bn) \mathcal{R} \hat{V}_0(\bk) | \leq \gamma^{2h_T} C |\lambda| \, .
		\end{equation}
	\end{remark}}

With the above definitions, the following lemma holds.

\begin{lemma}\label{lem:CaronDimonioConOcchiDiBragia}
$\widehat{\mathcal{V}}_\bn^{(h)} (\bk)$ in \pref{eq:StellaStellina} can be written as
\begin{equation}\label{eq:Wstortoq}
	\widehat{\mathcal{V}}_{\bn}^{(h)}(\bk) =\sum_{q=1}^\infty
\sum_{\substack{\Gamma \in \mathcal{G}^{R,h}_{\bn,q}}} 
W_\G(\bk) \, .
\end{equation}
Similarly, the running coupling constants verify
\begin{equation}
\n_{h-1}=\g \n_h+\b_{\n,h}\quad\quad a_j^{(h-1)}=
a_j^{(h)}+\b_{a_j,h}\label{rc}
\end{equation}
with 
\begin{equation}\label{Betadefinizioni}
\begin{split}
\b_{\n,h}&=\ii \g^{-h+1}\sum_{q=2}^\infty
\sum_{\substack{\Gamma \in \mathcal{G}^{R,h-1}_{0,q}, h_\G=h}} 
 \big[W_\G(0)\big]_{1,2} \, , \\
  \b_{a_1,h}
&=-\ii \sum_{q=2}^\infty
\sum_{\substack{\Gamma \in \mathcal{G}^{R,h-1}_{0,q}, h_\G=h}} 
\big[\partial_{k_1}  W_\G(0)\big]_{1,1} \, , \quad 
\b_{a_0,h}= \sum_{q=2}^\infty
\sum_{\substack{\Gamma \in \mathcal{G}^{R,h-1}_{0,q}, h_\G=h}} 
\big[\partial_{k_0}  W_\G(0)\big]_{1,1} \, .
\end{split}
\end{equation}
\end{lemma}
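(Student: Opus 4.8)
The plan is to prove both assertions by downward induction on $h=1,0,-1,\dots$, by unwinding the recursive construction of Section~\ref{subsec:PolentaECarciofini}. Since $\Lambda_i$ is finite, every Grassmann integral below is finite-dimensional, so all sums over $q$ and over $\bn$ are effectively finite and there is no convergence issue at this stage; the content is purely algebraic and combinatorial. For the base case $h=1$, \pref{Gra2} gives $\widehat{\mathcal{V}}^{(1)}_\bn(\bk)=\gamma^2\nu\,\sigma_2\,\delta_{\bn,0}+\widehat{\mathcal{V}}_\bn(\bk)$, and $\widehat{\mathcal{V}}_\bn(\bk)=\sum_q\sum_{\Gamma\in\mathcal{G}_{\bn,q}}W_\Gamma(\bk)$ by Lemma~\ref{lem:LemmaImpestato}. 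Reading each chain $\Gamma\in\mathcal{G}_{\bn,q}$ as a renormalized graph all of whose vertices are $V$-points and which has no internal $\psi$-line (internal scales would have to be $\ge 2$, an empty set), and adjoining the single $\nu$-vertex that accounts for the $\gamma^2\nu\,\sigma_2$-term when $\bn=0$, identifies $\widehat{\mathcal{V}}^{(1)}_\bn(\bk)=\sum_q\sum_{\Gamma\in\mathcal{G}^{R,1}_{\bn,q}}W_\Gamma(\bk)$; the propagator of $P^{(\le 1)}$ has the form \pref{eq:PropagatorLeq}--\pref{acco} with $\chi_1\equiv 1$, $\mu=m_\psi+\gamma^2\nu$ and $a_j^{(2)}$ as fixed in Section~\ref{subsec:PolentaECarciofini}.

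For the inductive step, assume \pref{eq:Wstortoq} at scale $h$, with $P^{(\le h)}$ of the form \pref{eq:PropagatorLeq}--\pref{acco}. The symmetry property — the scale-$h$ analogue of \pref{gra}, preserved under the operations below and proved as in Appendix~\ref{app:Sym} — forces $\widehat{\mathcal{V}}^{(h)}_0(0)$ to be a multiple of $\sigma_2$, so $\nu_h\sigma_2:=\gamma^{-h}\widehat{\mathcal{V}}^{(h)}_0(0)$ is well defined and the local part $\mathcal{L}V^{(h)}$ of \pref{loco} splits into the counterterm $\frac{1}{4|\Lambda_i|}\sum_\bk\hat\bpsi^{(\le h)}_{-\bk}\cdot\gamma^h\nu_h\sigma_2\,\hat\bpsi^{(\le h)}_\bk$ and the velocity corrections $\sum_j k_j\,\partial_j\widehat{\mathcal{V}}^{(h)}_0(0)$. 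Moving the latter into the quadratic form and using the change-of-integration property \cite[Eq.~2.24]{GM} yields \pref{ste2}, with the measure $\bar P^{(\le h)}$ of propagator \pref{eq:bargleq} and updated velocities $a_j^{(h)}(\bk)$; one checks that $\mathrm{supp}\,\chi_h(\bk)\big(\bar A_h(\bk)-A_h(\bk)\big)\subseteq[\tfrac{\pi}{2}\gamma^{h-1},\tfrac{\pi}{2}\gamma^h]$ and that $b_1,b_2=O(|\bk|^2)$ is preserved, since $\mathcal{R}=1-\mathcal{L}$ removes exactly the zeroth- and first-order Taylor coefficients at $\bk=0$.

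Next, factorize $\bar P^{(\le h)}=P^{(\le h-1)}P^{(h)}$ and integrate out $\psi^{(h)}$ through the truncated-expectation expansion $e^{V^{(h-1)}}=\exp\big(\sum_{q\ge 1}\frac{1}{q!}\EE^T_{\psi^{(h)}}(\,\cdot\,;q)\big)$ applied to the exponent of \pref{ste2}, exactly as for the $\xi$-integration in Lemma~\ref{lem:LemmaImpestato} (Appendix~\ref{app:Grafici}); since every vertex is quadratic in $\psi$, Wick's theorem produces only connected chains, the new internal lines carrying the propagator $g^{(h)}$, i.e.\ scale index $h$. The $q=1$ term returns the counterterm and $\mathcal{R}V^{(h)}$ (internal scales $\ge h+1$) unchanged; for $q\ge 2$ one gets chains with at least one scale-$h$ line. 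Expanding each $V$-vertex recursively by the induction hypothesis and reorganizing, $\widehat{\mathcal{V}}^{(h-1)}_\bn(\bk)$ becomes a sum of chains of elementary vertices — $\nu$-points $\gamma^{h'}\nu_{h'}\sigma_2$ and bare $V$-points $\widehat{\mathcal{V}}_{\bn_v}$ — joined by propagators $g^{(h_\ell)}$, $h_\ell\ge h$; grouping the lines into the connected components of scale $\ge h_T$ produces the clusters of Definition~\ref{def:ClustersSec4}, the inequality $h_T>h^{ext}_T$ being automatic from the supports of the $f_h,\tilde{f}_h$, and the nested $\mathcal{R}$'s attached to non-trivial clusters encoding the iterated localizations. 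This yields $\widehat{\mathcal{V}}^{(h-1)}_\bn(\bk)=\sum_q\sum_{\Gamma\in\mathcal{G}^{R,h-1}_{\bn,q}}W_\Gamma(\bk)$, closing the induction for \pref{eq:Wstortoq}. Finally, since the graphs with minimal internal scale $h_\Gamma=h$ are precisely those created at this step and $\mathcal{L}\mathcal{R}=0$, the carried-over graphs do not contribute to the local part, so $\gamma^{h-1}\nu_{h-1}\sigma_2=\widehat{\mathcal{V}}^{(h-1)}_0(0)=\gamma^h\nu_h\sigma_2+\sum_{q\ge 2}\sum_{\Gamma\in\mathcal{G}^{R,h-1}_{0,q},\,h_\Gamma=h}W_\Gamma(0)$ and $\partial_{k_j}\widehat{\mathcal{V}}^{(h-1)}_0(0)=\sum_{q\ge 2}\sum_{\Gamma\in\mathcal{G}^{R,h-1}_{0,q},\,h_\Gamma=h}\partial_{k_j}W_\Gamma(0)$; reading off the $(1,2)$ entry (with $[\sigma_2]_{1,2}=-\ii$) gives $\nu_{h-1}=\gamma\nu_h+\beta_{\nu,h}$, and the coefficients of $k_j$ in the $(1,1)$ entry give $a_j^{(h-1)}=a_j^{(h)}+\beta_{a_j,h}$, which are \pref{rc} with $\beta_{\nu,h},\beta_{a_j,h}$ exactly as in \pref{Betadefinizioni}.

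The main obstacle is the combinatorial identification in the third paragraph: one must verify that the iterated truncated-expectation expansion, unfolded through all the intermediate scales, is reproduced term-by-term, and without over- or under-counting, by the labeled renormalized graphs of Definition~\ref{def:RenGraph} together with the cluster decomposition and cluster values of Definition~\ref{def:ClustersSec4}, and that the $\mathcal{L}$-subtractions land on precisely the clusters prescribed by \pref{assoGatto}. The scale supports of $f_h,\tilde{f}_h$ make the scale of each line and the cluster hierarchy unambiguous, so the argument is the standard tree/cluster expansion bookkeeping; but setting up this bijection carefully — and tracking correctly the two roles of a $V$-point ($\bn=0$ versus $\bn\neq0$, resonant versus non-resonant) — is the technical heart of the proof.
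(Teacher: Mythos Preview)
Your proposal is correct and follows essentially the same approach as the paper, which simply states that the lemma ``is an immediate consequence of Appendix~\ref{app:Grafici} and Section~\ref{sec:Xi}''; you have spelled out the downward induction on $h$, the use of truncated expectations at each single-scale integration, and the combinatorial reorganization into renormalized graphs with their cluster hierarchy that the paper leaves implicit in that one-line reference.
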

The proof is an immediate consequence of Appendix \ref{app:Grafici} and Section \ref{sec:Xi}.

An example of a renormalized graph with its clusters is given in Fig.\ \ref{fig:ClusterRep}:
in Fig.\ \ref{fig:Structure} is represented the same graph with only its maximal clusters.
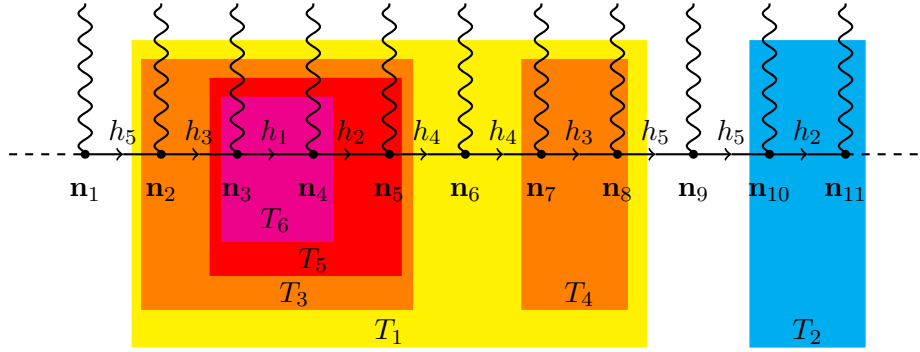
\begin{figure}[h!]
	\begin{center}
\begin{tikzpicture}[thick,scale=0.5]
	\draw[-,color=yellow,fill=yellow] (3.25,-5.1) rectangle (16.75,3);
	\draw[-,color=orange,fill=orange] (3.5,-4.1) rectangle(10.6,2.5);
	\draw[-,color=orange,fill=orange] (13.5,-4.1) rectangle(16.25,2.5);	
	\draw[-,color=red, fill=red] (5.3,-3.2) rectangle(10.3,2.0);
	\draw[-,color=magenta, fill=magenta] (5.6,-2.3) rectangle(8.5,1.5);
	\draw[-,color=cyan, fill=cyan] (19.5,-5.1) rectangle(22.5,3);
	
	\node at(7,-1.75) {$T_6$};
	\node at(8,-2.8) {$T_5$};
	\node at(7.5,-3.7) {$T_3$};
	\node at(15,-3.7) {$T_4$};
	\node at(10,-4.7) {$T_1$};
	\node at(21,-4.7) {$T_2$};

	\draw[dashed] (0,0) -- (2,0);
	\draw[fill] (2,0) circle[radius=0.09]; 
	\draw[snake it] (2,0) -- (2,4);
	\draw[->] (2,0) -- (3,0);
	\draw[-] (3,0) -- (4,0);
	\node at(2,-1) {$\bn_1$};
	\node at(4,-1) {$\bn_2$};
	\draw[fill] (4,0) circle[radius=0.09];
	\draw[snake it] (4,0) -- (4,4);
	\draw[->] (4,0) -- (5,0); 
	\draw[-] (5,0) -- (6,0);
	\node at(6,-1) {$\bn_3$};
	\draw[fill] (6,0) circle[radius=0.09];
	\draw[snake it] (6,0) -- (6,4);
	\draw[->] (6,0) -- (7,0);
	\draw[-] (7,0) -- (8,0);
	
	\node at(8,-1) {$\bn_4$};
	\draw[fill] (8,0) circle[radius=0.09];
	\draw[snake it] (8,0) -- (8,4);
	\draw[->] (8,0) -- (9,0);
	\draw[-] (9,0) -- (10,0);
	
	\node at(10,-1) {$\bn_5$};
	\draw[fill] (10,0) circle[radius=0.09];
	\draw[snake it] (10,0) -- (10,4);
	\draw[->] (10,0) -- (11,0);
	\draw[-] (11,0) -- (12,0);	
	
	\node at(12,-1) {$\bn_6$};
	\draw[fill] (12,0) circle[radius=0.09];
	\draw[snake it] (12,0) -- (12,4);
	\draw[->] (12,0) -- (13,0);
	\draw[-] (13,0) -- (14,0);

	\node at(14,-1) {$\bn_7$};
	\draw[fill] (14,0) circle[radius=0.09];
	\draw[snake it] (14,0) -- (14,4);
	\draw[->] (14,0) -- (15,0);
	\draw[-] (15,0) -- (16,0);

	\node at(16,-1) {$\bn_8$};
	\draw[fill] (16,0) circle[radius=0.09];
	\draw[snake it] (16,0) -- (16,4);
	\draw[->] (16,0) -- (17,0);
	\draw[-] (17,0) -- (18,0);

	\node at(18,-1) {$\bn_9$};
	\draw[fill] (18,0) circle[radius=0.09];
	\draw[snake it] (18,0) -- (18,4);
	\draw[->] (18,0) -- (19,0);
	\draw[-] (19,0) -- (20,0);

	\node at(20,-1) {$\bn_{10}$};
	\draw[fill] (20,0) circle[radius=0.09];
	\draw[snake it] (20,0) -- (20,4);
	\draw[->] (20,0) -- (21,0);
	\draw[-] (21,0) -- (22,0);

	\draw[fill] (22,0) circle[radius=0.09];
	\draw[dashed] (22,0) -- (24,0);
	\draw[snake it] (22,0) -- (22,4);
	\node at(22,-1) {$\bn_{11}$};
	
	\node at(3,0.6) {$h_5$};
	\node at(5,0.6) {$h_3$};
	\node at(7,0.6) {$h_1$};	
	\node at(9,0.6) {$h_2$};
	\node at(11,0.6) {$h_4$};
	\node at(13,0.6) {$h_4$};	
	\node at(15,0.6) {$h_3$};
	\node at(17,0.6) {$h_5$};
	\node at(19,0.6) {$h_5$};
	\node at(21,0.6) {$h_2$};

\end{tikzpicture}
\end{center}

\caption{Graphical representation of a Renormalized graph $\G$: $q=11$, $h_5<h_4<h_3<h_2<h_1$, $h_\G=h_5$. $Q_\Gamma=4$ (with $2$ non-trivial clusters, \textit{i.e.} $T_1$ and $T_2$, and two trivial ones, \textit{i.e.} the points $1$ and $9$). $Q_{T_1}=3$ (with 2 non-trivial clusters $T_3$ and $T_4$ and a trivial one, $v=6$). $T_3$ has two maximal clusters, a trivial one $v=2$ and a non-trivial one $T_5$. $T_5$ has two maximal clusters, a non-trivial one $T_6$ and a trivial one, $n_5$. $T_6$, $T_4$ and $T_2$ have two maximal trivial clusters each.}
\label{fig:ClusterRep}
\end{figure}
Note that a set of clusters can be equivalently represented as a 
\emph{Gallavotti-Nicol\`o} tree, see \textit{e.g.}\ \cite{MastropietroNonPerturbative}.

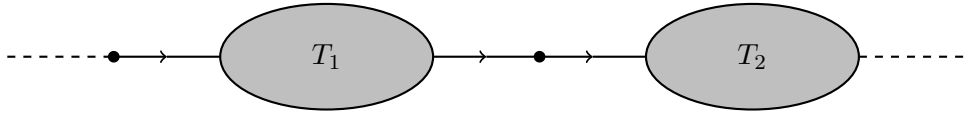
\begin{figure}[!ht]
\begin{center}
\begin{tikzpicture}[thick,scale=0.7]
	\draw[dashed] (0,0)--(2,0);
	\draw[fill]  (2,0) circle[radius=0.09];
	\draw[->] (2,0) --(3,0);
	\draw[-] (3,0)--(4,0);
	\draw[fill=lightgray] (6,0) ellipse (2 and 1); 
	\draw[->] (8,0) -- (9,0);
	\draw[-] (9,0)--(10,0);
	\draw[fill] (10,0) circle[radius=0.09];
	\draw[->] (10,0) --(11,0);
	\draw[-] (11,0) -- (12,0);
	\draw[fill=lightgray] (14,0) ellipse (2 and 1);
	\draw[dashed] (16,0) -- (18,0);
	
	\node at(6,0) {$T_1$};	
	\node at(14,0) {$T_2$};
	
\end{tikzpicture}
\end{center}
\caption{The same graph as in Fig.\ \ref{fig:ClusterRep} with only its maximal clusters represented. Trivial clusters are represented by dots, non-trivial clusters by ellipses.} \label{fig:Structure}
\end{figure}

%
\textcolor{black}{If we consider as first non-trivial cluster $T=\Gamma$ and} we use the above definition we get 
an expression similar to the graphs defined in Section \ref{sec:Xi} with the difference that a)the propagators associated
to the lines $\ell$ are $g^{(h_\ell)}$; b) to each resonant cluster is associated the $\mathcal{R}$
operation; c) the vertices are of type $\nu$ 
or $V$; {\color{black}d) the vertices do not have a $j_v$ index}. In contrast with the expansion in $\l$ seen in Section \ref{sec:Xi}, the renormalized expansion is in $\l$ and
in the running coupling constants $\n_h$.


In the following we denote by $\prod\limits_{T \, \text{n.t.}}=\prod\limits_{\substack{T \in \Gamma \\ T \, \text{non-trivial}}}$.

\subsection{Bounds}
We define
\begin{equation}
\Vert \widehat{\mathcal{V}}_{\bn}^{(h)}\Vert:=\sup_{\kk \in \mathcal{D}_\balpha} \chi_h(\kk) \chi_h(\kk-2 \pi \Omega \bn) |\mathcal{V}_{\bn}^{(h)}(\bk)| \, .
\end{equation}
The following lemma holds. We denote with subscript $l$ the infinite volume limit of a quantity.

\begin{lemma}\label{prop:EstimateResonantClusters}
	Let $\tau:=\min\{\rho_1,\rho_0\}$, take $\gamma > 4^\tau$ and assume that for $h'>h$ one has $|\nu_{h'} |\leq |\lambda|$. Then, there exist $\lambda_0,C>0$ independent of $i$ and $h$ such that, for any $|\lambda| < \lambda_0$ one has
	\begin{itemize}
		\item[(i)] the limit $\widehat{\mathcal{V}}_{\bn,l}^{(h)}(\bk):=\lim_{i \to +\infty} \widehat{\mathcal{V}}_\bn^{(h)}(\bk)$ exists;
		\item[(ii)] for $s=0,1,2$, the following estimates hold
		\begin{eqnarray}\label{eq:ResonantClustersW}
\Vert\partial^s_\kk\mathcal{R}\widehat{\mathcal{V}}_{\bn,l}^{(h)} \Vert \; \hspace{-0.4cm} 
&\leq& \; \hspace{-0.4cm}   \gamma^{ h(1-s)}  C|\lambda|  e^{-\frac{\eta}{4} |\bn|} \, , \\
		|\b_{\n,h}| \hspace{-0.4cm}\; &\le& \hspace{-0.4cm}\; (C\l)^2  \g^{h} \, ,
		\qquad |\b_{a_j,h}|\le  (C\l)^2 \g^{h}	\, .		\label{eq:ResonantClustersW1}
	\end{eqnarray}
	\end{itemize}
\end{lemma}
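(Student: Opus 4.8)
The plan is to prove the lemma by downward induction on the scale index $h$, starting from $h=1$, establishing \textit{(i)}, \textit{(ii)} and, simultaneously, the auxiliary bound $|a_j^{(h)}-a_j^{(2)}|$ small (which gives the hypothesis $\tfrac78 a_j^{(2)}\le a_j^{(h)}\le\tfrac98 a_j^{(2)}$ used in \eqref{ello}) and the preservation along the flow of the symmetry structure \eqref{gra} of $\widehat{\mathcal V}^{(h)}_0$ (so that $\widehat{\mathcal V}^{(h)}_0(0)\propto\sigma_2$ and the reabsorption into $\nu_h$ is consistent). The base case $h=1$ is immediate from Lemma \ref{lem:MassiveIntegration2}, the form of $V^{(1)}$ in \eqref{Gra2} and the assumption $|\nu|\le|\lambda|$. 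For the inductive step one assumes all these statements for every scale $h'>h$ — in particular \eqref{ello} holds for the single-scale propagators $g^{(h')}$, $h'\ge h+1$, which are the only propagators entering the graphs of $\mathcal G^{R,h}_{\bn,q}$, and $|a_j^{(h)}-a_j^{(2)}|$ is small by summing \eqref{rc} — and proves them at scale $h$.

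By Lemma \ref{lem:CaronDimonioConOcchiDiBragia} one has $\widehat{\mathcal V}^{(h)}_\bn(\bk)=\sum_{q\ge1}\sum_{\Gamma\in\mathcal G^{R,h}_{\bn,q}}W_\Gamma(\bk)$, and $\beta_{\nu,h},\beta_{a_j,h}$ are, via \eqref{Betadefinizioni}, the contributions with $q\ge2$ and $h_\Gamma=h$; so the task reduces to bounding $\sum_\Gamma|\partial^s_\bk W_\Gamma(\bk)|$ on the support of $\chi_h(\bk)\chi_h(\bk-2\pi\Omega\bn)$ by $\gamma^{h(1-s)}C|\lambda|e^{-\eta|\bn|/4}$, with $C$ entering only geometrically in $q$. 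First I would fix $\Gamma$ with its cluster hierarchy (a Gallavotti-Nicol\`o tree) and bound $W_\Gamma$ by power counting: each internal line of scale $h_\ell$ costs $\gamma^{-h_\ell}$ (an extra $\gamma^{-h_\ell}$ per derivative landing on it, the worst case being the minimal scale $h_\Gamma$) by \eqref{ello}; each $\nu$-point costs $\gamma^{h_T}|\nu_{h_T}|\le\gamma^{h_T}|\lambda|$; a $V$-point with $\bn_v\ne0$ costs $C|\lambda|e^{-\eta|\bn_v|/2}$ and a resonant $V$-point costs $\mathcal R\widehat{\mathcal V}_0$, bounded by $\gamma^{2h_T}C|\lambda|$ as in Remark \ref{rem:CostumeDaBagno}; and each non-trivial resonant cluster $T$ yields the dimensional gain $\gamma^{-2(h_T-h^{ext}_T)}<1$ produced by the two subtractions in $\mathcal R=1-\mathcal L$, using $\mathcal R\widehat{\mathcal V}_0(0)=0$ and that $f_h,\tilde f_h$ (hence $g^{(h)}$) vanish near $\bk=0$. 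For a non-resonant cluster $T$ there is no $\mathcal R$, and the gain comes from \eqref{dd}: the external propagators of $T$ force $|2\pi\Omega\bn_T|_{\mathbb T}\le\pi\gamma^{h^{ext}_T}$, hence by the Diophantine condition $|\bn_T|\ge c\,\gamma^{-h^{ext}_T/\tau}$, a large quantity since $h^{ext}_T\le h_T-1$; splitting $e^{-\frac\eta4|\bn_T|}$ off $\prod_{v\in T}e^{-\frac\eta2|\bn_v|}$ (legitimate since $\sum_{v\in T}|\bn_v|\ge|\bn_T|$) and using $e^{-x}\le C_D x^{-D}$ converts this into a gain $\le C_D\gamma^{-D(h_T-h^{ext}_T)/\tau}$ for any $D$; the choice $\gamma>4^\tau$ is exactly what makes this strictly dominate the dimensional loss of the cluster for a fixed small $D$, so that every cluster contributes a net gain $\le\gamma^{-2(h_T-h^{ext}_T)}$.

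The standard estimate on the Gallavotti-Nicol\`o tree then gives
\[
\Big(\prod_{\ell\ \mathrm{int}}\gamma^{-h_\ell}\Big)\Big(\prod_{T\ \mathrm{n.t.}}\gamma^{-2(h_T-h^{ext}_T)}\Big)\ \le\ \gamma^{h}\,C^{q}\prod_{\ell\ \mathrm{int}}\gamma^{-(h_\ell-h)/2},
\]
i.e.\ the gains pay for the dimensional loss of the internal lines, leave the overall factor $\gamma^h$ (the scaling dimension of a quadratic term in $d=2$), and leave a residual that is summable over the scale labels and over tree shapes, each contributing a further $C^q$. Combined with $\prod_v\sum_{\bn_v}e^{-\frac\eta8|\bn_v|}\le C^q$, the sums over $i_v\in\{\nu,V\}$ and $j$ (another $C^q$), and keeping $e^{-\eta|\bn|/4}$ overall, one obtains a series $\sum_{q\ge1}(C|\lambda|)^q\gamma^{h(1-s)}e^{-\eta|\bn|/4}$, convergent for $|\lambda|<\lambda_0$; this is \eqref{eq:ResonantClustersW}. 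For \eqref{eq:ResonantClustersW1} one specializes to $q\ge2$, $h_\Gamma=h$: the vertices already supply $(C|\lambda|)^2$, and a non-zero contribution at $\bk=0$ (with at most one derivative) forces some $\bn_v\ne0$ — the $\nu$-$\nu$ and resonant-$V$ pieces vanish there to the relevant order because $f_h\equiv0$ near $0$ and $\mathcal R\widehat{\mathcal V}_0(0)=0$ — so the Diophantine gain applies and supplies the extra $\gamma^h$, giving $|\beta_{\nu,h}|,|\beta_{a_j,h}|\le(C\lambda)^2\gamma^h$. Summing \eqref{rc} geometrically in $\gamma^{-1}$ then keeps $|a_j^{(h)}-a_j^{(2)}|$ small and closes the induction.

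For \textit{(i)}, the only $i$-dependence lies in the lattice $\mathcal D_\balpha$ and in $\omega_{j,i}=p_{j,i}/q_{j,i}\to\omega_j$; since the best approximants inherit \eqref{dd} with constants uniform in $i$, every graph value converges as $i\to\infty$ to the value built from $\omega_j$, and the bounds above are uniform in $i$ and summable, so dominated convergence gives existence of $\widehat{\mathcal V}^{(h)}_{\bn,l}(\bk)$ with the same estimates. The step I expect to be the real obstacle is the power counting: showing that the pile-up of small divisors $\gamma^{-h_\ell}$ along the chains does not generate $q!$-type growth. This is the fermionic-RG analogue of the Siegel-Bryuno bound; the delicate points are that resonant clusters receive no Diophantine help and must be tamed purely by the two subtractions of $\mathcal R$ together with the structural facts $\widehat{\mathcal V}_0(0)\propto\sigma_2$ and $f_h\equiv0$ near $\bk=0$, and that the quantitative matching of the Diophantine gain against the dimensional loss is precisely what forces $\gamma>4^\tau$.
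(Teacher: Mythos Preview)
Your proposal follows the same approach as the paper: downward induction on $h$; the cluster/tree expansion of $W_\Gamma$; power counting with the $\mathcal R$-gain $\gamma^{2(h^{ext}_T-h_T)}$ for resonant clusters and a Diophantine gain for non-resonant ones; summation over scales and over $\bn_v$; and, for \eqref{eq:ResonantClustersW1}, the observation that at $\bk=0$ the support of $g^{(h)}$ forces $M_\Gamma\ge1$, which supplies the extra $\gamma^h$. The existence of the limit in \textit{(i)} is also handled in the paper by comparing finite-$i$ and limiting graph values term by term, using $|\omega_{j,i}-\omega_j|\le C/q_{j,i}^2$ and the uniform bounds.

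One step is stated imprecisely and, as written, would not close. You propose to ``split $e^{-\frac\eta4|\bn_T|}$ off $\prod_{v\in T}e^{-\frac\eta2|\bn_v|}$'' for every non-resonant cluster $T$ and claim a gain $\gamma^{-D(h_T-h^{ext}_T)/\tau}$. Two problems: first, a single vertex $v$ can lie in an unbounded chain of nested non-resonant clusters (as many as $|h|$), and each split consumes a fixed fraction of the $|\bn_v|$-decay, so after $O(|h|)$ splits nothing remains and you lose summability in $\bn_v$. Second, from $e^{-c|\bn_T|}$ and $|\bn_T|\ge C_0\gamma^{-h^{ext}_T/\tau}$ one gets a factor controlled by $\gamma^{h^{ext}_T}$, not by the scale jump $h_T-h^{ext}_T$. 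The paper's fix is the telescopic decomposition $e^{-\frac\eta8|\bn_v|}=\prod_{k\le0}e^{-2^k\frac\eta{16}|\bn_v|}$: one assigns to each non-trivial $T$ the factor at level $k=h^{ext}_T$, obtaining $\prod_v e^{-\frac\eta8|\bn_v|}\le\prod_{T\ \mathrm{n.t.}}e^{-\frac\eta{16}2^{h^{ext}_T}|\bn_T|}$; for non-resonant $T$ the Diophantine bound then gives $e^{-\zeta\tilde\gamma^{-h_T}}$ with $\tilde\gamma:=\gamma^{1/\tau}/2$, and the hypothesis $\gamma>4^\tau$ is exactly what makes $\tilde\gamma>1$ so that $e^{-\zeta\tilde\gamma^{-h_T}}\le C\gamma^{3h_T}$ per maximal non-resonant subcluster. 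The subsequent cluster identities (the paper's \eqref{abracadabra} and \eqref{bibidibobidibu}) then convert everything to the summable form $\prod_T\gamma^{h^{ext}_T-h_T}$ that you wrote schematically. With this correction your outline matches the paper's proof.
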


{\color{black} The bounds \eqref{eq:ResonantClustersW} and \eqref{eq:ResonantClustersW1} are obtained by estimating the value of the graphs in \eqref{eq:Wstortoq} and \eqref{Betadefinizioni}. For clarity, we write in Remark \ref{rem:ConcreteExample} below and example: we show how the general procedure works in the particular case of the graph in Figure \ref{fig:ClusterRep}.}

\begin{proof}
$W_{\G,l}$ is obtained by 
$W_\G$ replacing $\O_i$ with $\O$, considering $\bn_v\in \mathbb{Z}^2$
and $\kk\in [-\pi,\pi)^2$. First, we show that we can multiply \textcolor{black}{by} $\chi_\Gamma$, \textit{i.e.}\ we show that
\begin{equation}
 \chi_h(\kk) \chi_h(\kk-2 \pi \Omega \bn)
W_{\G,l}(\bk)= \chi_h(\kk) \chi_h(\kk-2 \pi \Omega \bn)
\chi_\G W_{\G,l}(\bk) \label{appop}
\end{equation}
where $\chi_\G=1$ if, for any non-resonant cluster $T$ in $\Gamma$, it is true that
\begin{equation}
|\bn_T| \; \geq \; C_0 \gamma^{-\frac{h^{\text{ext}}_T}{\tau}}\label{lem}
\end{equation}
and $\chi_\Gamma=0$ otherwise.
Indeed if $\bk_{\text{in}}$ and $\bk_{\text{out}}$ are the momenta associated
to the external lines of $T$, then by the compact support properties of $g^{(h)}$'s or $\chi_h$,  
$|\bk_{\text{in}}|_{\mathbb{T}}\le {\pi\over 2} \gamma^{h^{ext}_T} $ and $|\bk_{\text{out}}|_{\mathbb{T}}
\le {\pi\over 2} \gamma^{h^{ext}_T}$ (note that $h\le h^{ext}_T$). Therefore
\begin{equation}
|\bk_{\text{in}}-\bk_{\text{out}}|_{\mathbb{T}} \; \le |\bk_{\text{in}}|_{\mathbb{T}} +|\bk_{\text{out}}|_{\mathbb{T}} \le 
2 {\pi\over 2}\gamma^{h^{ext}_T}
\end{equation}
and by the Diophantine condition \pref{dd} we get{\color{black}
\begin{equation}	
\begin{split}
2 {\pi\over 2} \gamma^{h^{ext}_T} &\geq 
|\bk_{\text{in}}-\bk_{\text{out}}|_{\mathbb{T}}=
2 \pi \min_{m_0,m_1 \in \mathbb{Z}}\sqrt{
 (\omega_1 n_1 -m_1)^2+ (\omega_0 n_0-m_0)^2}
\\
&\geq \max_{j=0,1} c_j |n_j|^{-\rho_j} \geq \; \max(c_1,c_0) |\bn_T|^{-\min(\rho_1,\rho_0)}
\end{split}
\end{equation}}
hence the l.h.s.\ of \pref{appop} is vanishing if for \textcolor{black}{at least one non-resonant $T$, \pref{lem} is not true.}

The proof proceeds then 
by induction. First, notice that the first step is a straightforward consequence of Lemma \ref{lem:MassiveIntegration2}. By the inductive step, let us assume that \pref{eq:ResonantClustersW} and 
\pref{eq:ResonantClustersW1} hold for any scale $2,\dots,h+1$ and we prove that they hold at scale $h$. First of all 
by \pref{eq:ResonantClustersW1} we get
\begin{equation}
|a_{j}^{(h)}-a_{j}^{(2)}|\le C^2 \l^2 \sum_{k=h+1}^{2} \g^k\le C^2 \l^2 {\gamma^3\over \gamma-1}
\end{equation}
hence for $C^2 \l^2 \frac{\gamma^3}{\gamma-1}<\min_j \frac{1}{8}a_{j}^{(2)} $ we get  $\frac{7}{8} a_{j}^{(2)}\le a_{j}^{(h)}\le \frac{9}{8} a_{j}^{(2)}$;
this implies, for $s=0,1,2$ \pref{ello}.
To estimate the quantities appearing in \eqref{assoGatto}, we recall that from Lemma \ref{lem:MassiveIntegration2}, there exists a constant $C_2$ independent of $i$, such that
 $\Vert\partial^s_\bk \widehat{\mathcal{V}}_\bn\Vert\le C_2|\l| e^{-\h/2 |\bn|}$, and from \pref{ello} there exists a constant $C_1$ independent from $i$ and $h'$ such that $\Vert\partial^s_\bk g^{(h')} \Vert \leq C_1 \gamma^{-h'(1+s)}$. {\color{black}Moreover, by Remark \ref{rem:CostumeDaBagno}, we can estimate resonant $V$ vertices as $|\mathcal{R} \hat{V}_0 | \leq |\lambda|  	\gamma^{h_T}$ (see also Remark \ref{rem:Caligola} below).} Thus,
\begin{equation}\begin{split}\label{asso1QQQQQ}
\Vert\partial^s_\kk\mathcal{R}&\chi_\G W_{\G,l}(\bk) \Vert	
 \leq (c C_1 C_2)^q \g^{-s h}
|\l|^q\times\\
&\left(\prod_{v } e^{-\frac{\eta}{2}|\bn_v|}\right)
\left( \prod_{T\, \text{n.t.}} \g^{-h_T (M_T+R_T-1)}
\right) 
\Bigg(\prod_{\substack{T\, \text{n.t.} \\\bn_T=0}} \gamma^{2(h_{T}^{\text{ext}}-h_{T})}\Bigg)\prod_{T  \, \text{n.t.}} \gamma^{h_{ T} M^\n_{T}} \, .
\end{split}
\end{equation}
where $c=9$ counts the number of derivatives produced by $\mathcal{R}$ or $\partial^s$,
the factor $\gamma^{2(h_{T}^{\text{ext}}-h_{T})} $ is the result of the application of the $\mathcal{R}$ operation described in Appendix \ref{appendix:LemmaGG1}
and $\g^{-h_T (M_T+R_T-1)}$ comes from the product of propagators.
We can write
\begin{equation}\label{Samurai}
\prod_{v } e^{-\frac{\eta}{2} |\bn_v|}\le  e^{-{\eta\over 4} |\bn|}
\left(\prod_{v } e^{- {\eta\over 8}|\bn_v|}\right)\left(\prod_{v } e^{-{\eta\over 8} |\bn_v|}\right) \, ,
\end{equation}
and
$e^{- {\eta\over 8}|\bn_v|}=\prod\limits_{h=-\io}^0 e^{- 2^h{\eta\over 16}|\bn_v|}$ so that 
\begin{equation}\label{eq:436}
\prod_{v } e^{-{\eta\over 8} |\bn_v|}\le \prod_{
T\, \text{n.t.}}
e^{-{\eta\over 16 } 2^{h_T^{ext}}|\bn_{T}|} \, .
\end{equation}
The presence of $\chi_\Gamma$ guarantees that when $\bn_T \neq 0$, the estimate \eqref{lem} holds and the assumption $\gamma > 4^\tau$ ensures that $\tilde{\gamma}:=\frac{\gamma^{1/\tau}}{2}>1$. Therefore,
\begin{equation}\label{eq:Sambaaa}
\prod_{v } e^{-{\eta\over 8} |\bn_v|}\le\left\{\begin{array}{lcl} e^{-\zeta \tilde{\gamma}^{-h}} \prod\limits_{T\, \text{n.t.}}
e^{-\zeta M_T \tilde\gamma^{-h_{T}} }& \quad & \text{if $\bn_\G\neq 0$} \\
& & \\
\prod\limits_{T\, \text{n.t.}}
e^{-\zeta {M}_{T} \tilde\gamma^{-h_{T}}} &\,&  \text{if $\bn_\G=0$}
\end{array}\right. \, .
\end{equation}
with $\z={\h\over 16}C_0$ a constant independent of $i$ and $h$.
We get therefore
\begin{equation}\label{eq:Intermediate_Before}
\begin{split}
&\Vert\partial^s_\kk\mathcal{R}\chi_\G W_{\G,l}\Vert	
 \leq (c C_1C_2)^q\g^{-s h} |\lambda|^q f_{\text{ext}} e^{-{\eta\over 4} |\bn|}
\left(\prod_{v } e^{- {\eta\over 8}|\bn_v|}\right)
 \left(\prod_{T\, \text{n.t.}} \gamma^{-{h_{T}}(M_{T}+R_{T}-1)} \right)\times
\\
&\qquad\quad\times\Bigg(\prod_{\substack{T\, \text{n.t.}\\\bn_T=0}} \gamma^{2(h_{T}^{\text{ext}}-h_{T})} \Bigg)
\left(\prod_{T\,\text{n.t.}}
e^{-\zeta {M}_{T} \tilde\gamma^{-h_{T}}}\right)
\prod_{T\,\text{n.t.}}  \g^{h_{T} M_{T}^\n} \, ,
\end{split}
\end{equation}
where 
\begin{equation}
	f_{\text{ext}}:=\left\{\begin{array}{lcl} e^{-\zeta \tilde{\gamma}^{-h_T^{ext}}}  & \quad & \text{if $\bn_T\neq 0$} \\
1 & & \text{if $\bn_T=0$}
\end{array}\right. \, .
\end{equation}
\textcolor{black}{Using that for any $M \in \mathbb{N}$, one has $e^{-\zeta \tilde{\gamma}^{-h_T}} \leq \gamma^{-M} \left(\frac{M \ln \gamma}{\zeta} \right)^{M \ln \gamma} \gamma^{M h_T}$ (this is a consequence of the bound  $e^{-\a x} x^M\le ({M\over \a})^M e^{-M}$)}
and $\sum_{T\,\text{n.t.}} M_T \leq 4 q$, we can bound
\begin{equation}
\prod_{T\,\text{n.t.}}
e^{-\zeta M_{T} \tilde{\gamma}^{-h_{T}}} \le  C_3^q
\left(\prod_{T\,\text{n.t.}} \gamma^{2 h_{T} M_{T}} \right)
\prod_{T\,\text{n.t.}} \gamma^{h_{T} M^I_{T}}\label{asxq}
\end{equation}
\textcolor{black}{by setting $M=3$} and with $C_3=\gamma^{-12} \left(\frac{3 \ln \gamma}{\zeta} \right)^{12 \ln \gamma}$. We bound $M_T$ with $M_T^I$, that is the number of non resonant maximal trivial clusters.
Therefore
\begin{equation}
\left(\prod_{T\,\text{n.t.}} \gamma^{-h_{T}  M_{T} } \right)
\prod_{T\,\text{n.t.}}
e^{-\zeta M_{T} \tilde\gamma^{-h_{T}}} \le C_3^q\left(\prod_{T\,\text{n.t.}} \gamma^{ h_{T} M_{ T}} \right)\prod_{T\,\text{n.t.}} \gamma^{h_{ T} M_{T}^I} \, 
\end{equation}
and
\begin{equation}\label{eq:Intermediate_Proof}
	\begin{split}
\Vert\partial^s_\kk\mathcal{R}\chi_\G W_{\G,l} \Vert&\leq (c C_1 C_2 C_3)^q\g^{-s h} |\lambda|^q f_{\text{ext}} e^{-{\eta\over 4} |\bn|}
\left(\prod_{v } e^{- {\eta\over 8}|\bn_v|}\right) \times\\
& \hspace{-1.4cm}
\times \left(\prod_{T\,\text{n.t.}} 
\gamma^{-{h_{T}}(R_{ T}-1)
} \right) \Bigg(\prod_{\substack{T\, \text{n.t.} \\ \bn_T=0} }\gamma^{2(h^{\text{ext}}-h_{T})} \Bigg) \left( \prod_{T\,\text{n.t.}} \gamma^{h_T M_{T}} \right)
\prod_{T\,\text{n.t.}} \gamma^{h_{T}(M^I_{T}+M^\nu_{T})  }.
	\end{split}
\end{equation}
Finally, using that $R_{T}=M_T^\n+R_T^{\text{n.t.}}$ where $R_T^{\text{n.t.}}$
is the number of non-trivial resonant maximal clusters in $T$ we get
\begin{equation}\label{abracadabra}
\Bigg(\prod_{T\,\text{n.t.}} \gamma^{-{h_{T}}(R_{T}-1)
} \Bigg)
\Bigg(\prod_{\substack{T\, \text{n.t.} \\ \bn_T=0}} \gamma^{h_{T}^{\text{ext}}-h_{T}} \Bigg) 
\prod_{T\,\text{n.t.}} \gamma^{h_{T} M^\nu_{ T} }
\le \g^{\e_\G h} 
\end{equation}
with $\e_\G=1$ if $\bn_\G=0$ and $\e_\G=0$ otherwise.
Equation \eqref{abracadabra} follows from the fact that 
\begin{equation}
	\begin{split}
 \Bigg(\prod_{T\,\text{n.t.}} \gamma^{-h_{T}R_T}\Bigg) & \Bigg(
\prod_{\substack{T\, \text{n.t.} \\ \bn_T=0}} 
\gamma^{h_{T}^{\text{ext}}}\Bigg)\prod_{T\,\text{n.t.}} \gamma^{h_{T} M^\nu_{ T} }\\ &=\g^{\e_\G h} 
\Bigg(\prod_{\substack{T\, \text{n.t.} \\ \bn_T=0, \, T\neq\G}} 
\gamma^{h_{T}^{\text{ext}}}\Bigg)
\Bigg(\prod_{T\,\text{n.t.}} \gamma^{-{h_{T}}R_{T}
}\Bigg)\prod_{T\,\text{n.t.}} \gamma^{h_{T} M^\nu_{T} }=\g^{\e_\G h} 
\end{split}
\end{equation} 
and, moreover
\begin{equation}
 \prod_{T\,\text{n.t.}} \gamma^{h_{T}}
\prod_{\substack{T\, \text{n.t.}\\ \bn_T=0}} \gamma^{-h_{T}} \label{all1}  \le 1 \, .
\end{equation}
We define
\begin{equation}\label{defftildee}
	\tilde{f}_{ext}:=\left\{\begin{array}{lcl} e^{-\zeta \tilde{\gamma}^{-h}} & \qquad& \text{if $\bn_\G\neq 0$} \\
\gamma^{h} & & \text{if $\bn_\G=0$}
\end{array}\right. \, .
\end{equation} 
Inserting \eqref{abracadabra} and \eqref{defftildee} in \eqref{eq:Intermediate_Proof}, we get
\begin{equation}\label{eq:QuasiFin}
\begin{split}
\Vert\partial^s_\kk\mathcal{R}\chi_\G W_{\G,l}\Vert
 &\leq \g^{-s h} (c C_1C_2 C_3)^q |\lambda|^q \tilde{f}_{ext} e^{-{\eta\over 4} |\bn|}
\left(\prod_{v } e^{- {\eta\over 8}|\bn_v|}\right)\times\\
&
\times\Bigg(\prod_{\substack{T\, \text{n.t.} \\ \bn_T=0}} \gamma^{h_{T}^{\text{ext}}-h_{T}}\Bigg) \left(\prod_{T\,\text{n.t.}} \gamma^{h_T M_T}\right)
\prod_{T\,\text{n.t.}} \gamma^{h_T M^I_{T}  } \, .
\end{split}
\end{equation}
{\color{black}We use the inequality $e^{-\zeta \tilde{\gamma}^{-h}} \leq \gamma^{-1} \left(\ln \gamma \over \zeta \right)^{\ln \gamma} \gamma^h$, and we call $C_4:=\max\left\{1,\gamma^{-1} \left(\ln \gamma \over \zeta \right)^{\ln \gamma}\right\}$. If $\tilde{T}\subset T$ is maximal and $T$  is non-resonant we have $\g^{h_T}=\g^{h^{\text{ext}}_{\tilde T}}\le
\g^{h^{\text{ext}}_{\tilde T}-h_{\tilde T}}$ and therefore}
\begin{equation} \label{bibidibobidibu}
\tilde{f}_{ext}\Bigg(\prod_{\substack{T\, \text{n.t.}\\ \bn_T=0}} \gamma^{h_{T}^{\text{ext}}-h_T}\Bigg)\prod_{T\, \text{n.t.}} \gamma^{h_{T} M_{T}}
\le C_4 \g^{h} \prod_{T\, \text{n.t.}} \gamma^{h_{T}^{\text{ext}}-h_{T}} \, .
\end{equation}
Inserting \eqref{bibidibobidibu} in \eqref{eq:QuasiFin}, {\color{black}bounding the factor $\prod \g^{h_T M_T^I}$ in \eqref{eq:QuasiFin} by a constant,} we get
\begin{equation}\label{Magicabula}
\Vert\partial^s_\kk\mathcal{R}\chi_\G W_{\G,l} \Vert			
 \;  \leq \;  \gamma^{ h(1-s)} (c C_1C_2C_3C_4)^q|\lambda|^q e^{-{\eta\over 4} |\bn|} \left( \prod_{v } e^{-\frac{\eta}{8} |\bn_v|} \right)
		 \prod_{T\, \text{n.t.}} \gamma^{h_{T}^{\text{ext}}-h_{T}} \, .
\end{equation}

The sum over $\G$ consists in the sum over the label $\bn_v$
associated to the vertices and the sum over the scales. We use that
\begin{equation}\label{SommareAscensori}
\prod_v \sum_{\bn_v}  e^{- {\eta\over 8}|\bn_v|}\le\prod_v 4 \Bigg(\sum_{n \geq 0} e^{-\frac{\eta}{8} n}\Bigg)^2 \leq 4^{q}(1-e^{- {\eta\over 8}})^{-2q} \, .
\end{equation}
The sum over the scale labels of the lines, $h_\ell$ can be controlled by summing over the scales of non-trivial clusters and keeping only the constraint that, for each non-trivial cluster, $h_T^{\text{ext}}<h_T$:
\begin{equation}
\sum_{\{h_\ell\} } \prod_{T\,\text{n.t.}} 
\gamma^{h_{T}^{ext}-h_{T}}= \prod_{T\,\text{n.t.}}  \sum_{h_T> h_{T}^{ext} }\gamma^{h_{T}^{ext}-h_{T}} \leq \Big(\sum_{r>0} \gamma^{-r}\Big)^{\sum_{T} M_T} \le \Big(\frac{1}{\gamma-1}\Big)^{4q} \, ,
\label{apro} 
\end{equation}
where we used again $\sum_{T} M_T \leq 4 q$. Inserting \eqref{SommareAscensori} and \eqref{apro} in \eqref{Magicabula}, we get
\begin{equation}
\Vert\partial^s_\kk\mathcal{R}\chi_\G W_{\G,l} \Vert
\leq \g^{(1-s) h} e^{-{\eta\over 4} |\bn|} |\lambda|^q \bar C^q \, ,
\end{equation}
with $\bar{C}:=4 c C_1 C_2 C_3 C_4 \frac{(1-e^{-\frac{\eta}{8}})^{-2}}{(\gamma-1)^4}$ a constant independent on $i$ and $h$. The sum over $q$ is convergent if $|\lambda| <\bar{C}$, therefore, if $|\lambda| \leq \frac{\bar{C}}{2}$ one gets 
\begin{equation}\label{eq:PastaFagioliETonno}
\Vert \partial_\bk^s \mathcal{R}\widehat{\mathcal{V}}_{\bn,l}^{(h)} \Vert \leq \sum_{q=1}^{+\infty}\textcolor{black}{\sum_{\G \in \mathcal{G}_{\bn,q}}}
\Vert\partial^s_\kk\mathcal{R}\chi_\G W_{\G,l} \Vert
\leq \g^{(1-s) h} 2 \bar C |\lambda| e^{-{\eta\over 4} |\bn|} \, .
\end{equation}
To estimate $\beta_{\nu,h}$ and $\beta_{a_j,h}$,
we have to bound  $W_\G(0)$ for
$\Gamma \in \mathcal{G}^{R,h}_{0,q}$, with $h_\Gamma=h$ (see \eqref{Betadefinizioni}). In this case we have to consider only the case $q\ge 2$, since the sums in \eqref{Betadefinizioni} start from $q=2$. Moreover,
there must be at least two maximal non-resonant clusters in $\Gamma$, therefore $M_\Gamma \geq 1$. Indeed, if this was not the case, then there must be an internal line with $\bk_\ell=0$, implying $g^{(h_\ell)}(\bk_\ell)=0$ by the support properties of $g^{(h)}$'s, which yields $W_\Gamma(0)=0$. Thus, in particular, we must have $M_\Gamma \neq 0$ and $q \geq 2$.

One can repeat the same argument used to estimate $\partial^s_\bk \mathcal{R} \chi_\Gamma W_\Gamma$ with the following difference. By construction, $\Gamma$ is a resonant cluster on which no $\mathcal{R}$ operator acts. Therefore, analogously to \eqref{eq:Intermediate_Before}, one obtains
\begin{equation}
\begin{split}
	|\beta_{\nu,h}| &\leq \sum_{q=2}^\infty \sum_\G  (c C_1 C_2)^q |\l|^q \Bigg( \prod_v e^{-\frac{\eta}{8}|\bn_v|} \Bigg) \Bigg(\prod_{T \, \text{n.t.}} \gamma^{-h_T(M_T+R_T-1)} \Bigg) \\
	&\qquad\times\Bigg(\prod_{\substack{T \, \text{n.t.} \\ \bn_T=0 \\ T \neq \Gamma}} \gamma^{2(h_T^{\text{ext}}-h_T)} \Bigg)\Bigg(\prod_{T \, \text{n.t.}} e^{-\zeta M_T \tilde{\gamma}^{-h_T}} \Bigg)\prod_{T \, \text{n.t.}} \gamma^{h_T M_T^\n} \, .
\end{split}
\end{equation}
Using that $M_\Gamma \neq 0$ and $h_\Gamma=h$, one can replace \eqref{asxq} with
\begin{equation}
	\prod_{T \, \text{n.t.}} e^{-\zeta M_T \tilde{\gamma}^{-h_T}} \leq \gamma^{2h} \tilde{C}_3^q\left(\prod_{T\,\text{n.t.}} \gamma^{2 h_{T} M_{T}} \right)
\prod_{T\,\text{n.t.}} \gamma^{h_{T} M^I_{T}} \label{asx}
\end{equation}
where $\tilde{C}_3=\max\left\{C_3,\gamma^{-20} \left(\frac{5 \ln \gamma}{\zeta}\right)^{20 \ln \gamma}\right\}$.

Moreover,
\begin{equation}
\Bigg(\prod_{T\,\text{n.t.}} \gamma^{-{h_{T}}(R_{T}-1)
}\Bigg)
\Bigg(\prod_{\substack{T\, \text{n.t.}\\ \bn_T=0}} \gamma^{h_{T}^{\text{ext}}-h_{T}} \Bigg)
\prod_{T\,\text{n.t.}} \gamma^{h_{T} M^\nu_{T} }
\le 1 \, .
\end{equation}
The sum over the graphs is done in the same exact way, with regard to the effective potential. To sum over $q$, one first notices that we have no graphs with $q=1$ and therefore the sum starts from $q=2$, and then one proceeds obtaining, for $|\lambda| \leq \frac{\bar C'}{2}$,
\begin{equation}
|\b_{\nu,h}|\le \g^h \sum_{q=2}^\infty \l^q (\bar C')^q\le 2 (\bar C')^2 \l^2\g^h
\end{equation}
for a constant $\bar C'$ independent of $i$ and $h$. With the exactly same argument one proves
\begin{equation}
	|\beta_{a_j,h}| \leq 2 (\bar C')^2 \lambda^2 \gamma^h \, .
\end{equation}
We can therefore choose $C=\max\{2\bar C, 2\bar C'\}$ so that
\eqref{eq:ResonantClustersW} and \eqref{eq:ResonantClustersW1} hold.
Moreover,
\begin{equation}
	\l_0=\min\Bigg\{\frac{1}{C}, \frac{1}{C}\sqrt{\frac{\gamma-1}{8 \gamma^3} \min_j a_j^{(2)}}\Bigg\}
\end{equation}
 so that the inductive step is proved.

{\color{black}
It remains to prove the existence of the limit $i\to +\infty$ where $i$ is the index of the box side $L_i$ introduced in point (iii) after \eqref{eq:UnoPuntoTre}.
the expression obtained replacing $L_i$ with $\infty$ and $\o_i$ with $\o$ is finite.}

Let us denote with $\bar{L}_{i}:=\min\{L_{0,i},L_{1,i}\}$. Define for shortness of notation $\bk(t)=(\bk-\bk_i)t+\bk_i$ where $\bk \in [-\pi,\pi)^2$ and $\bk_i \in \mathcal{D}_{--}$ with $|\bk-\bk_i| \leq \frac{2 \pi}{\bar{L}_i}$, and $\Omega(t):=(\Omega-\Omega_i)t+\Omega_i$ and $\chi_h(\bk,\Omega)=\chi_h(\bk)\chi_h(\bk-2 \pi \Omega \bn)\chi_\Gamma$, then let us consider the term with $\bn\neq 0$ and $s=0$. One has
\begin{equation}\label{eq:Sardine2}
	\begin{split}
		\Vert  \widehat{\mathcal{V}}_{l,\bn}(\bk)&- \widehat{\mathcal{V}}_\bn(\bk_i)\Vert \leq \sum_{q=1}^{+\infty}\sum_{\Gamma \in \mathcal{G}_{\bn,q}}\Vert W_\Gamma(\bk,\Omega)-W_{\Gamma,l}(\bk_i,\Omega_i) \Vert \\
		&=\sum_{q=1}^{+\infty} \sum_{\Gamma \in \mathcal{G}_{\bn,q}} \left\Vert \int_0^1 \frac{d}{dt} \left[\chi_h(\bk(t),\Omega(t)) W_\Gamma(\bk(t),\Omega(t))\right] \, dt \right\Vert \, .\\
\end{split}
\end{equation}
By the Leibnitz rule there are three terms: one in which there is a difference $\bk-\bk_i$ which can be estimated using the same argument of eq.\ \eqref{asso1QQQQQ}-\eqref{eq:PastaFagioliETonno} with an additional term $\frac{2 \pi \gamma^{-h}}{\bar{L}_i}$. Therefore, at the end, this is bounded by $C |\lambda| e^{-\frac{\eta}{4}{|\bn|}} \frac{1}{\bar{L}_i}$. 

In the second term, the derivative can act either on a vertex or on a propagator producing terms that can be estimated as $|\Omega_i-\Omega||\bn_v|\gamma^{-h}$. Then, the procedure to estimate the sum is again similar to \eqref{asso1QQQQQ}-\eqref{eq:PastaFagioliETonno} but in the sum over $\bn$ \eqref{SommareAscensori} one sums $\sum_{n\geq 0} (|n|+1) e^{-\frac{\eta}{8}n}$ to absorb the term $|\bn_v| \leq \prod_{v'} |\bn_{v'}|$. One then uses that $|\Omega_i-\Omega| \leq \frac{C}{\bar{L}_i^2}$ because the sizes of the lattice are the best approximants of the Diophantine numbers $\omega_{0}$ and $\omega_1$ (see Section IV.7 in \cite{D}). Therefore, the second term can be estimated as $C |\lambda| e^{-\frac{\eta}{4}|\bn|} \frac{1}{\bar{L}_i^2}$.

To estimate the third term we use the same procedure from \eqref{asso1QQQQQ} to \eqref{eq:PastaFagioliETonno} and the fact that either $|\bk_i-\bk| \leq 2\pi/\bar{L}_i$ or $|\Omega_i-\Omega| \leq C/\bar{L}_i^2$.

The last term involves graphs with at least a vertex with $\bn_v \geq \bar{L}_i$ and this is $O(e^{-\bar{L}_i})$.

Therefore, there exists a $\lambda_0>0$ and $C>0$ independent of $h$ such that, for any $\l<\l_0$ one has
\begin{equation}
\Vert  \widehat{\mathcal{V}}_{l,\bn}(\bk)- \widehat{\mathcal{V}}_\bn(\bk_i)\Vert \leq \frac{C}{\bar{L}_i} \, .
\end{equation} 
This implies the existence of the limit.
\end{proof}

\begin{remark}\label{rem:ConcreteExample}
Take the Graph in Fig.\ \ref{fig:ClusterRep} and consider the case in which the only resonant cluster is $T_2$. We repeat the argument of Lemma \ref{prop:EstimateResonantClusters}, applied to this graph only, in order to clarify the procedure. One has,
\begin{equation}
	\begin{split}
		\prod_{\ell \in \Gamma} &|g^{(h_\ell)}(\bk_\ell)| \leq C_1^{10} \g^{-h_5} \g^{-h_3} \g^{-h_1} \g^{-h_2} \g^{-2 h_4} \g^{-h_3} \g^{-2h_5} \g^{-h_2} \\
		&=C_1^{10} \g^{-h_4(S_{T_1}-1)} \g^{-h_2(S_{T_2}-1)}\g^{-h_3(S_{T_3}-1)}\g^{-h_3(S_{T_4}-1)}\g^{-h_2(S_{T_ 5}-1)}
\g^{-h_5(S_{\G}-1)}
	\end{split}
\end{equation}
with $Q_{T_1}=3$, $Q_{T_2}=2$, $Q_{T_3}=2$, $Q_{T_4}=2$, $Q_{T_5}=2$, $Q_{T_\G}=3$;
moreover
the action of the $\mathcal{R}$ operator on $T_2$ produces a factor $\g^{2(h_5-h_2)}$, in agreement with \eqref{asso1QQQQQ}.
\end{remark}

{\color{black}
\begin{remark}\label{rem:Caligola}
Note that in \eqref{Magicabula} we have bounded the factor $\prod \g^{h_T M_T^I}$ in \eqref{eq:QuasiFin} with a constant,  and an extra $\g^{h_T}$ coming from the analysis of Remark \ref{rem:CostumeDaBagno} has been estimated by a constant before \eqref{asso1QQQQQ}. Such terms will be used after \eqref{eq:DaRiferirsiQua} in the proof of Lemma \ref{lem:Sandokan} and in the proof Corollary \ref{cor:Ktildone}.
\end{remark}
}

\subsection{The choice of the counterterm}
\label{subsec:Counterterms}

In Lemma \ref{prop:EstimateResonantClusters} we have proved the convergence of the expansion considering $\n_h$ as parameters and provided
that $\n_h$ are small enough. $\n_h$ are determined recursively by \pref{rc} starting from the initial value $\n$ which is a free parameter;
we show that there exists a unique choice of $\n$ so that 
$\n_h$ is bounded uniformly in $h$. We impose the condition $\n_{-\io}=0$ choosing $\n$ verifying
$\b_{\n,h}=\b_{\n,h} (\n_h,\nu_{h+1},\dots,\n;\l)$
\begin{equation}
\n=-\sum_{k=-\io}^{2} \g^k \b_{\n,k}(\n_k,\nu_{k+1},\dots,\n;\l)
\end{equation}
from which
\begin{equation}
\n_h=-\sum_{k=-\io}^h \g^{k-h}\b_{\n,k}(\n_k,\nu_{k+1},\dots,\n;\l)\label{all2}
\end{equation}
and we want to show that \pref{all2} has a solution.

We define the Banach space $\mathcal{M}$ of sequences $\underline\n=\{\n_k\}_{k\le 2}$ with 
norm $\Vert \n \Vert_{\mathcal{M}}:=\sum_{k\le 2}|\n_k|\g^{-k/4}\gamma^{1/2}$ and we consider the ball $\mathcal{B}\subset \mathcal{M}$ of sequences $\underline{\nu}$ such that $\Vert\n\Vert_{\mathcal{M}}\le |\l|$.
We define the 
map $T:\mathcal{M}\to \mathcal{M}$ as 
\begin{equation}
T(\underline\n)_h=-
\sum_{k=-\io}^h \g^{k-h}\b_{\n,k}(\n_k,\nu_{k+1},\dots,\n;\l) \, .
\end{equation}
Therefore, \pref{all2} can be rewritten as 
\begin{equation}\label{eq:ContractionNU}
(\underline\n)_h=T(\underline\n)_h \, .
\end{equation}
\begin{lemma}\label{lem:lemmaGigio} For $|\l|\le\l_0$, $T:\mathcal{B}\to \mathcal{B}$ is a contraction.
\end{lemma}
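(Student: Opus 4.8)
The plan is to prove the statement with the Banach fixed point theorem, so the work reduces to two estimates for $\underline{\n},\underline{\n}'\in\mathcal{B}$: the stability bound $\Vert T(\underline{\n})\Vert_{\mathcal{M}}\le|\l|$ (i.e.\ $T(\mathcal{B})\subseteq\mathcal{B}$) and the Lipschitz bound $\Vert T(\underline{\n})-T(\underline{\n}')\Vert_{\mathcal{M}}\le\theta\,\Vert\underline{\n}-\underline{\n}'\Vert_{\mathcal{M}}$ with $\theta<1$. The space $\mathcal{M}$ is a weighted $\ell^1$-space, hence complete, and $\mathcal{B}$ is a closed ball, hence complete as well, so these two estimates suffice (and they also give the unique solution of \eqref{all2} in $\mathcal{B}$ as a corollary).

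First I would note that $T$ is well defined on $\mathcal{B}$: if $\Vert\underline{\n}\Vert_{\mathcal{M}}\le|\l|$ then $|\n_k|\le|\l|\,\g^{k/4-1/2}\le|\l|$ for every $k\le 2$, so the hypothesis $|\n_{h'}|\le|\l|$ of Lemma \ref{prop:EstimateResonantClusters} holds and the estimate $|\b_{\n,k}(\n_k,\dots,\n;\l)|\le(C\l)^2\g^k$ of \eqref{eq:ResonantClustersW1} is available uniformly in $k$. Inserting it in the definition of $T$ and summing the resulting geometric series,
\begin{equation}
\Vert T(\underline{\n})\Vert_{\mathcal{M}}=\sum_{h\le 2}|T(\underline{\n})_h|\,\g^{-h/4}\g^{1/2}\le(C\l)^2\sum_{h\le 2}\g^{-h/4}\g^{1/2}\sum_{k\le h}\g^{2k-h}=\frac{(C\l)^2\,\g^2}{(1-\g^{-2})(1-\g^{-3/4})}\, ,
\end{equation}
which is $\le|\l|$ once $\l_0$ is chosen small enough (depending only on $C$ and $\g$); hence $T(\mathcal{B})\subseteq\mathcal{B}$.

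For the contraction I would use that each graph value $W_\G$ entering $\b_{\n,k}$ in \eqref{Betadefinizioni} is multilinear in the factors $\g^{h}\n_{h}\sigma_2$ attached to its $\nu$-vertices, $h$ being the scale of the cluster directly containing the vertex. Hence, for $\underline{\n},\underline{\n}'\in\mathcal{B}$ (and $\underline{\n}^{(t)}$ the segment joining them, which stays in the convex set $\mathcal{B}$), a discrete telescoping over the $\nu$-vertices — equivalently, differentiating the convergent series $\b_{\n,k}$ in a single $\n_{h'}$ — expresses $\b_{\n,k}(\underline{\n})-\b_{\n,k}(\underline{\n}')$ as a sum of terms in which one factor $\g^{h'}\n_{h'}$ is replaced by $\g^{h'}(\n_{h'}-\n'_{h'})$. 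Re-running the cluster estimate of Lemma \ref{prop:EstimateResonantClusters} on each term — the cancellations absorbing $\prod_T\g^{h_T M^{\n}_T}$ into $\g^{\e_\G h}$ are untouched, one only trades the bound $|\n_{h'}|\le|\l|$ for $|\n_{h'}-\n'_{h'}|$, and since the sums in \eqref{Betadefinizioni} start at $q=2$ one spare power of $|\l|$ remains — gives
\begin{equation}
|\b_{\n,k}(\underline{\n})-\b_{\n,k}(\underline{\n}')|\le C|\l|\,\g^k\sum_{k\le h'\le 2}|\n_{h'}-\n'_{h'}|\le\frac{C|\l|}{1-\g^{-1/4}}\,\g^k\,\Vert\underline{\n}-\underline{\n}'\Vert_{\mathcal{M}}\, ,
\end{equation}
where I used $|\n_{h'}-\n'_{h'}|\le\g^{h'/4-1/2}\Vert\underline{\n}-\underline{\n}'\Vert_{\mathcal{M}}$ and summed over $h'$. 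Feeding this into $T$ and summing exactly the same geometric series as above yields $\Vert T(\underline{\n})-T(\underline{\n}')\Vert_{\mathcal{M}}\le C'(\g)\,|\l|\,\Vert\underline{\n}-\underline{\n}'\Vert_{\mathcal{M}}$, which is $\le\tfrac12\Vert\underline{\n}-\underline{\n}'\Vert_{\mathcal{M}}$ for $|\l|\le\l_0$ after a final $\g$-dependent reduction of $\l_0$.

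The main obstacle is justifying the factor $C|\l|\,\g^k$ in the last display: one must check that removing the running constant from a single $\nu$-vertex costs only one power of $|\l|$ and leaves the scaling dimension $\g^k$ of $\b_{\n,k}$ intact, with no loss in $h'-k$ that would spoil summability over the scale of the marked vertex. This amounts to revisiting the bookkeeping in the proof of Lemma \ref{prop:EstimateResonantClusters} — in particular the absorption of the $\nu$-vertex factors $\g^{h_T M^{\n}_T}$ — and observing that those cancellations are structural and survive the marking of one vertex. The specific weight $\g^{-k/4}$ in the norm of $\mathcal{M}$ is exactly what makes $\sum_{h'\le 2}|\n_{h'}-\n'_{h'}|$ summable, so that no decay in the scale of the marked vertex beyond what the cluster bound already provides is required.
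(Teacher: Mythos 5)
Your proposal is correct and follows essentially the same route as the paper: invariance of $\mathcal{B}$ from the bound $|\b_{\n,k}|\le (C\l)^2\g^k$ of Lemma \ref{prop:EstimateResonantClusters} plus the geometric sums, and the Lipschitz bound by telescoping the graph values over the $\nu$-vertices so that one factor $\n_{h'}$ is traded for $\n_{h'}-\n'_{h'}$, with the spare power of $|\l|$ coming from $q\ge 2$. The only (cosmetic) difference is that you control the marked-vertex scale by summing $|\n_{h'}-\n'_{h'}|\le\g^{h'/4-1/2}\Vert\underline{\n}-\underline{\n}'\Vert_{\mathcal{M}}$ over $h'$, while the paper bounds the telescoped product directly by $(2|\l|)^{M_T^\n-1}\Vert\underline{\n}-\underline{\n}'\Vert_{\mathcal{M}}$; both yield the same $O(|\l|)$ Lipschitz constant.
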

\begin{proof}
To prove that $T$ leaves $\mathcal{B}$ invariant, we prove a stronger statement: if $\underline\nu$ is such that $|\nu_h| \leq |\lambda|$, then $T(\underline{\nu}) \in \mathcal{B}$. Under these hypothesis, Lemma \ref{prop:EstimateResonantClusters} holds and, using \eqref{eq:ResonantClustersW1}, we get
\begin{equation}
	\Vert T(\underline{\nu}) \Vert_{\mathcal{M}} \leq \sum_{h \leq 2} \sum_{k=-\infty}^h \gamma^{k-h} \gamma^{-\frac{h}{4}}\gamma^{1/2} |\beta_{\nu,k} | \leq \frac{\gamma^{\frac{15}{4}}}{(\gamma-1)(\gamma^{\frac{3}{4}}-1)}(C\lambda)^2 \, ,
\end{equation}
where $C$ is the constant of Lemma \ref{prop:EstimateResonantClusters}.
Choosing now $\lambda_0 \leq \frac{(\gamma-1)(\gamma^{\frac{3}{4}}-1)}{\gamma^{\frac{15}{4}} C^2}$, we get $\Vert T(\underline{\nu}) \Vert_{\mathcal{M}} \leq |\lambda|$.
	If $\underline{\nu},\underline{\nu'} \in \mathcal{M}$, then
	\begin{equation}
		T(\underline\n)_h-T(\underline\n')_h=-\sum_{k=-\io}^h \g^{k-h}\big(\b_{\n,k}(\n_k,\nu_{k+1},\dots,\n;\l)-\b_{\n,k}(\n'_k,\nu'_{k+1},\dots,\n';\l)\big) \, .
	\end{equation}
	The r.h.s.\ can be expressed as a sum of graphs identical to $\G$ with the difference that in a vertex instead of $\n_k$ there is $\n_k-\n'_k$. Indeed, repeating the argument of Lemma \ref{prop:EstimateResonantClusters}, one gets
	\begin{equation}
		\begin{split}
			|\beta_{\nu,h}(\underline{\nu})-\beta_{\nu,h}(\underline{\nu'})| &\leq \sum_{q \geq 2} \sum_{\Gamma} (c C_1 C_2 \bar{C}_3)^q |\l|^{q-\sum_{T} M_T^\n} \gamma^h \Bigg(\prod_v e^{-\frac{\eta}{8}|\bn_v|} \Bigg)\\
			&\times \Bigg( \prod_{T \, \text{n.t.}} \gamma^{h_T^{\text{ext}}-h_T} \Bigg) \Bigg|\prod_{T \, \text{n.t.}}\nu_{h_T}^{M_T^\nu}-\prod_{T \, \text{n.t.}}\nu_{h_T}'^{M_T^\nu}\Bigg|
		\end{split}
	\end{equation}	
	Using now that $\underline{\nu} \in \mathcal{B}$, one has 
$|\prod_{T \, \text{n.t.}}\nu_{h_T}^{M_T^\nu}-\prod_{T \, \text{n.t.}}\nu_{h_T}'^{M_T^\nu}| \leq (2 |\lambda|)^{M_T^\nu-1} \Vert \underline{\nu}-\underline{\nu}'\Vert_{\mathcal{M}}$. Therefore, summing over $\Gamma$ as in Lemma \ref{prop:EstimateResonantClusters}, calling $C_4:=8 c C_1 C_2 \bar{C}_3 (1-e^{-\frac{\eta}{8}})^{-2} (\gamma-1)^{-4}$ one gets
\begin{equation}
	|\beta_{\nu,h}(\underline{\nu})-\beta_{\nu,h}(\underline{\nu'})| \leq \sum_{q \geq 2} C_4^q |\lambda|^{q-1} \gamma^h \Vert \underline{\nu}-\underline{\nu}'\Vert_{\mathcal{M}} \, .
\end{equation}
If $|\lambda| \leq \frac{1}{2 C_4}$, then
\begin{equation}\label{Dinosauri}
	|\beta_{\nu,h}(\underline{\nu})-\beta_{\nu,h}(\underline{\nu}')| \leq 2 C_4^2 |\lambda| \gamma^h \Vert \underline{\nu}-\underline{\nu}'\Vert_{\mathcal{M}} \, .
\end{equation}
Using \eqref{Dinosauri}, we now have
\begin{equation}
	\begin{split}
	\Vert T(\underline{\nu})-T(\underline{\nu}') \Vert_{\mathcal{M}} &\leq \sum_{h\leq 2} \gamma^{-\frac{h}{4}} \gamma^{1/2}|T(\underline{\nu})_h-T(\underline{\nu}')_h| \\
	& \leq \sum_{h \leq 2} \gamma^{-\frac{h}{4}} \sum_{k=-\infty}^h \gamma^{k-h} \gamma^{1/2}|\beta_{\nu,k}(\nu)-\beta_{\nu,k}(\nu')| \\
	&\leq 2\gamma^{1/2} C_4^2 |\lambda| \Vert \underline{\nu}-\underline{\nu}' \Vert_{\mathcal{M}} \sum_{h \leq 2} \gamma^{\frac{3h}{4}} \sum_{k=-\infty}^h \gamma^{2(k-h)} \, .
	\end{split}
\end{equation}
Thus, choosing $\lambda_0 \leq \frac{(\gamma^2-1)(\gamma^{3/4}-1)}{\gamma^{19/4} 2 C_4^2}$, $T$ is a contraction on $\mathcal{B}$.
\end{proof}

\begin{remark}\label{rem:GainNu}
Since, by construction, $\underline{\nu} \in \mathcal{B}$, we have the bound $\sum_{h \geq 2} |\nu_h| \gamma^{-h/4} \gamma^{1/2}\leq |\lambda|$, that implies
\begin{equation}
	|\nu_h| \leq \gamma^{(h-2)/4} |\lambda| \leq |\lambda|
\end{equation} 
which improves, and hence also justifies, the assumption in Lemma \ref{prop:EstimateResonantClusters}.
\end{remark}
\textcolor{black}{
\begin{remark}\label{rem:GainNu1} Equation
\eqref{eq:ContractionNU} and Lemma \ref{lem:lemmaGigio} determines uniquely $\n=\nu(\l,\m,\b)$ and proves the assumption $|\n_h|\le C |\l|$ used in Lemma \ref{prop:EstimateResonantClusters}.
From \eqref{acco},
$\m=m_\psi+\l \g^2 \nu(\l,\m,\b)$ with $m_\psi\equiv m_\psi(\b)$ given by \eqref{eq:MassaPsiInt}. 
The criticality condition is imposed setting $\m=0$; from $0=m_\psi(\b)+\l \g^2 \nu(\l,0,\b)$ we determine the value of $\b_c(\l)=\b_c(0)+O(\l)$ by the implicit function theorem as the derivative is non vanishing. In addition
$\m=O(|\b-\b_c(\l)|)$.
\end{remark}
}

{\color{black}
\section{Energy-energy correlations}
\label{sec:5}
\subsection{Integration of $\xi$ variables}

The energy correlation \pref{en} can be written as
\begin{equation}\label{eq:CinquePuntoUno}
S(\xx_1,j_1;\xx_2,j_2)=
\sum_{\boldsymbol\a \in \{\pm\}^2}\frac{\t_{\boldsymbol\a}Z_{\boldsymbol\a}}{2Z}
{\partial^2\over \partial A_{\xx_1,j_1} \partial A_{\xx_2,j_2}}\mathcal{W}_{\boldsymbol\a}(A)\big|_{A=0}
\end{equation}
with $Z$, $\tau_\balpha$ and $Z_\balpha$ defined as in \eqref{LeEquazioniTutteUguali},
\begin{equation}
\mathcal{W}_{\boldsymbol\a}(A):=\log \int \textcolor{black}{\mathcal{D}^{\Lambda_i}}\Phi \,  e^{S(\Phi,0)+B(\Phi,A) }
\end{equation}
where
\begin{equation}
B(\Phi, A)=\sum_{\xx\in \L_i}\big[
\bar t^{(1)}_\bx (A)  \lis H_{\xx} 
H_{\xx+{\mathbf{e}_1}}+ \bar t^{(0)}_\bx (A) \lis V_{\xx} 
V_{\xx+{\mathbf{e}_0}}\big] 
\end{equation}
and $\bar t^{(j)}_\bx (A)= \tanh( \beta J^{(j)}_{\xx}+A_{\xx,j})-\tanh( \beta J^{(j)}_{\xx})$. Proceeding as in Section \ref{sec:Grassmann} we perform the change of variables $\Phi=\Phi(\chi,\psi)$ defined in \eqref{eq:CambioDiVarHPsi} and then \eqref{eq:ChangeOfVariablesXi} to get 
\begin{equation}
e^{\mathcal{W}_{\boldsymbol\a}(A) }=\int P_\psi(d\psi) \int P_\xi(d\xi) e^{V(\psi,\xi)
+\hat B(\psi,\xi,A)}
\end{equation}
where
\begin{equation}
 \hat B(\psi,\xi,A):=\bar B(\psi, \xi+C_\chi^{-1} Q\psi, A)\, ,\qquad\bar B(\psi,\chi ,A):= B(\Phi(\psi,\chi),A) \, .
\end{equation}
Using the following representation in Fourier series for $A$
\begin{equation}
	A_{\bx,j}:=\frac{1}{|\Lambda_i|} \sum_{\pp \in \mathcal{D}_{++}} \widehat{A}_{\pp,j} e^{\ii \pp\cdot \bx} \, ,
\end{equation}
expanding in Taylor series $\bar{t}_\bx^{(j)}(A)$ around $A=0$ and denoting by $\boldsymbol{\zeta}=\bpsi,\bxi$, one has
\begin{equation}
\begin{split}
\!\!\!\!\hat B(&\psi,\xi,A)=\\
&=
\sum_{s=1}^{+\infty} 
\frac{1}{4|\Lambda_i|^{1+s} } \sum_{\boldsymbol\zeta_1,\boldsymbol\zeta_2=\bpsi,\bxi}
\sum_{\substack{\bk \in \mathcal{D}_{\balpha}, \\ \underline{{\bf p}} \in (\mathcal{D}_{++})^s, \\\bn \in \mathbb{Z}^2 \\ \underline{j} \in \{0,1\}^s}}  \hat{\boldsymbol\zeta}_{1,-\bk} \cdot \widehat{\mathcal{K}}_{\zeta_1,\zeta_2,\bn}(\bk,\underline{{\bf p}},\underline{j})
\hat{\boldsymbol \zeta}_{2,\bk-\sum_{r=1}^s {\bf p}_r-2\pi\Omega \bn} \prod_{r=1}^s \widehat{A}_{{\bf p}_r,j_r}.
\end{split}
\end{equation}
We can integrate over the $\x$ field obtaining
\be
e^{\mathcal{W}_{\boldsymbol\a}(A)}=e^{N_1(A)}
\int \textcolor{black}{P_\psi(d\psi)} e^{V(\psi)
+B^{(1)}(\psi,A)}\label{riva}
\ee
where
\begin{equation}
B^{(1)}(\psi,A)=
\sum_{s=1}^{+\infty} 
\frac{1}{4|\Lambda_i|^{1+s} } 
\sum_{\substack{\bk \in \mathcal{D}_{\balpha}, \\ \underline{{\bf p}} \in (\mathcal{D}_{++})^s, \\ \bn \in \mathbb{Z}^2, \\ \underline{j} \in \{0,1\}^s}} \hat\bpsi_{-\bk} \cdot \widehat{\mathcal{K}}^{2,s,1}_\bn(\bk,\underline{{\bf p}},\underline{j})\hat\bpsi_{\bk-\sum_{r=1}^s {\bf p}_r-2\pi\Omega \bn} \prod_{r=1}^s \widehat{A}_{{\bf p}_r,j_r}
\end{equation}
where $\widehat{\mathcal{K}}^{2,s,1}_\bn(\bk,\underline{{\bf p}},\underline{j})$
can be expressed as sum over graphs $\G$ similar to the ones in Definition 3.1 
with the following differences. 
To each point $v$ of the graph $\G$
is associated a label
$j_v \in \{0,1,2\}$
and momentum label $\bn_v \in \mathbb{Z}^2$, if
$j_v\in\{0,1\}$, or ${\bf p}_v$ if $j_v=2$,
with the constraint that
$\sum_v \bn_v=\bn$ and ${\bf p}_v$ is equal to one of the 
${\bf p}_1,\dots,{\bf p}_s$ or a linear combination of them; the number of points with $j_v=0,1$ is $q$.
To each line $\ell$ is associated a momentum $\bk_\ell$; if $\bk_i$ and 
 $\bk_o$ are two lines attached to the same point $v$, then
$\bk_i-\bk_o=2\pi \O \bn_v$ if $j_v=0,1$ and $\bk_i-\bk_o={\bf p}_v$ if $j_v=2$.
The proof of  Lemma \ref{lem:MassiveIntegration2} can be repeated up to some trivial modifications and 
we get, under the same conditions, the exponential decay of the kernels in $B^{(1)}(\psi,A)$:
\begin{equation}
|\hat{\mathcal{K}}^{2,s,1}_{\bn}(\bk,\underline{\bf p},\underline{j})|\le 
\bar C^s e^{-\frac{\h}{2} |\bn|}\label{pip}
\end{equation}
for a suitable constant $\bar C$.

\subsection{Multiscale analysis}
The integration of \pref{riva} is done inductively, by a generalization of the analysis in Sections \ref{sec:Xi} and \ref{sec:Critical}.
Suppose we have just integrated the scales $1,0,-1,-2,\dots,h+1$
obtaining
\begin{equation}
e^{\mathcal{W}_\balpha(A)}=e^{N_h(A)}\int P^{(\leq h)}(\ud \psi^{(\leq h)}) e^{V^{(h)}(\psi^{(\leq h)})+B^{(h)}(\psi^{(\leq h)},A)}\label{ste3}  \, ,
\end{equation}
with
\begin{equation}
\begin{split}
B^{(h)}(&\psi^{(\le h)}  ,A)=\\&=\sum_{s=1}^{+\infty}
\frac{1}{4|\Lambda_i|^{1+s} } 
\sum_{\substack{\bk \in \mathcal{D}_\balpha \\ \bn \in \mathbb{Z}^2}} \sum_{ \substack{\underline{\bf p} \in (\mathcal{D}_{++})^s \\ \underline{j} \in \{0,1\}^s}}  \hat\bpsi^{(\leq h)}_{-\bk} \cdot
 \widehat{\mathcal{K}}^{2,s,h}_\bn(\bk,\underline{\bf p},\underline{j})\widehat{\bpsi}^{(\leq h)}_{\bk-\sum_{r=1}^s {\bf p}_r-2\pi\Omega \bn}\prod_{r=1}^s \widehat{A}_{{\bf p}_r,j_r}
\end{split}
\end{equation}
and
\begin{equation}
N_h(A)=\sum_{s=0}^{+\infty}\frac{1}{4|\Lambda_i|^{s-1} } 
\sum_{\substack{\underline{{\bf p}}\in (\mathcal{D}_{++})^s \\ \underline{j} \in \{0,1\}^s}} \sum_{\bn \in \mathbb{Z}^2} 
\widehat{\mathcal{K}}^{0,s,h}_\bn(\underline{\bf p},\underline{j})\delta_{\sum_{r} \pp_r+2 \pi \Omega \bn,0} \prod_{r=1}^s\hat{A}_{{\bf p}_r,j_r}
\end{equation}
where $\delta$ denotes the Kronecker delta.
We define a localization operation as
\begin{equation}
	\mathcal{L} \hat B^{(h)}(\psi,A) \;:=
\frac{1}{4|\Lambda_i|^{2} } 
\sum_{\substack{\bk \in \mathcal{D}_{\balpha},  {\bf p} \in \mathcal{D}_{++},\\ \bn \in \mathbb{Z}^2,\\ j \in \{0,1\}}} \hat\bpsi^{(\leq h)}_{-\bk} \cdot \widehat{\mathcal{K}}^{2,1,h}_\bn(0,0,j)\bpsi^{(\leq h)}_{\bk-{\bf p}
-2\pi\Omega \bn} \hat{A}_{{\bf p},j} \, . 	
\end{equation}
Note that, in contrast with the analysis in Section \ref{sec:Critical}, the localization acts also on the terms $\bn\neq0$.
We get therefore
\begin{equation}
\begin{split}
e^{\mathcal{W}_\balpha(A)}&=e^{N_h(A)}\int \bar P^{(\leq h)}(\ud \psi^{(\leq h)}) 
e^{
 \frac{1}{4  |\Lambda_i|} \sum_{\bk \in \mathcal{D}_\balpha} \
\hat\bpsi^{(\le h)}_{-\bk} \cdot \gamma^{h} \nu_h \sigma_2 \hat\bpsi^{(\le h)}_{\bk}}\\
&\qquad e^{  \frac{1}{4|\Lambda_i|^{2} }\sum_{\bn \in \mathbb{Z}^2} \sum_{\bk, {\bf p},j} \hat\bpsi^{(\leq h)}_{-\bk} \cdot Z^{(j)}_{h,\bn} \s_2 
\bpsi^{(\leq h)}_{\bk-{\bf p}
-2\pi\Omega \bn} \widehat{A}_{{\bf p},j}+
\mathcal{R}
\mathcal{V}^{(h)}(\psi^{(\leq h)})+\mathcal{R} B^{(h)}(\psi^{(\leq h)},A)}
\label{ste4}  
\end{split}
\end{equation}
with $Z^{(j)}_{h,\bn}=\widehat{\mathcal{K}}^{2,1,h}_\bn(0,0,j)$.
Note that, in writing the above expression, we have used that $\widehat{\mathcal{K}}^{2,1,h}_\bn(0,0,j)$ is proportional to $\sigma_2$. This latter fact can be checked simply using the anticommutation property of Grassmann variables.
We can write 
$
\bar P^{(\leq h)}(\ud \psi^{(\leq h)}) = P^{(\leq h-1)}(\ud \psi^{(\leq h-1)})P^{(h)}(\ud \psi^{(h)})$
and integrate $\psi^{(h)}$ so that the procedure can be iterated as in Section \ref{sec:Critical}.


{\color{black}
Let us introduce the following definitions.
\begin{definition}
The \emph{special renormalized graphs} are labeled graphs defined starting from the renormalized graphs in Definition \ref{def:RenGraph} with the following additional labels and modifications
\begin{itemize}
	\item[(1)] if $z=2$ the first and the last line are attached to a single point while if $z=0$ there are no external lines.
	\item[(2)] Each point $v$ is associated with a label $S_v$;
if $S_v=0$ (normal point) $v$ is associated with
a label $i_v\in\{\n,V\}$
and a momentum label $\bn_v \in \mathbb{Z}^2$;
if $S_v=1$ (special point) it is associated with a momentum ${\bf p}_v$, an index $j_v \in \mathcal{J}$, \textcolor{black}{a momentum label $\bn_v\in \mathbb{Z}^2$ and an index $\tilde{i}_v\in\{z,B\}$}. The normal points are $q$ and the special ones are $s$. 
	\item[(3)] $\mathcal{G}^{R,z,s,h,\mathcal{J}}_{\bn,q}$ is the set of {\it \textcolor{black}{special} renormalized graphs} $\G$ {\color{black}(here $R$ stands for \emph{renormalized}, $z\in\{0,2\}$, $s\in \{0,1,2\}$, $h$ is the scale and $\mathcal{J}$ is the collection of $j_v$ of the special points)}.
\end{itemize}
\end{definition}
Similarly to what we did in Section \ref{subsec:RenormalizedGraphAnalysis}, to a special renormalized graph we associate a set of clusters in the following way.
\begin{definition}
	Given a special renormalized graph $\Gamma$, we define clusters as in Definition \ref{def:ClustersSec4}. Then, a non-trivial cluster $T$ is associated with $S_T= 1,2$
if it contains $S_T$ special end-point and $S_T=0$ otherwise; in the first
case the cluster is called special, and is associated with a momentum $2\pi \O \bn_T+{\bf p}_T$ (where $\pp_T:=\sum_{v \in T} \pp_v$),
and in the second case is called normal, and it is associated with a momentum  $2\pi \O \bn_T$.
We call $Q_T$ the number of maximal clusters in $T$; $S_T^n=M_T^n+R_T^n$ the number of normal maximal clusters and
$S_T^{sp}$ the number of maximal special clusters;
$M_T^{sp}$  is the set of maximal special trivial clusters (\textit{i.e.}\ points) in $T$.  
The scales are such that, when $z=2$,  $h_\G=h$; when $z=0$ to each external line is 
associated a scale and $h$ is the greatest of such scales.
\end{definition}

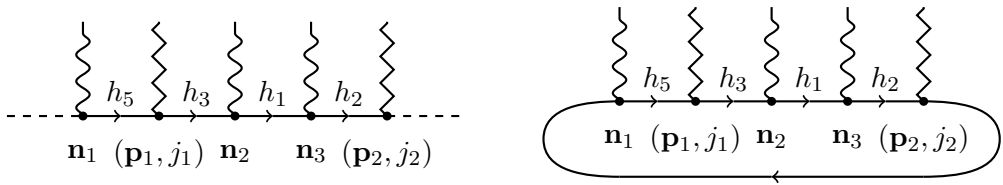
\begin{figure}[h!]
	\begin{center}
\begin{tikzpicture}[thick,scale=0.5]
	\draw[dashed] (0,0) -- (2,0);
	\draw[fill] (2,0) circle[radius=0.09]; 
	\draw[snake it] (2,0) -- (2,2.5);
	\draw[->] (2,0) -- (3,0);
	\draw[-] (3,0) -- (4,0);
	\node at(2,-1) {$\bn_1$};
	\node at(4,-1) {$\substack{(\pp_1,j_1)\\ (\bn_s)_1}$};
	\draw[fill] (4,0) circle[radius=0.09];
	\draw[decorate, decoration=zigzag] (4,0) -- (4,2.5);
	\draw[->] (4,0) -- (5,0); 
	\draw[-] (5,0) -- (6,0);
	\node at(6,-1) {$\bn_2$};
	\draw[fill] (6,0) circle[radius=0.09];
	\draw[snake it] (6,0) -- (6,2.5);
	\draw[->] (6,0) -- (7,0);
	\draw[-] (7,0) -- (8,0);
	
	\node at(8,-1) {$\bn_3$};
	\draw[fill] (8,0) circle[radius=0.09];
	\draw[snake it] (8,0) -- (8,2.5);
	\draw[->] (8,0) -- (9,0);
	\draw[-] (9,0) -- (10,0);
	
	\node at(10,-1) {$\substack{(\pp_2,j_2) \\ (\bn_s)_2}$};
	\draw[fill] (10,0) circle[radius=0.09];
	\draw[decorate, decoration=zigzag] (10,0) -- (10,2.5);
	\draw[dashed] (10,0) -- (12,0);

	\node at(3,0.6) {$h_5$};
	\node at(5,0.6) {$h_3$};
	\node at(7,0.6) {$h_1$};	
	\node at(9,0.6) {$h_2$};
\end{tikzpicture}
$\qquad$
\begin{tikzpicture}[thick, scale=0.5]
	\draw[fill] (2,0) circle[radius=0.09]; 
	\draw[snake it] (2,0) -- (2,2.5);
	\draw[->] (2,0) -- (3,0);
	\draw[-] (3,0) -- (4,0);
	\node at(2,-1) {$\bn_1$};
	\node at(4,-1) {$\substack{(\pp_1,j_1) \\ (\bn_s)_1}$};
	\draw[fill] (4,0) circle[radius=0.09];
	\draw[decorate, decoration=zigzag] (4,0) -- (4,2.5);
	\draw[->] (4,0) -- (5,0); 
	\draw[-] (5,0) -- (6,0);
	\node at(6,-1) {$\bn_2$};
	\draw[fill] (6,0) circle[radius=0.09];
	\draw[snake it] (6,0) -- (6,2.5);
	\draw[->] (6,0) -- (7,0);
	\draw[-] (7,0) -- (8,0);
	
	\node at(8,-1) {$\bn_3$};
	\draw[fill] (8,0) circle[radius=0.09];
	\draw[snake it] (8,0) -- (8,2.5);
	\draw[->] (8,0) -- (9,0);
	\draw[-] (9,0) -- (10,0);
	
	\node at(10,-1) {$\substack{(\pp_2,j_2) \\ (\bn_s)_2}$};
	\draw[fill] (10,0) circle[radius=0.09];
	\draw[decorate, decoration=zigzag] (10,0) -- (10,2.5);
	\draw (10,0) to[out=0,in=90] (12,-1);
	\draw (12,-1) to[out=-90, in=0] (10,-2);
	\draw[->] (10,-2) -- (6,-2);
	\draw (6,-2) -- (2,-2);
	\draw (2,-2) to[out=180, in=-90] (0,-1);
	\draw (0,-1) to[out=90, in=180] (2,0);

	\node at(3,0.6) {$h_5$};
	\node at(5,0.6) {$h_3$};
	\node at(7,0.6) {$h_1$};	
	\node at(9,0.6) {$h_2$};
\end{tikzpicture}
\end{center}

\caption{(left) A graph $\Gamma \in \mathcal{G}^{R,2,2,h,\{j_1,j_2\}}_{\bn,3}$. (right) a graph $\Gamma \in \mathcal{G}^{R,0,2,h,\{j_1,j_2\}}_\bn$.}
\label{fig:ClusterRepCorre}
\end{figure}

\begin{definition}
The value of graph $\G\in\mathcal{G}^{R,2,1,h,\mathcal{J}}_{\bn,q}$ with maximal clusters $\tilde{T}_w$, $w=1,\dots, Q_\G$ is defined as
\begin{equation}\label{asso111}
W_\G({\bf p})
=
\;\left[\prod_{w=1}^{Q_\G-1} \overline{W}_{\tilde T_{w}}(\bk_w) g^{(h_\G)}(\bk_{w+1}) \right] \overline{W}_{\tilde T_{Q_\G}}(\bk_{Q_\G}) 
\end{equation} 
where $\bk_w=\bk_{w-1}-2 \pi\Omega  \bn_{\tilde T_{w-1}}$ if $\tilde T_{w-1}$ is a normal cluster,
$\bk_w=\bk_{w-1}+{\bf p}_{\tilde T_{w-1}}- 2 \pi\Omega  \bn_{\tilde T_{w-1}}$ if $\tilde T_{w-1}$ is a special cluster
$\bk_1=\bk$. $\overline{W}_{\tilde T_{w}}(\bk_w)$ is defined as
\begin{equation}
	\overline{W}_{\tilde{T}_w}
	=\left\{\begin{array}{lcl} Z_{h_T,\bn_s}^{(j_w)} & & \text{if $\tilde{T}_w$ is a special $z$-point} \, ,\\
	\mathcal{R}\hat{\mathcal{K}}^{2,1,1}_{\bn_w}
	& & \text{if $\tilde{T}_w$ is a special $B$-point} \, , \\
		\gamma^{h_T} \nu_{h_T}\sigma_2 & \quad & \text{if $\tilde{T}_{w}$ is a $\nu$-point $(\bn_w=0)$,} \\
			\mathcal{R}\widehat{\mathcal{V}}_{0}
			& & \text{if $\tilde{T}_{w}$ is a $V$-point $(\bn_w= 0)$,}  \\
		\widehat{\mathcal{V}}_{\bn_w}
		& & \text{if $\tilde{T}_{w}$ is a $V$-point $(\bn_w\neq 0)$,}  \\
		\mathcal{R}W_{\tilde{T}_w}
		& & \text{if $\tilde{T}_w$ is a non-trivial cluster.}
\end{array}\right.
\end{equation}
Similarly, if the special renormalized graph is $\G\in\mathcal{G}^{R,0,2,h,\mathcal{J}}_{\bn,q}$
\begin{equation}\label{asso}
W_\G({\bf p})
={1\over 4 |\L_i|}\sum_{\bf k \in \mathcal{D}_\balpha}
\;\left[\prod_{w=1}^{Q_\Gamma-1} \overline{W}_{\tilde T_{w}}(\bk_w) g^{(h_\G)}(\bk_{w+1}) \right] \overline{W}_{\tilde T_{Q_\G}}(\bk_{Q_\G}) g^{(h_\G)}(\bk_{Q_\Gamma})
\end{equation} 
with $\bk_1=\bk$.
\end{definition}

\begin{lemma}
The kernels $\mathcal{K}$ can be written as a sum of graphs
\begin{equation}
\begin{split}
\widehat{\mathcal{K}}^{2,s,h}_\bn(\bk,\underline{\pp},\underline{j}) &=\sum_{q=0}^\infty
\sum_{\substack{\Gamma \in \mathcal{G}^{R,2,s,h,\mathcal{J}}_{\bn,q}}} 
W_\G(\bk,\underline{\pp},\underline{j}) \, , \\
\widehat{\mathcal{K}}^{0,s,h}_\bn(\underline{\bf p},\underline{j}) &=\sum_{q=0}^\infty
\sum_{\substack{\Gamma \in \mathcal{G}^{R,0,s,h,\mathcal{J}}_{\bn,q}}} 
W_\G(\underline{\pp},\underline{j})
\end{split}
\end{equation}
and the running coupling constants verify
%
\begin{equation}
Z^{(j)}_{h-1,\bn}=Z^{(j)}_{h,\bn}+\b_{z,\bn, h}^{(j)} \, ,\qquad \b_{z,\bn,h}^{(j)}=\sum_{q=1}^\infty
\sum_{\substack{\Gamma \in \mathcal{G}^{R,2,1,h-1,j}_{\bn,q} \\  h_\G=h}} 
W_\G(0,0,j) \, . \label{rc2}
\end{equation}
\end{lemma}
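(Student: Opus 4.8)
The plan is to prove both assertions by induction on the scale $h$, following closely the proof of Lemma~\ref{lem:CaronDimonioConOcchiDiBragia} and Appendix~\ref{app:Grafici}; the genuinely new features are the source (``special'') vertices --- the $z$-points carrying $Z^{(j)}_{h,\bn}$ and the $B$-points carrying $\mathcal{R}\widehat{\mathcal{K}}^{2,1,1}_\bn$ --- and the fact, flagged after \pref{ste3}, that the localization now acts also on the $\bn\neq0$ contributions. For the base case $h=1$ the representations of $\widehat{\mathcal{K}}^{2,s,1}_\bn$ and $\widehat{\mathcal{K}}^{0,s,1}_\bn$ as sums of graphs are already contained in the discussion after \pref{riva} (the $z=0$ kernels, collected in $N_1(A)$, coming from the same $\xi$-integration, the corresponding graphs being closed loops of $\xi$-propagators), while $\widehat{\mathcal{V}}_\bn$ is given by Lemma~\ref{lem:LemmaImpestato}; these are exactly the special renormalized graphs with no internal $\psi$-line and external scales $\le1$, so that no non-trivial cluster is present and $\mathcal{R}$ has not yet acted. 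The structural point used throughout is that $V(\psi,\xi)$ and $\hat B(\psi,\xi,A)$ are sums of monomials \emph{quadratic} in the Grassmann fields (times a power of $A$); hence, by Wick's theorem for Grassmann Gaussian integrals, every connected diagram produced by any of the integrations is a chain, i.e.\ a linear sequence of vertices joined by propagators with at most two external half-lines, exactly as in Definition~\ref{def:Grafici111} and its special-graph analogue.

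For the inductive step I would assume \pref{ste3} together with the claimed graph representations at scale $h+1$. Applying $\mathcal{L}$ to $V^{(h)}$ and to $B^{(h)}$ and using the change-of-integration property of Gaussian Grassmann integrals yields \pref{ste4}: the $s=1$, $\bk=\pp=0$ part of $\widehat{\mathcal{K}}^{2,1,h}_\bn$ is extracted as the source counterterm proportional to $\sigma_2$ with coefficient $Z^{(j)}_{h,\bn}:=\widehat{\mathcal{K}}^{2,1,h}_\bn(0,0,j)$ --- here one checks, exactly as for the $\nu$-counterterm, that $\widehat{\mathcal{K}}^{2,1,h}_\bn(0,0,j)$ is proportional to $\sigma_2$ using the anticommutation of the Grassmann variables --- while the remainder is $\mathcal{R}B^{(h)}$. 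Writing $\bar P^{(\le h)}(\ud\psi^{(\le h)})=P^{(\le h-1)}(\ud\psi^{(\le h-1)})P^{(h)}(\ud\psi^{(h)})$ one integrates $\psi^{(h)}$ through the cumulant expansion $\exp\{\sum_{q\ge1}\frac1{q!}\EE^T_{\psi^{(h)}}(\,\cdot\,;q)\}$, as in \pref{eq:DefFirstIntegration}. Quadraticity of the exponent in $\psi^{(h)}$ forces the connected contributions to be chains; each Wick contraction performed at this step inserts one single-scale propagator $g^{(h)}$, so the lines created now constitute the maximal clusters of scale $h_T=h$ of the new graphs, and $\mathcal{R}=1-\mathcal{L}$ acts on each non-trivial resonant and special cluster according to \pref{loco}--\pref{eq:RenOp}. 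Reattaching the $z$-points (value $Z^{(j)}_{h_T,\bn_s}$), the $B$-points (value $\mathcal{R}\widehat{\mathcal{K}}^{2,1,1}_\bn$), the $\nu$- and $V$-points, and relabelling the external scales, one recovers the values $W_\Gamma$ of \pref{asso111} for $z=2$ and of \pref{asso} for $z=0$ (where the chain closes into a $\psi$-loop and the internal momentum is summed, contributing to $N_{h-1}(A)$), which reproduces the claimed expansions of $\widehat{\mathcal{K}}^{2,s,h}_\bn$ and $\widehat{\mathcal{K}}^{0,s,h}_\bn$ over $\mathcal{G}^{R,2,s,h,\mathcal{J}}_{\bn,q}$ and $\mathcal{G}^{R,0,s,h,\mathcal{J}}_{\bn,q}$; the momentum constraints $\bk_w-\bk_{w-1}=-2\pi\Omega\bn_{\tilde T_{w-1}}$ for normal clusters and $\bk_w-\bk_{w-1}=\pp_{\tilde T_{w-1}}-2\pi\Omega\bn_{\tilde T_{w-1}}$ for special clusters are just momentum conservation at each vertex along the chain.

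The recursion \pref{rc2} then follows because, unlike $\gamma^h\nu_h\sigma_2$, the source counterterm $Z^{(j)}_{h,\bn}\sigma_2$ carries no scaling prefactor and thus passes unchanged from scale $h$ to scale $h-1$: therefore $Z^{(j)}_{h-1,\bn}=\widehat{\mathcal{K}}^{2,1,h-1}_\bn(0,0,j)$ equals $Z^{(j)}_{h,\bn}$ plus the value at $\bk=\pp=0$ of the graphs created by the integration of $\psi^{(h)}$, namely those with $h^{ext}\le h-1$ and minimal internal scale $h_\Gamma=h$, which is precisely $\beta^{(j)}_{z,\bn,h}=\sum_{q\ge1}\sum_{\Gamma\in\mathcal{G}^{R,2,1,h-1,j}_{\bn,q},\,h_\Gamma=h}W_\Gamma(0,0,j)$ (the sum starting at $q=1$ since at least one internal line of scale $h$ is needed).

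The argument is purely combinatorial and is the verbatim analogue of the proof of Lemma~\ref{lem:CaronDimonioConOcchiDiBragia}, so the main obstacle is only the bookkeeping: one has to verify (i) the $\sigma_2$-proportionality of $\widehat{\mathcal{K}}^{2,1,h}_\bn(0,0,j)$, which legitimates the source counterterm in \pref{ste4}, and (ii) the correct tracking of the external momenta $\pp_v$ flowing through the special points along the chain, together with the new feature that localization is now applied for all $\bn$ (including $\bn\neq0$). The latter is harmless at the level of this lemma, since the quantitative decay in $\bn$ and the scaling in $h$ play no role here; they enter only in the subsequent estimates, which are the analogues of Lemmas~\ref{lem:MassiveIntegration2} and~\ref{prop:EstimateResonantClusters} and will use exactly the cluster structure set up by the present lemma.
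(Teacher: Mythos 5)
Your proposal is correct and follows essentially the same route the paper indicates: the paper's own proof is just the remark that the statement "follows along the lines of Appendix \ref{app:Grafici} and Lemma \ref{lem:CaronDimonioConOcchiDiBragia}", and your induction on the scale --- Wick's theorem forcing chain graphs from the quadratic exponent, the extraction of the $\sigma_2$-proportional source counterterm via $\mathcal{L}$, and the observation that $Z^{(j)}_{h,\bn}$ carries no $\gamma^{h}$ prefactor so that the recursion is additive with $\beta^{(j)}_{z,\bn,h}$ collecting exactly the graphs with $q\ge 1$ and $h_\Gamma=h$ --- is the intended expansion of that remark. No gaps.
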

Also in this case, the proof follows along the lines Appendix \ref{app:Grafici} and Lemma \ref{lem:CaronDimonioConOcchiDiBragia}.
}

\subsection{Bounds}
Let us now define
\begin{equation}
	\SVert \widehat{\mathcal{K}}^{2,1,h}_\bn \SVert := \sup_{j \in \{0,1\}}\sup_{\pp \in \mathcal{D}_{++}} \sup_{\bk \in \mathcal{D}_\balpha} |\chi_h(\bk) \chi_h(\bk+\pp-2\pi \Omega \bn) \widehat{\mathcal{K}}^{2,1,h}_\bn(\bk, \pp,j)| \, .
\end{equation}
We will denote by $\prod_{v \, \text{n.s.}}=\prod_{v \in \Gamma \, , S_v=0}$.
\begin{lemma}\label{lem:Sandokan}
If $|\l|\le \l_0$ and $\n$ is chosen as in Section \ref{subsec:Counterterms}, then there exists a constant $C$ independent of $i$,$\beta$ and $h$ such that
\begin{equation}
\SVert\widehat{\mathcal{K}}^{2,1,h}_\bn\SVert \;  
\leq   C e^{-\frac{\eta}{4} |\bn|} \, ,\qquad  \sup_{\pp_1,\pp_2\in \mathcal{D}_{++}}|\widehat{\mathcal{K}}^{0,2,h}_\bn(\pp_1,\pp_2,j_1,j_2)| \;  
\leq C e^{-\frac{\eta}{8} |\bn|}  
\label{m1}
\end{equation}
and 
\begin{equation}
|\b_{z,\bn,h}^{(j)}|\le C|\l|\g^{h} e^{-\frac{\eta}{4} |\bn|}\label{m2} \, .
\end{equation}
\end{lemma}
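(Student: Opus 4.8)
The plan is to establish \eqref{m1}--\eqref{m2} simultaneously by downward induction on the scale $h$ (starting from $h=1$), along the lines of Lemma \ref{prop:EstimateResonantClusters}; the new ingredient is the bookkeeping for the special points and, most importantly, for the source running coupling constants $Z^{(j)}_{h,\bn}$. Since $\n$ has been fixed as in Section \ref{subsec:Counterterms}, the bounds $\frac78 a_j^{(2)}\le a_j^{(h)}\le\frac98 a_j^{(2)}$ and $|\n_h|\le|\l|$ are already available from Lemma \ref{prop:EstimateResonantClusters} and Remark \ref{rem:GainNu}; the additional inductive hypothesis is $|Z^{(j)}_{h',\bn}|\le C\,e^{-\frac\eta4|\bn|}$ for all $h'>h$, which at the bottom $h'=1$ reduces to \eqref{pip} because $Z^{(j)}_{1,\bn}=\widehat{\mathcal{K}}^{2,1,1}_\bn(0,0,j)$.

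The core of the argument is the estimate of a single special renormalized graph $\G\in\mathcal{G}^{R,2,1,h,j}_{\bn,q}$: internal lines are bounded by $|\partial^s_\bk g^{(h_\ell)}(\bk_\ell)|\le C_1\g^{-h_\ell(1+s)}$ (i.e.\ \eqref{ello}, valid thanks to the inductive hypothesis on $a_j^{(h)}$), normal $V$-points by Lemma \ref{lem:MassiveIntegration2} (with the extra $\g^{2h_T}$ of Remark \ref{rem:CostumeDaBagno} for the resonant ones), normal $\n$-points by $\g^{h_T}|\n_{h_T}|\le\g^{h_T}|\l|$, special $z$-points by the inductive bound $|Z^{(j)}_{h_T,\bn_s}|\le C\,e^{-\frac\eta4|\bn_s|}$, special $B$-points by \eqref{pip}, and every $\mathcal{R}$ operation (on the non-trivial resonant clusters as in \eqref{asso1QQQQQ}, and, since the localization of Section \ref{sec:5} acts also on the $\bn\neq0$ source terms, on the special clusters as well) produces a dimensional gain $\g^{c_T(h^{\text{ext}}_T-h_T)}$ with $c_T\in\{1,2\}$. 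The crucial observation is that a special point is dimensionally \emph{marginal} (it carries a momentum-independent $\sigma_2$ and an external field $\widehat{A}$), so it leaves untouched the dimensional balance of its cluster; hence the same telescoping as in \eqref{asso1QQQQQ}--\eqref{Magicabula} — the propagator factors $\g^{-h_T(Q_T-1)}$ against the $\g^{2(h^{\text{ext}}_T-h_T)}$ of $\mathcal{R}$, with the Diophantine decay $e^{-\zeta M_T\tilde\g^{-h_T}}$ converted into $\g^{h_T M_T}$ — produces a net factor $\g^{\e_\G h}$ with $\e_\G=0$ for $\widehat{\mathcal{K}}^{2,1,h}_\bn$. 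The Diophantine condition is exploited exactly as before, but only through the non-resonant \emph{normal} clusters; splitting the vertex decay as $e^{-\frac\eta4|\bn|}\bigl(\prod_v e^{-\frac\eta8|\bn_v|}\bigr)\bigl(\prod_v e^{-\frac\eta8|\bn_v|}\bigr)$ — where the first factor, using $|\bn|\le|\bn_s|+\sum_v|\bn_v|$ together with the $e^{-\frac\eta4|\bn_s|}$ of the $z$-point, gives the claimed decay, the second supplies the cutoff $\chi_\G$, and the third the convergence of $\sum_{\bn_v}$ — and then summing over scales and over $q$ (convergent for small $|\l|$ as in \eqref{SommareAscensori}--\eqref{apro}) yields the first bound of \eqref{m1}; the limit $i\to\infty$ follows as in the final part of the proof of Lemma \ref{prop:EstimateResonantClusters}.

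For $\b^{(j)}_{z,\bn,h}$ (see \eqref{rc2}) one repeats the estimate for the graphs $\G\in\mathcal{G}^{R,2,1,h-1,j}_{\bn,q}$ with $h_\G=h$ and $q\ge1$: since there is at least one normal point and the minimal internal scale is $h$, no internal line can carry the full external momentum, so $\G$ contains at least one maximal non-resonant normal cluster, and the same mechanism that yields the $\g^h$ in $\b_{\n,h}$ yields a $\g^h$ here, while the $q\ge1$ normal vertices supply the extra power of $|\l|$; this is \eqref{m2}. Then $Z^{(j)}_{h,\bn}=Z^{(j)}_{1,\bn}+\sum_{k=h+1}^{1}\b^{(j)}_{z,\bn,k}$ with $\sum_{k\le1}|\b^{(j)}_{z,\bn,k}|\le C|\l|\,\frac{\g^2}{\g-1}\,e^{-\frac\eta4|\bn|}$, so the limit as $h\to-\infty$ exists and $|Z^{(j)}_{h,\bn}|\le C\,e^{-\frac\eta4|\bn|}$ for $|\l|$ small, closing the induction. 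Finally, $\widehat{\mathcal{K}}^{0,2,h}_\bn$ is the value of a chain of normal points and two special points closed into a loop, with one extra propagator $g^{(h_\G)}$; the loop momentum sum $\frac1{|\L_i|}\sum_\bk$ contributes $\g^{2h_\G}$, exactly compensated by that extra propagator and the absence of external lines, so this kernel is again marginal — the only loss being in the $\bn$-decay, since two marginally-decaying $z$-insertions must now be accommodated, producing $e^{-\frac\eta8|\bn|}$ instead of $e^{-\frac\eta4|\bn|}$, which is the second bound of \eqref{m1}.

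The main obstacle I expect is not the combinatorics — which runs verbatim as in Lemma \ref{prop:EstimateResonantClusters} once the setup is fixed — but the power counting with the source insertions: one must verify that the localization $\mathcal{L}$ of Section \ref{sec:5} (which, in contrast with the one of Section \ref{sec:Critical}, acts also on the $\bn\neq0$ source terms) extracts exactly the marginal part $Z^{(j)}_{h,\bn}\sigma_2\widehat{A}$ and leaves a remainder $\mathcal{R}B^{(h)}$ carrying a genuine dimensional gain, and that $\b^{(j)}_{z,\bn,h}$ carries the factor $\g^h$ uniformly in $\bn$ with the exponential weight, so that the $Z^{(j)}_{h,\bn}$ have finite limits as $h\to-\infty$; these limits are precisely what yields, in Section \ref{sec:5}, the quasi-periodic amplitude $Z_{j,\bx}(\l)$ of the energy correlations with unchanged critical exponents.
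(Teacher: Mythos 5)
Your proposal is correct and follows essentially the same route as the paper's proof: downward induction on $h$ with the hypothesis $|Z^{(j)}_{h',\bn}|\le C e^{-\frac{\eta}{4}|\bn|}$ closed via the bound on $\b^{(j)}_{z,\bn,h}$, the cluster-by-cluster power counting in which the special ($z$- and $B$-) points are marginal and the $\mathcal{R}$ on special clusters yields a first-order gain $\g^{h_T^{\mathrm{ext}}-h_T}$, the Diophantine gain extracted only from non-resonant normal clusters, and the distinction that for $\widehat{\mathcal{K}}^{2,1,h}_\bn$ the single $\bn_s$ is fixed by the constraint (giving $e^{-\frac{\eta}{4}|\bn|}$) while for $\widehat{\mathcal{K}}^{0,2,h}_\bn$ one of the two special labels must be summed, forcing the split that yields $e^{-\frac{\eta}{8}|\bn|}$, with the loop factor $\g^{2h_\G}$ compensated by the extra propagator and the second special insertion. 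The only cosmetic deviations (starting the induction at $h=1$ rather than $2$, and the slightly informal justification that graphs contributing to $\b^{(j)}_{z,\bn,h}$ contain a non-resonant maximal cluster) do not affect the argument.
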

\begin{proof}
Assume inductively that the statement is valid for $k\ge h+1$; then $|Z_{\bn,2}^{(j)}|\le C_1 e^{-\h |\bn|}$
by \pref{pip} and by induction
\begin{equation}
Z_{\bn,h}^{(j)}=Z_{\bn,2}^{(j)}+\sum_{r=h}^2 \b_{z,\bn,r}^{(j)}\le 
2 C_Z  e^{-\frac{\eta}{4} |\bn|}
\end{equation}
assuming $|\l|4 C (1-e^{-\frac{\eta}{4} })^2\le C_1$.

We start from the first of \pref{m1}. Considering that the operator $\mathcal{R}$ acting on a special cluster gives a factor $\gamma^{h_T^{\text{ext}}-h_T}$, one proceeds as  in the proof of Lemma \ref{prop:EstimateResonantClusters} to get (instead of \eqref{asso1QQQQQ})
\begin{equation}\label{asso1TER}
\begin{split}
\SVert \chi_\G W_{\G,l} \SVert	
 \leq& \bar C_1 (c C_1 C_2)^q 
|\l|^q\ \Bigg(\prod_{\substack{T\, \text{n.t.} \\ S_T=1 \\ T\neq\G}} \gamma^{h_{T}^{\text{ext}}-h_{T}}\Bigg)e^{-\frac{\eta}{4}|\bn_s| }\left(\prod_{v \, \text{n.s.}} e^{-\frac{\eta}{2}|\bn_v| }\right)  \\
&\times \left( \prod_{T\,\text{n.t.}} \g^{-h_T (M^n_T+R^n_T+S_T^{sp}  -1)}
\right) 
\Bigg(\prod_{\substack{T\, \text{n.t.}\\ \bn_T=0 \\ S_T=0}} \gamma^{2(h_{T}^{\text{ext}}-h_{T})}\Bigg)\prod_{T\,\text{n.t.}} \gamma^{h_{ T} M^\n_{T}}
\end{split}
\end{equation}
where $c=18$ (up to 2 derivatives to points and vertex, with $j=0,1$),
$
\bn_s$ is the momentum label of the special point,
$S_T^{sp}$ is the number of special end-points contained in $T$.
We can write
\begin{equation}
1=\g^{-h_\G} \Bigg(\prod_{\substack{T\, \text{n.t.} \\ S_T=1 \\ T\neq\G}} \gamma^{h_{T}^{\text{ext}}-h_{T}} \Bigg)
\prod_{T\,\text{n.t.}} \gamma^{h_{ T} M_T^{sp}} \, .
\end{equation}
We get therefore
\begin{equation}\label{asso1TopoGigio}
\begin{split}
\SVert \chi_\G W_{\G,l} \SVert	
 \leq& (c C_1 C_2)^q 
|\l|^q \g^{-h_\G} \Bigg(\prod_{\substack{T\, \text{n.t.} \\ S_T=1 \\ T\neq\G}} \gamma^{2(h_{T}^{\text{ext}}-h_{T})}\Bigg)\left(\prod_{v \, \text{n.s.}} e^{-\frac{\eta}{2}|\bn_v| }\right)
e^{-\frac{\eta}{4}|\bn_s| }  \\
&
\hspace{-2cm}\times\left( \prod_{T\,\text{n.t.}} \g^{-h_T (M^n_T+R^n_T+S_T^{sp}  -1)}
\right) 
\Bigg(\prod_{\substack{T\, \text{n.t.} \\ \bn_T=0 \\ S_T=0}} \gamma^{2(h_{T}^{\text{ext}}-h_{T})}\Bigg)\left(\prod_{T\,\text{n.t.}} \gamma^{h_{ T} M^\n_{T}}\right)
\prod_{T\,\text{n.t.}} \gamma^{h_{ T} M_T^{sp}}  \, .
\end{split}
\end{equation}
We now use that
\begin{equation}\label{eq:Calimero}
\begin{split}
&\left(\prod_{T\,\text{n.t.}} \g^{-h_T (R^n_T+S_T^{sp}-1)}\right)\Bigg(\prod_{\substack{T\, \text{n.t.} \\ S_T=1 \\ T\neq\G}} 
\gamma^{h_{T}^{\text{ext}}-h_{T}}\Bigg) \\
&\qquad \times \Bigg(\prod_{\substack{T\, \text{n.t.} \\ \bn_T=0 \\ S_T=0}} \gamma^{h_{T}^{\text{ext}}-h_{T}}
\Bigg)\left(\prod_{T\,\text{n.t.}} \gamma^{h_{ T} M^\n_{T}}\right)\prod_{T\,\text{n.t.}} \gamma^{h_{ T} M^{sp}_{T}} \le \g^{h_\G}
\end{split}
\end{equation}
and following the same argument of Lemma \ref{prop:EstimateResonantClusters} from \eqref{eq:436} to \eqref{eq:Intermediate_Proof} we get rid of all $\gamma^{-h_T M_T}$'s and,
since $\G$ is a special cluster, we finally obtain
\begin{equation}\label{asso1df}
\begin{split}
\SVert\chi_\G W_{\G,l} \SVert	
 \leq& \bar C_1 (c C_1 C_2 C_3)^q 
|\l|^q\Bigg(\prod_{\substack{T\,\text{n.t.} \\ S_T=1\\ T\neq\G}} 
\gamma^{h_{T}^{\text{ext}}-h_{T}}\Bigg) \Bigg(\prod_{\substack{T\,\text{n.t.} \\ S_T=0}} \gamma^{h_{T}^{\text{ext}}-h_{T}} \Bigg) \\
&\times e^{-\frac{\eta}{4}|\bn_s| }\left(
\prod_{v \, \text{n.s.}} e^{-\frac{\eta}{8}|\bn_v| }\right)  \prod_{v \, \text{n.s.}} e^{-\frac{\eta}{4}|\bn_v| } \, .
\end{split}
\end{equation}
To handle the sum over $\G$, we perform the sum over the scales as in Lemma \ref{prop:EstimateResonantClusters},
while in the sum over $\bn_v$'s one uses that $\bn$ is fixed, and
$\sum_{v=1}^q \bn_v+\bn_s=\bn$: the sum over $\bn_v$'s and $\bn_s$ can be performed only on $\bn_1,\dots,\bn_q$. Thus, using triangular inequality $\sum_{v=1}^q|\bn_v|+|\bn_s| \geq |\bn|$ in the last product of \eqref{asso1df}, one has
\begin{equation}
e^{-\frac{\eta}{4}|\bn_s| }\left(
\prod_{v \, \text{n.s.}} e^{-\frac{\eta}{8}|\bn_v| }\right)  \prod_{v \, \text{n.s.}} e^{-\frac{\eta}{4}|\bn_v| } \leq e^{-\frac{\eta}{4}|\bn|}\left(
\prod_{v \, \text{n.s.}} e^{-\frac{\eta}{8}|\bn_v| }\right)
\end{equation}
and then one can sum over $\bn_1,\dots,\bn_q$ as in Lemma \ref{prop:EstimateResonantClusters}.
Therefore
\begin{equation}
\sum_{\substack{\Gamma \in \mathcal{G}^{R,2,s,h,\mathcal{J}}_{\bn,q}}} 
\SVert \chi_\G W_{\G,l} \SVert
\leq  e^{-{\eta\over 4} |\bn|}|\lambda|^q \bar C_1 \bar C^q
\end{equation}
with $\bar C=(c\times 3) C_1C_2 C_3 {(1-e^{- {\eta\over 8}})^{ -2}\over (1-1/\g))}$;
by summing over $q$ we get, for $|\l|\le \bar C/2$ we get
\begin{equation}\label{eq:DaRiferirsiQua}
\sum_{q\ge 0} \sum_{\substack{\Gamma \in \mathcal{G}^{R,2,s,h,\mathcal{J}}_{\bn,q}}} 
\SVert\chi_\G W_{\G,l} \SVert
\leq  e^{-{\eta\over 4} |\bn|} \bar C_1 2\bar C \, .
\end{equation}

We choose $|\l| \le \min \{C/2, C_1/(4 C (1-e^{-\frac{\eta}{4} })^2\}$,
with $C=\bar C_1 4 C_1 C_2$.

In order to prove \pref{m2}, we note that we have to bound $\widehat{\mathcal{K}}^{2,1,h}_\bn(0,0,j)$. This is exactly the same argument used to prove \eqref{eq:ResonantClustersW1}.
%
%

Finally we have to prove the second of \pref{m1}.
Using that $\G\in\mathcal{G}^{R,0,2,h}_{\bn,q}$, the analogue of \eqref{asso1QQQQQ} becomes
\begin{equation}\label{asso1LAT}
\begin{split}
|W_{\G,l}(\pp_1,\pp_2,j_1,j_2)|	
 \leq& (4 C_1 C_2)^q \g^{2 h_\G} e^{-\frac{\eta}{4} |\bn_{s_1}|} e^{-\frac{\eta}{4} |\bn_{s_2}|}
|\l|^q\ \Bigg(\prod_{\substack{T\, \text{n.t.}\\ S_T=1 \\ T\neq\G}} \gamma^{h_{T}^{\text{ext}}-h_{T}}\Bigg)\\
&\times\left(\prod_{v \, \text{n.s.}} e^{-\frac{\eta}{2}|\bn_v| }\right)
\left( \prod_{T\,\text{n.t.}} \g^{-h_T (M_T^n+R^n_T+S_T^{sp}  -1+\d_T)}
\right) 
 \\&\times
\Bigg(\prod_{\substack{T\, \text{n.t.} \\ \bn_T=0 \\ S_T=0}} \gamma^{2(h_{T}^{\text{ext}}-h_{T})}\Bigg)\prod_{T\,\text{n.t.}} \gamma^{h_{ T} M^\n_{T}}
\end{split}
\end{equation}
where the extra $\g^{ 2 h_\G}$ comes from the integration over $\bk$ and the compact support properties of the propagators at scale $h_{\G}$; moreover $\d_\G=1$ and $\d_T=0$ if $T\neq\G$. 
Note that, since $\Gamma$ has two special points, we have

\begin{equation}
1=\g^{-2h_\G} \Bigg(\prod_{\substack{T\,\text{n.t.}, \\ T\neq\G}} \gamma^{S_T(  h_{T}^{\text{ext}}-h_{T})}\Bigg)
\prod_{T\,\text{n.t.}} \gamma^{h_{ T} M_T^{sp}} \, .
\end{equation}
To prove the second of \eqref{m1}, one repeats the argument used to prove the first of \eqref{m1} with, instead of \eqref{eq:Calimero}, the following
\begin{equation}\label{eq:CastoriSelvaggi}
\begin{split}
&\Bigg(
\prod_{\substack{T\,\text{n.t.} }} \g^{-h_T (R^n_T+S_T^{sp}+\d_T  -1)}\Bigg) \Bigg(\prod_{\substack{T\,\text{n.t.} \\ S_T=1,2 \\ T\neq\G}} 
\gamma^{h_{T}^{\text{ext}}-h_{T}}\Bigg) \\
&\qquad\qquad \times \Bigg(\prod_{\substack{T\,\text{n.t.} \\ \bn_T=0 \\ S_T=0}} \gamma^{h_{T}^{\text{ext}}-h_{T}}\Bigg)\left(
\prod_{T\,\text{n.t.}} \gamma^{h_{ T} M^\n_{T}}\right)\prod_{T\,\text{n.t.}} \gamma^{h_{ T} M_T^{sp}} \le 1\, .
\end{split}
\end{equation}
Then, to perform the sum over $\bn_v$'s one now can not repeat the previous argument to isolate the $\bn_s$ as one has to sum to at least one of them. Thus, one has
\begin{equation}\label{eq:SummationOverNKappaStorti}
	e^{-\frac{\eta}{4}\sum_{v \text{ special}}|\bn_v|}e^{-\frac{\eta}{4}\sum_{v=1}^q |\bn_v|} \leq e^{-\frac{\eta}{8}|\bn|} e^{-\frac{\eta}{8}\sum_{v \text{ special}}|\bn_v|} \prod_{v \, \text{n.s.}}e^{-\frac{\eta}{8} |\bn_v|} \, .
\end{equation}
The rest of the proof proceeds as in the cases before.
\end{proof}
We now denote by $\widetilde{\mathcal{K}}_{\bn}^{0,2,h}(\pp_1,\pp_2,j_1,j_2)$ the contribution to $\hat{\mathcal{K}}_\bn^{0,2,h}(\pp_1,\pp_2,j_1,j_2)$ given by the graphs with at least one $\nu$ or $V$ point:
\begin{equation}\label{eq:Kappatildone}
	\widetilde{\mathcal{K}}^{0,s,h}_\bn({\bf p}_1, {\bf p}_2,j_1,j_2) =\sum_{q=1}^\infty
\sum_{\substack{\Gamma \in \mathcal{G}^{R,0,s,h,\mathcal{J}}_{\bn,q}}} 
W_\G(\pp_1,\pp_2,j_1,j_2) \, .
\end{equation}
\begin{corollary}\label{cor:Ktildone}
Let $|\l| < \l_0$ and let $\nu$ be chosen as in Section \ref{subsec:Counterterms}.
Then,
\begin{equation}\label{Colosseo}
	\sup_{\pp_1,\pp_2\in \mathcal{D}_{++}}|\widetilde{\mathcal{K}}^{0,2,h}_\bn(\pp_1,\pp_2,j_1,j_2)| \;  
\leq C \gamma^{\frac{h}{4}} e^{-\frac{\eta}{8} |\bn|}  \, .
\end{equation}
\end{corollary}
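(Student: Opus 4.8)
The statement to be proved is that the part $\widetilde{\mathcal{K}}^{0,2,h}_\bn$ of the vacuum kernel $\widehat{\mathcal{K}}^{0,2,h}_\bn$ coming from graphs which contain at least one $\nu$-point or one $V$-point carries an extra gain $\gamma^{h/4}$ compared with the bound $C e^{-\eta|\bn|/8}$ of Lemma \ref{lem:Sandokan} (which holds for all graphs, in particular those made only of $z$- and $B$-points). The plan is to re-run the estimate \eqref{asso1LAT}--\eqref{eq:SummationOverNKappaStorti} from the proof of Lemma \ref{lem:Sandokan}, tracking one unit of scale factor that was thrown away there. Recall that in that proof one has $\delta_\Gamma=1$ and the prefactor $\gamma^{2h_\Gamma}$ from the loop integration over $\bk$ at scale $h_\Gamma$; one unit of this, $\gamma^{h_\Gamma}$, was used to absorb the two external special points via $\gamma^{-2h_\Gamma}\prod_{T}\gamma^{h_T M_T^{sp}}$ up to the missing $\gamma^{-h_\Gamma}$, and the rest was spent in \eqref{eq:CastoriSelvaggi}. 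The key new input is that the presence of a $\nu$- or $V$-point creates an extra small factor which can be converted into an extra power of $\gamma^{h}$.

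Concretely, first I would split into two cases according to whether the distinguished $\nu$/$V$-point $v_0$ is itself a maximal cluster in $\Gamma$ or is contained in some maximal non-trivial cluster. In the first case, if $v_0$ is a $\nu$-point then $\overline{W}_{v_0}=\gamma^{h_\Gamma}\nu_{h_\Gamma}\sigma_2$ and by Remark \ref{rem:GainNu} we have $|\nu_{h_\Gamma}|\le\gamma^{(h_\Gamma-2)/4}|\lambda|$, which is exactly the missing $\gamma^{h_\Gamma/4}$ (even with room to spare, $\gamma^{h_\Gamma/4}\le\gamma^{h/4}$ since $h_\Gamma\le h$); if $v_0$ is a $V$-point with $\bn_{v_0}=0$ then $\overline{W}_{v_0}=\mathcal{R}\widehat{\mathcal{V}}_0$, which by Remark \ref{rem:CostumeDaBagno} is bounded by $\gamma^{2h_\Gamma}C|\lambda|$, two full extra powers of $\gamma$, more than enough; if $v_0$ is a $V$-point with $\bn_{v_0}\neq0$ then there is at least one non-resonant trivial cluster, hence (by the same mechanism as in Lemma \ref{prop:EstimateResonantClusters}, using $M_\Gamma\ge1$) one extracts an extra $\gamma^{h_\Gamma}$ from the support constraint that an internal line adjacent to $v_0$ cannot have zero momentum at scale $h_\Gamma$. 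In the second case, $v_0\in T$ for a maximal non-trivial cluster $T$, and I would recurse: inside $T$ exactly the same three sub-cases apply with $h_T$ in place of $h_\Gamma$, yielding an extra $\gamma^{h_T/4}\ge\gamma^{h/4}$ (note $h_T\le h^{ext}_T\le h$ is false in general, but $h_T\ge h+1$ and the external-line scale of $T$ is $\le h$, so actually $\gamma^{h_T}\ge\gamma^{h}$ and the gain goes the *right* way here since $\nu_{h_T}$ is small — one must be slightly careful and use $|\nu_{h_T}|\le\gamma^{(h_T-2)/4}|\lambda|\le\gamma^{-1/2}|\lambda|$, a uniform gain, which combined with one already-available $\gamma^{h}$ from the overall $\gamma^{2h_\Gamma}$ prefactor gives $\gamma^{h}\cdot$const; then trade down a factor $\gamma^{3h/4}\ge$const to leave $\gamma^{h/4}$). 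The cleanest route is to bound the $\nu$/$V$-vertex by a $\lambda$-dependent constant times, respectively, $\gamma^{h_T/4}$, $\gamma^{2h_T}$, or $\gamma^{h_T}$, observe $h_T\ge h+1> h$ in every case *except* $T=\Gamma$, and in the exceptional case $T=\Gamma$ use $h_\Gamma\le h$ so $\gamma^{h_\Gamma/4}\le\gamma^{h/4}$ directly.

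After inserting this extra factor, the remaining combinatorics — the rearrangement \eqref{eq:CastoriSelvaggi} of the scale products over clusters, the reorganization $\gamma^{-\zeta M_T\tilde\gamma^{-h_T}}\le C_3^q\prod\gamma^{2h_TM_T}\prod\gamma^{h_TM_T^I}$ of the Diophantine gains, the summation over $\bn_v$'s via \eqref{eq:SummationOverNKappaStorti}, and the sum over scales via \eqref{apro} — is literally identical to the one carried out in Lemma \ref{lem:Sandokan}, so I would simply invoke it, noting that the leftover factors $\prod\gamma^{h_TM_T^I}$ and the extra $\gamma^{h_T}$ flagged in Remark \ref{rem:Caligola} provide exactly the slack needed to absorb the fact that we have used up one of the two $\gamma^{h_\Gamma}$ prefactors differently. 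The sum over $q$ converges for $|\lambda|\le\lambda_0$ as before, producing the claimed bound $C\gamma^{h/4}e^{-\eta|\bn|/8}$. The main obstacle, and the only place requiring genuine care rather than bookkeeping, is the nested/recursive case where the $\nu$-point sits deep inside several layers of non-trivial clusters: there one must check that the extra $\gamma^{h_T/4}$ (or the uniform gain $\gamma^{-1/2}|\lambda|$ from $|\nu_{h_T}|$) is not needed by, and does not interfere with, the telescoping identities \eqref{eq:Calimero}/\eqref{eq:CastoriSelvaggi} that already saturate the available powers of $\gamma^{h_T}$; the resolution is that those identities are equalities producing exactly $\gamma^{h_\Gamma}$, and we are adding a *new* independent factor associated to the single vertex $v_0$, so no clash occurs, but writing this cleanly requires isolating $v_0$'s contribution before applying the cluster-product rearrangements.
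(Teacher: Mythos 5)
Your identification of the sources of the gain is correct and matches the paper: $|\nu_{h'}|\le \gamma^{h'/4}|\lambda|$ from Remark \ref{rem:GainNu} for $\nu$-points, the extra power of $\gamma^{h_T}$ from Remarks \ref{rem:CostumeDaBagno} and \ref{rem:Caligola} for resonant $V$-points, and the retained Diophantine factor $\prod_T\gamma^{h_T M_T^I}$ for non-resonant ones. Your treatment of the case in which the distinguished vertex is a maximal cluster of $\Gamma$, so that the gain is produced directly at scale $h_\Gamma=h$, is also fine. The gap is in the nested case, which you correctly flag as the delicate point but then resolve incorrectly. When the distinguished vertex $v_0$ lies inside a chain of non-trivial clusters $T_r\subset\dots\subset T_1=\Gamma$, the gain you extract is $\gamma^{h_{T_r}/4}$ at the \emph{internal} scale $h_{T_r}\ge h+1$; since $\gamma>1$ this is \emph{larger} than the target $\gamma^{h/4}$, i.e.\ it goes the wrong way for an upper bound, and the ``uniform gain'' $\gamma^{(h_{T_r}-2)/4}\le\gamma^{-1/4}$ is only an $O(1)$ improvement, which cannot produce a factor vanishing as $h\to-\infty$. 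Your proposed fix --- combining this with ``one already-available $\gamma^{h}$ from the overall $\gamma^{2h_\Gamma}$ prefactor'' --- fails because no such spare factor exists: in the proof of the second bound of \eqref{m1} the full $\gamma^{2h_\Gamma}$ from the loop integration is consumed by the identity $1=\gamma^{-2h_\Gamma}\big(\prod_{T\ne\Gamma}\gamma^{S_T(h_T^{\mathrm{ext}}-h_T)}\big)\prod_T\gamma^{h_T M_T^{sp}}$ together with \eqref{eq:CastoriSelvaggi}, and these relations can be saturated (they give exactly $1$ for the graph with $q=0$ and both special points maximal in $\Gamma$), so nothing of order $\gamma^{h}$ is left over in general.

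The missing idea is the telescoping along the cluster chain, which is precisely what the paper's proof does: split $\prod_T\gamma^{h_T^{\mathrm{ext}}-h_T}=\big(\prod_T\gamma^{\frac34(h_T^{\mathrm{ext}}-h_T)}\big)\prod_T\gamma^{\frac14(h_T^{\mathrm{ext}}-h_T)}$, keep a factor $\gamma^{h_T/4}$ per $\nu$-point and per non-resonant trivial cluster (i.e.\ $\prod_T\gamma^{\frac{h_T}{4}(M_T^\nu+M_T^I)}$, with at least one such factor present since $q\ge1$), and use that $h_{T_j}^{\mathrm{ext}}=h_{T_{j-1}}$ along the chain containing $v_0$, so that $\gamma^{h_{T_r}/4}\prod_{j}\gamma^{\frac14(h_{T_j}^{\mathrm{ext}}-h_{T_j})}\le\gamma^{h_\Gamma/4}=\gamma^{h/4}$, while the remaining $\prod_T\gamma^{\frac34(h_T^{\mathrm{ext}}-h_T)}$ still controls the sum over scales. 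Without this step your argument yields only $Ce^{-\eta|\bn|/8}$ in the nested case, not the claimed $C\gamma^{h/4}e^{-\eta|\bn|/8}$.
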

\begin{proof}
	Repeating the argument of Lemma \ref{lem:Sandokan}, one has to estimate $W_{\Gamma,l}$ for graphs that have $q\geq 1$. One gets
a bound identical to 
\eqref{asso1LAT} with $\left(\prod_{T\,\text{n.t.}} \gamma^{\frac{5}{4}h_{ T} M^\n_{T}}\right)$ replacing $\left(\prod_{T\,\text{n.t.}} \gamma^{h_{ T} M^\n_{T}}\right)$ (we used Remark \ref{rem:GainNu} to estimate the $\nu$ vertices, \textit{i.e.}\ $|\nu_h| \leq \gamma^{\frac{h}{4}} |\lambda|$ and Remarks \ref{rem:CostumeDaBagno} and \ref{rem:Caligola} to estimate resonant $V$ vertices as $|\mathcal{R} \hat{V}_0| \leq \gamma^{\frac{5}{4}h_T}$). We also decompose the exponential as in \eqref{Samurai} and from \eqref{asxq} we keep the factor $\prod_{T \text{ n.t.}} \gamma^{h_T M_T^I} \leq \prod_{T \text{ n.t.}} \gamma^{\frac{1}{4}h_T M_T^I} $. Using now \eqref{eq:SummationOverNKappaStorti}, and \eqref{eq:CastoriSelvaggi}, 
we get
\begin{equation}
\begin{split}
	| \chi_\G W_{\G,l}(\pp_1,\pp_2,j_1,j_2) |	
 &\leq (c C_1 C_2 \tilde{C}_3)^q 
|\l|^q e^{-\frac{\eta}{8}|\bn|}\Bigg(\prod_{v \, \text{n.s.}} e^{-\frac{\eta}{8} |\bn_v|} \Bigg)\Bigg(\prod_{v\, \text{special}} e^{-\frac{\eta}{8} |\bn_v|} \Bigg)\\ 
&\times\Bigg(\prod_{T \, \text{n.t.}} \gamma^{h_{T}^{\text{ext}}-h_{T}}\Bigg) 
\prod_{T\,\text{n.t.}} \gamma^{\frac{h_T}{4} (M^\n_{T}+M_T^I)}
  \, .
 \end{split}
\end{equation}
We now split 
\begin{equation}
	\prod_{T \, \text{n.t.}} \gamma^{h_{T}^{\text{ext}}-h_{T}}=\Bigg(\prod_{T \, \text{n.t.}} \gamma^{\frac{3}{4}(h_{T}^{\text{ext}}-h_{T})} \Bigg) \prod_{T \, \text{n.t.}} \gamma^{\frac{1}{4}(h_{T}^{\text{ext}}-h_{T})}  \, ,
\end{equation}
and since, by hypothesis, $q\geq 1$ for at least one cluster, one has $M_T^\nu+M_T^I \geq 1$. Therefore, using the telescopic sum, we can bound
\begin{equation}
	\Bigg(\prod_{T \, \text{n.t.}} \gamma^{\frac{1}{4}(h_{T}^{\text{ext}}-h_{T})} \Bigg) \prod_{T \, \text{n.t.}} \gamma^{\frac{1}{4}h_T(M_T^\nu+M_T^I)} \leq \gamma^{\frac{h_\Gamma}{4}}=\gamma^{\frac{h}{4}} \, .
\end{equation}
At this point, one has
\begin{equation}
	\SVert \chi_\G W_{\G,l} \SVert	
 \leq (c C_1 C_2 \tilde{C}_3)^q 
|\l|^q e^{-\frac{\eta}{8}|\bn|} \gamma^{\frac{h}{4}}\Bigg(\prod_{v \, \text{n.s.}} e^{-\frac{\eta}{8} |\bn_v|} \Bigg)\Bigg(\prod_{v\, \text{special} } e^{-\frac{\eta}{8} |\bn_v|} \Bigg)\prod_{T \, \text{n.t.}} \gamma^{\frac{3}{4}(h_{T}^{\text{ext}}-h_{T})}
  \, .
\end{equation}
and one can sum over all $\bn_v$'s and one proceeds as in the proofs of Lemma \ref{prop:EstimateResonantClusters} to sum over scales to get \eqref{Colosseo}.
\end{proof}

\subsection{The decay of the energy correlations}

Before starting the analysis, let us recall that $\mu=O(\beta-\beta_c)$ and, in particular, $\mu=0$ identifies the critical temperature (see Remark \ref{rem:GainNu1}).

We have now to consider the energy correlation 
$
S(\xx_1,j_1;\xx_2,j_2)$ given by \pref{LeEquazioniTutteUguali}.  
We consider first the infinite volume limit ${\partial^2\over \partial A_{\xx_1,j_1} \partial A_{\xx_2,j_2}}
W_{{\boldsymbol\a },l}(A)$:
\begin{equation}\label{ter1}
{ \partial^2\over \partial A_{\xx_1,j_1} \partial A_{\xx_2,j_2}}
W_{{\boldsymbol\a},l}
(A)|_{A=0} =\sum_{h=h^*}^2 \sum_{\bn{\in \mathbb{Z}^2}} 
e^{-2 \pi \ii \Omega \bn \cdot \xx_2}
\mathcal{K}^{0,2,h}_\bn({\bf x}_1-\bx_2,j_1,j_2 ) \, .
\end{equation}
where $h^*=\log_\g \mu$ and
\begin{equation}
	\mathcal{K}^{0,2,h}_\bn(\bx_1-\bx_2,j_1,j_2):= \, {1\over |\L_i|}\sum_{\pp} e^{-\ii \pp\cdot (\bx_1-\bx_2)}
\widehat{\mathcal{K}}^{0,2,h}_\bn(\pp,\pp+2\pi \Omega \bn,j_1,j_2).
\end{equation}

Indeed, one can write $\mathcal{K}^{0,2,h}_\bn(\bx_1-\bx_2,j_1,j_2)$ as 
the sum over graphs in coordinate space. On each graph, the constraint between the labels $\bn_v$ and the scales $h_\ell$ remains unchanged, due to the presence of the $\chi_{\G}$ function.

Due to the Gevrey regularity of the cutoff function $\chi$ defined in \eqref{ChiLHaVisto}, there exist constants $C, \kappa >0$ such that, for $k>h^*$, the propagator obeys to the bounds, see e.g.\ Appendix A of \cite{GMR}
\begin{equation}\label{eq:PropagAttore}
 |g^{(k)}(\bx)|\le C\g^k e^{-\k (\g^k |\bx|)^{1\over 2}}
\end{equation}
and
\begin{equation}
|g^{(\le h^*)}(\bx)|\le C\g^{h^*} e^{-\k (\g^{h^*} |\bx|)^{1\over 2}} \, .
\end{equation}
Note that $\g^{h^*}=O(|\m|)$ for small $\m$.
In the analysis of the graphs in coordinate space, we use \eqref{eq:PropagAttore} to bound each propagator. Fixing $\bx_1$, the $L_1$ norm is therefore bounded exactly as in the proof of Lemma \ref{lem:Sandokan}.

Regarding the bound on the point-wise norm (i.e.\ when both $\bx_1$ and $\bx_2$ are fixed), 
we can write, if $\bar h$ is the scale of the smallest cluster $T\subset \G$ such that $S_{T}=2$, for $k \geq \bar{h}$, 
$ e^{-\k (\g^k |\bx|)^{1\over 2}}
\le e^{-\k/2 (\g^{\bar h} |\bx|)^{1\over 2}}e^{-\k/2 (\g^k |\bx|)^{1\over 2}}
$
so that
we can extract a factor 
$e^{-\k/2 (\g^{\bar h} |\bx|)^{1\over 2}}$ from each propagator.
Moreover there is an extra $\g^{2\bar h}$ in the bound due to the lack 
of sum over the coordinates so that
\begin{equation}
\left|{ \partial^2\over \partial A_{\xx_1,j_1} \partial A_{\xx_2,j_2}}
W_{{\boldsymbol\a},l}
(A)|_{A=0} \right|\le \sum_{\bar h=h^*}^2 C \g^{2 \bar h}  e^{-\k/2 (\g^{ \bar h} |\bx_1-\bx_2|)^{1\over 2}}\le  C_1 e^{-\k_1 (| \m| |\bx_1-\bx_2|)^{1\over 2}} \, ,
\end{equation}
for some constant $\k_1>0$. In deriving the above expression we have used that the sum over all the scales can be done fixing $\bar h$ instead of $h$.

To get a sharper estimate in the case $\m=0$, we can split ${ \partial^2\over \partial A_{\xx_1,j_1} \partial A_{\xx_2,j_2}}
W_{{\boldsymbol\a},l}
(A)|_{A=0} $
in the contribution with $q\ge 1$ and in the contribution with $q=0$.
The term with $q \ge 1$, according to Corollary \ref{cor:Ktildone}, has an extra $\g^{\bar h\over 4}$.  The term with $q=0$ contains two special vertices, each one of which is associated with a $Z_{k,\bn}^{(j)}$, with $k\ge h$.

In the term with $q=0$ we replace the velocities $a_j^{(k)}$ appearing in the propagators $g^{(k)}$ with $a_j^{(\infty)}$ since the difference $a_j^{(-\infty)}-a_j^{(k)}$ is bounded by $\gamma^{k/4}$ by \eqref{eq:ResonantClustersW1}.

In the same way
we can replace $Z_{k,\bn}^{(j)}$ with 
$Z_{-\io, \bn}^{(j)}$ and the difference is bounded by 
$\gamma^{k/4}$.  Moreover,
$Z^{(j)}_{-\io,0}=Z^{(j)}_{-\io,0,\l=0}+\l F_0(\l)$
and $Z^{(j)}_{-\io,\bn}=\l F_\bn(\l)$ for $\bn\neq0$, with $F_0, F_\bn$ bounded. 
Therefore, we can write
\begin{equation}\label{551}
{\partial^2\over \partial A_{\xx_1,j_1} \partial A_{\xx_2,j_2}}W_{\boldsymbol{\a},l}(A)|_{A=0}=
S_a(\xx_1,j_1;\xx_2,j_2) +S_b(\xx_1,j_1;\xx_2,j_2) \, ,
\end{equation}
where we have included in 
$S_b(\xx_1,j_1;\xx_2,j_2) $ the contributions with $q\ge 1$ and the terms with $q=0$ and containing $a_j^{(-\infty)}-a_j^{(k)}$
or $Z_{k,\bn}^{(j)}-Z_{-\io,\bn}^{(j)}$ so that 
\begin{equation}\label{552}
|S_b(\xx_1,j_1;\xx_2,j_2)|\le C_1
\sum_{\bar h=-\io}^2 \g^{2 \bar{h}+\bar{h}/4} e^{-\k (\g^{\bar h} |\bx_1-\bx_2|)^{1\over 2}}\le {C_2\over |\xx_1-\xx_2|^{2+1/4}} \, .
\end{equation}
In $S_{a}(\xx_1,j_1;\xx_2,j_2)$ are collected the terms with $q=0$ and 
$a_j^{(k)}$, $Z_{\bn,k}^{(j)}$ replaced by their limiting values, so that,
calling $\bar{g}(\bx_1,\xx_2)$ the propagator with velocities $a^{(-\infty)}_j$, we have
\begin{equation}\label{553}
S_a(\bx_1,j_1;\bx_2,j_2)=\sum_{\bn_1,\bn_1 \in \mathbb{Z}^2}Z^{(j_1)}_{-\io,\bn_1}Z^{(j_2)}_{-\io,\bn_2}e^{2\pi \ii \O \bn_1 \cdot \bx_1}e^{2\pi \ii\O \bn_2 \cdot \bx_2}
\sum_{\o \in \pm}
\bar g_{\o,\o}(\xx_1,\xx_2) \bar g_{-\o,-\o}(\xx_2,\xx_1) \, .
\end{equation}
Finally, we have to perform the sum over $\balpha$ in \eqref{eq:CinquePuntoUno}.
First note that $Z$ is non-vanishing; we write
\begin{equation}
Z=\hat Z_{--} Z^0+ \hat Z_{--} \sum_{\boldsymbol\a \in \{\pm\}^2} \t_{\boldsymbol\a} Z^0_{\boldsymbol\a} \left({\hat Z_{\boldsymbol\a}\over \hat Z_{--}}-1\right)
\end{equation}
where $Z^0=Z|_{\l=0}$ denotes the partition function of the Ising model for $\lambda=0$, $\hat{Z}_\balpha=Z_\balpha/Z^0_\balpha$ and $Z^0_{\boldsymbol\a}=Z_{\boldsymbol\a}
|_{\l=0}$. In the limit $i \to \infty$, ${1\over |\L_i|}\log{|Z^0_{\boldsymbol\a}|}$
is independent of boundary conditions if $\b\neq\b_c$, see \textit{e.g.}\ chapter 
IV in \cite{MW2},
{\color{black}
and the limit is reached as $O(e^{-c |\mu| \bar L_i})$
if $\bar L_i:=\min \{L_{i,0},L_{i,1}\}$. Moreover, $Z^0$ is non vanishing for $\beta \neq \beta_c$: indeed, for $\mu<0$, $Z^0_\balpha$ is positive for all $\balpha$; for 
$\mu>0$, $Z^0_\balpha$ is negative for $\balpha=++$ and positive for all other $\balpha$'s.}

We consider now ${\hat Z_{\boldsymbol\a}\over \hat Z_{--}}$;
note that $\o_i$ is the same in 
$\hat Z_{\boldsymbol\a}$ for any $\boldsymbol\a$.
$\log{\hat Z_{\boldsymbol\a}\over \hat Z_{--}}$
is sum of graphs containing at least a difference of 
propagators with different boundary conditions. We choose a point \textcolor{black}{$\bar \bx \in \Lambda_i$} and we decompose the graphs
in a term in which all the sums are in a \textcolor{black}{rectangle} around
$\bar \bx$ of side $L_{i,0}/4$ and  $L_{i,1}/4$ and a \textcolor{black}{remainder}. In the \textcolor{black}{remainder}
there is a product of propagators connecting $\bar \bx$ to a point distant
$O(\bar L_i)$, $\bar L_i:=\min \{L_{i,0},L_{i,1}\}$,
 hence such term is $O(|\l||\L_i|e^{-c |\mu| \bar L_i}  )$.
{\color{black}In the first term
we use 
Poisson summation allowing us to write the propagator as  
the infinite volume limit plus a term depending on boundary
conditions and exponentially decaying in $\bx_1-\bx_2$ when both $\bx_1$ and $\bx_2$ 
are in the  rectangle around
$\bar \bx$ of side $L_{i,0}/4$ and  $L_{i,1}/4$, hence
again we get for it
a bound $O(|\l| |\L_i|e^{-c |\mu| \bar L_i}  )$.}
Therefore,
\begin{equation}\label{eq:Carismatico}
 \left|{\hat Z_{\boldsymbol\a}\over \hat Z_{--}}-1\right|\le C|\l| 
|\L_i|e^{-c |\mu| \bar L_i}  
\end{equation}
 by using the uniform convergence, see Lemma \ref{prop:EstimateResonantClusters}.
This says that
\begin{equation}\label{eq:555AAA}
 c_1 |\hat Z_{--} Z^0| \le  |Z|\le c_2 |\hat Z_{--} Z^0|
\end{equation}
where $c_1, c_2=1+O(\l)$ constants. 

Using that  $2 Z = \sum_\balpha \tau_\balpha Z_\balpha$, we 
can write  \eqref{eq:CinquePuntoUno} as
\begin{equation}\label{eq:CinqueCinqueQUattro}
\begin{split}
S(\xx_1,&j_1;\xx_2,j_2)={\partial^2\over \partial A_{\xx_1,j_1} \partial A_{\xx_2,j_2}}W_{--}(A)|_{A=0} \\
&+
\sum_{\boldsymbol\a}\frac{\t_{\boldsymbol\a}Z_{\boldsymbol\a}}{2Z}
\Bigg[{\partial^2\over \partial A_{\xx_1,j_1} \partial A_{\xx_2,j_2}}W_{\balpha}(A)|_{A=0}
-{\partial^2\over \partial A_{\xx_1,j_1} \partial A_{\xx_2,j_2}}W_{--}(A)|_{A=0}\Bigg] \, .
\end{split}
\end{equation}
where in the first term $Z$
cancels out by \pref{eq:555AAA}.

The graphs contributing to ${\partial^2\over \partial A_{\xx_1,j_1} \partial A_{\xx_2,j_2}}W_\balpha(A)|_{A=0}$
can be also decomposed  as the limit $i\to\io$, 
independent from $\balpha$
and a difference which is vanishing. Indeed the difference
contains a difference of propagators, whose contribution is vanishing
at $|\m|>0$, and a difference of oscillating
factors $e^{\ii 2 \pi \O \bn \cdot \bx}$ which is bounded by 
$|\bx| |\bn| |\o-\o_i|$; note that $|\bx|$ produces an extra $\max_j\{L_{j,i}\}$
and $|\o-\o_i|\le C/\bar L_i^2$ \textcolor{black}{(see Section IV.7 in \cite{D})} while for the sum over $\bn$
one uses the exponential decay of the Fourier coefficients of the potential. Hence the difference vanishes in the limit because we take the limit on sequences of $L_{j,i}$ such that $\lim_{i \to +\infty} L_{1,i}/L_{0,i}=c>0$.
Moreover, if $\mu \neq 0$, as a consequence of \eqref{eq:555AAA} and \eqref{eq:Carismatico} we have that $Z^0_{\balpha'}/Z^0_\balpha=1+O(|\L_i|e^{-c |\mu| \bar L_i} )$
and $\hat Z_{\balpha'}/\hat{Z}_\balpha=1+O(|\l| |\L_i|e^{-c |\mu| \bar L_i})$.
Therefore the second term in \eqref{eq:CinqueCinqueQUattro} vanishes in the limit $i\to\infty$.

{\color{black}The first term in \eqref{eq:CinqueCinqueQUattro} can be decomposed according to \eqref{551} with $S_a$ given by \eqref{553} and $S_b$ satisfying \eqref{552}. Therefore, the first term in the r.h.s.\ of \eqref{eq:SdelTeorema} is given by $S_a$ and the decay in $|\bx_1-\bx_2|$ of $R_{j_1,j_2}(\bx_1,\bx_2)$ is given by \eqref{552}. This concludes the proof of Theorem \ref{mainthm}.}

\vspace{1.2pt}

\appendix

{\color{black}
\section{Proof of Lemma \ref{lem:LemmaImpestato}}\label{app:Grafici}
{\color{black}We begin by recalling that if $f(\widehat{\xi})$ is a polynomial in the Grassmann variables $\widehat{\xi}$, $\mathbb{E}_{{\xi}}(f)=\int P(\ud {{\xi}}) f({\widehat{\xi}})$. We then recall that}, for  $m \in \mathbb{N}$, Wick's theorem states that
\begin{equation}\label{eq:A-Wick}
	\mathbb{E}_{{\xi}}\big(\widehat{\xi}_{\bk_1,\sigma_1} \widehat{\xi}_{\pp_1,\rho_1} \cdots \widehat{\xi}_{\bk_{m},\sigma_m} \widehat{\xi}_{\pp_m,\rho_m}\big)=\sum_{p \in \mathfrak{S}_m} (-1)^{\text{sgn}(p)} \prod_{j  = 1}^m \mathbb{E}_{\xi}\big(\widehat{\xi}_{\bk_j,\sigma_j} \widehat{\xi}_{\pp_{p(j)},\rho_{p(j)}}\big)
\end{equation}
where for all $j$, $(\pp_j)_0<0$, $(\bk_j)_0>0$ and $\sigma_j,\rho_j \in \{\pm\}$ and $\mathfrak{S}_m$ denotes the set of permutations of $m$ elements. From the definition of propagator of a Grassmann Gaussian measure {\color{black}(see e.g.\ \cite[eq.\ (4.11)]{GM})}, one has
\begin{equation}\label{eq:PastaFagioliETrippaSarda}
	\mathbb{E}_\xi(\widehat{\xi}_{\bk,\sigma} \widehat{\xi}_{\pp,\rho})=[g^{(\xi)}(\bk)]_{\sigma,\rho} \delta_{\bk,-\pp} |\Lambda_i|\, .
\end{equation}

For $q \in \mathbb{N}$, let us compute $\mathbb{E}_\xi(V^q)$. Using linearity of the expectation and the explicit form of $V(\psi,\xi)$ (given in \eqref{eq:2.40}, \eqref{eq:2.40VERA}, \eqref{eq:2.41} and \eqref{eq:2.42}), we can write
\begin{equation}\label{eq:Bandumba}
	\mathbb{E}_\xi(V^q)=\sum^*\mathbb{E}_\xi\Big(\prod_{r=1}^q M_{\#_r}(\bk_r,\bn_r,\sigma_r,\rho_r,j_r)\Big)
\end{equation}
where $\sum^*$ is the sum over $\bk_r$, $\bn_r$, $\sigma_r$, $\rho_r$, $\#_r\in\{\psi,\xi,(Q,L),(Q,R)\}$, $j_r \in \{0,1\}$. The monomials $M_\sharp$ are defined as
\begin{equation}\label{eq:Abba}
	M_\psi(\bk,\bn,\sigma,\rho,j) = \frac{1}{|\Lambda_i|} \widehat{\psi}_{-\bk,\sigma} A_\bn^{(j)}[P_\psi^{(j)}(\bk,\bn)]_{\sigma,\rho} \widehat{\psi}_{\bk-2 \pi \Omega \bn,\rho} \, ,
\end{equation}
\begin{equation}\label{eq:Acca}
	M_\xi(\bk,\bn,\sigma,\rho,j) = \frac{1}{|\Lambda_i|}\widehat{\xi}_{-\bk,\sigma} A_\bn^{(j)} [P^{(j)}(\bk,\bn)]_{\sigma,\rho} \widehat{\xi}_{\bk-2 \pi \Omega \bn,\rho} \, ,
\end{equation}
\begin{equation}\label{eq:Adda}
	\begin{split}
	M_{Q,L}(\bk,\bn,\sigma,\rho,j) &= \frac{1}{|\Lambda_i|}\widehat{\psi}_{-\bk,\sigma} A_\bn^{(j)} [Q_\psi^{(j)}(\bk,\bn)]_{\sigma,\rho} \widehat{\xi}_{\bk-2 \pi \Omega \bn,\rho} \, , \\ M_{Q,R}(\bk,\bn,\sigma,\rho,j)&=\frac{1}{|\Lambda_i|}\widehat{\xi}_{-\bk,\sigma} A_\bn^{(j)} [Q_\psi^{(j)}(\bk,\bn)]_{\sigma,\rho} \widehat{\psi}_{\bk-2 \pi \Omega \bn,\rho} \, ,
	\end{split}
\end{equation}
and each one of them can be represented as in Fig.\ \ref{fig:AiutoAiuto} by associating a dashed line to each $\psi$ variable and a solid line to each $\xi$ variable.
\begin{figure}[h!]
	\begin{center}
		\begin{tikzpicture}[thick,scale=0.7]
			
			\draw[dashed] (0,0) -- (2,0);
			\draw[fill] (2,0) circle[radius=0.09];
			\draw[snake it] (2,0) -- (2,2);
			\draw[dashed] (2,0) -- (4,0);
			
			\draw[-] (5,0) -- (6,0);
			\draw[-] (6,0) -- (7,0);
			\draw[fill] (7,0) circle[radius=0.09];
			\draw[-] (7,0) -- (8,0);
			\draw[-] (8,0) -- (9,0);
			\draw[snake it] (7,0) -- (7,2);
			
			\draw[dashed] (10,0) -- (12,0);
			\draw[fill] (12,0) circle[radius=0.09];
			\draw[-] (12,0) -- (13,0);
			\draw[-] (13,0) -- (14,0);
			\draw[snake it] (12,0) -- (12,2);
						
			\draw[-] (15,0) -- (16,0);
			\draw[-] (16,0) -- (17,0);
			\draw[fill] (17,0) circle[radius=0.09];
			\draw[dashed] (17,0) -- (19,0);
			\draw[snake it] (17,0) -- (17,2);
			
			\node at (2,-0.7) {$M_\psi$};
			\node at (7, -0.7) {$M_{\xi}$};
			\node at (12,-0.7) {$M_{Q,L}$};
			\node at (17, -0.7) {$M_{Q,R}$};			
		\end{tikzpicture}
	\end{center}
	\caption{Vertices associated to monomials in \eqref{eq:Abba}, \eqref{eq:Acca} and \eqref{eq:Adda}.} \label{fig:AiutoAiuto}
\end{figure}
To compute each of the expectations on the r.h.s.\ of \eqref{eq:Bandumba} we use Wick's theorem \eqref{eq:A-Wick} and therefore $\mathbb{E}_\xi(M_{\#_1} \cdots M_{\#_q})$ reduces to the sum over permutations of products of expectations of pairs of $\xi$ variables. Each of such summand has a graphical interpretation obtained as follows. First, one draws the vertices in Fig.\ \ref{fig:AiutoAiuto} corresponding to the monomials $M_{\#_1},\dots,M_{\#_q}$. Second, one connects the solid lines corresponding to a pair $(\widehat{\xi}_{\bk_j, \sigma_j}, \widehat{\xi}_{\pp_{i}, \rho_i})$ whenever the expectation $\mathbb{E}_\xi(\widehat{\xi}_{\bk_j, \sigma_j} \widehat{\xi}_{\pp_{i}, \rho_i})$ appears in the product of expectations of the summand.
In this way one obtains a graphical object that we define as \emph{unordered graph}. As a consequence of this correspondence, one writes $\mathbb{E}_\xi(M_{\#_1}\cdots M_{\#_q})$ as the sum over all unordered graphs obtained by contracting all solid lines of the graphical elements associated to $M_{\#_1},\dots,M_{\#_q}$.
Note that there are two types of unordered graphs: \emph{connected} and \emph{disconnected}. 

We now write truncated expectations in terms of expectations. 
Let us consider $S=\{1,\cdots,s\}$ and let us denote by $\mathscr{P}_p$ the set of all possible partitions of $S$ into $p$ pairwise disjoint subsets. Define for any $I \subseteq S$
\begin{equation}
	\mathbb{E}^T(M_I)=\mathbb{E}^T(M_{i_1};\dots;M_{i_r}) \, , \qquad I=\{i_1, i_2, \dots, i_r\} \, ,
\end{equation} 
where each of the $M_i$'s is one of the monomials in \eqref{eq:Abba}, \eqref{eq:Acca} and \eqref{eq:Adda}.

One can prove (see eq.\ 2.100 in \cite{MastropietroNonPerturbative}) that the following formula connects expectations and truncated expectations:
\begin{equation}\label{eq:GenericalTruncated}
	\mathbb{E}^T_\xi(M_{\#_1};\dots;M_{\#_s})= \mathbb{E}_\xi(M_{\#_1} \cdots M_{\#_s})- \sum_{p=2}^{s} \sum_{\{I_1,\cdots,I_p\} \in \mathscr{P}_p} \prod_{j=1}^p \mathbb{E}^T_\xi(M_{I_j}) \, .
\end{equation}
Using this formula, one can prove inductively that the truncated expectations are obtained as the sum over \emph{connected} unordered graphs only. 

Using multilinearity of the truncated expectations, one has
\begin{equation}\label{eq:AbbassoICanotti}
	\mathbb{E}_\xi^T(V;q)=\sum^* \mathbb{E}_\xi^T(M_{\#_1}; \cdots; M_{\#_q}) \, ,
\end{equation}
where the $\sum^*$ is the same as in \eqref{eq:Bandumba} and the dependence on all parameters is understood. From \eqref{eq:AbbassoICanotti} and the observation after \eqref{eq:GenericalTruncated} we obtain a representation of $\mathbb{E}^T(V;q)$ in terms of connected unordered graphs. 

When $q=1$, $\mathbb{E}_\xi^T(V(\psi,\xi);q)=S_{\text{int}}^{(\psi)}(\psi)$ which gives the first term in the sum \eqref{eq:Wstortoq1} with $W_\Gamma$ given by the definition \eqref{eq} in the case $q=1$.
	
For the case $q \geq 2$, one notices that with the vertices in Fig.\ \ref{fig:AiutoAiuto} one can make only two types of connected graphs: either one picks $q$ vertices of type $M_\xi$, or one picks two vertices of type $M_{Q,L}$, $M_{Q,R}$ and $q-2$ vertices of type $M_\xi$.
	
	In the first case, the value of the associated truncated expectation does not depend on $\psi$ and it contributes to $E^\xi$. 
	
	In the second case, one first notices that each of the entries of the truncated expectation on the r.h.s.\ of \eqref{eq:AbbassoICanotti} is a quadratic monomial in the Grassmann variables, and then it commutes with the monomials on each other entry. Fixing an order of the entries in the truncated expectation produces a $q!$ in front and restricts the sum over the graphs of Definition \ref{def:Grafici111}. Last, the relation between $\mathbb{E}_\xi^T(M_{\#_1}; \cdots; M_{\#_q})$ and \eqref{eq:Wstortoq1} (with $W_\Gamma$ defined as in \eqref{eq}) follows from \eqref{eq:Abba}, \eqref{eq:Acca}, \eqref{eq:Adda}, \eqref{eq:A-Wick} and \eqref{eq:PastaFagioliETrippaSarda} after noting that each unordered connected graph has the same sign. Indeed, it is sufficient to take the sign appearing in Wick's theorem \eqref{eq:A-Wick} and to note that to keep the graph connected one must always exchange an even number of Grassmann variables.

\section{Derivation of \pref{gra}}
\label{app:Sym}

\noindent
\textcolor{black}{The value of a graph $W_\G$ is product of complex valued scalar functions $A_\bn$} and matrices of the form
$\begin{pmatrix}
		a_\bn(\bk) & b_\bn(\bk) \\
		b_\bn^*(\bk) & -a_\bn^*(\bk)
	\end{pmatrix}$
such that $a_\bn(\bk)=-a_{-\bn}(-\bk) \in \mathbb{C}$
and $b_{\bn}(\bk)=b_{-\bn}(-\bk) \in \ii \mathbb{R}$.  This follows by induction.
For the graph with three vertices we have to consider the product of three matrices $A_{\bn_A}(\bk),g_{\chi}(\bk-2 \pi \Omega \bn_A),B_{\bn_B}(\bk-2 \pi \Omega \bn_A)$. Here $A$ and $B$ can be either $P^{(j)}$ or $Q^{(j)}$. The following explicit computation yields
\[
	\begin{split}
		A_{\bn_A}(\bk) g_\chi(\bk-2 \pi \Omega \bn_A) &= \begin{pmatrix}
			a_{\bn_A}^{(A)}(\bk) & b_{\bn_A}^{(A)}(\bk) \\
			(b_{\bn_A}^{(A)}(\bk))^* & -(a_{\bn_A}^{(A)})^*
		\end{pmatrix}
	 \begin{pmatrix}
			a^{(\xi)}(\bk-2 \pi \Omega \bn_A) & {b}^{(\xi)}(\bk-2 \pi \Omega \bn_A) \\
			(b^{(\xi)}(\bk-2 \pi \Omega \bn_A))^* & -(a^{(\xi)}(\bk-2 \pi \Omega \bn_A))^* \\
			\end{pmatrix}	 \\
			&= \begin{pmatrix}
				\beta(\bk,\bn_A) & \alpha(\bk,\bn_A) \\
				-\alpha^*(\bk,\bn_A) & \beta^*(\bk, \bn_A)
			\end{pmatrix} \, .
	\end{split}
\]
where, explicitly,
\[
	\begin{split}
		\alpha(\bk,\bn_A)&:= a^{(A)}_{\bn_A}(\bk) b^{(\xi)}(\bk-2 \pi \Omega \bn_A)-b^{(A)}_{\bn_A}(\bk) (a^{(\xi)}(\bk-2 \pi \Omega \bn_A))^* \, ,\\
		\beta(\bk,\bn_B)&:= a_{\bn_A}^{(A)}(\bk) a^{(\xi)}(\bk-2 \pi \Omega \bn_A)+b_{\bn_A}^{(A)}(\bk) (b^{(\xi)}(\bk-2 \pi \Omega \bn_A))^* \, .
	\end{split}
\]
It follows now from the symmetry properties of $a$ and $b$ that $\alpha(\bk,\bn_A)=-\alpha(-\bk,-\bn_A)$ and $\beta(\bk,\bn_A)=\beta(-\bk,-\bn_A)$.

Computing the value of the graph, one has
\[
	\begin{split}
		A_{\bn_A}(\bk) &g_\chi(\bk-2 \pi \Omega \bn_A) B_{\bn_B}(\bk-2 \pi \Omega \bn_A)= \\
		&=\begin{pmatrix}
				\beta(\bk,\bn_A) & \alpha(\bk,\bn_A) \\
				-\alpha^*(\bk,\bn_A) & \beta^*(\bk, \bn_A)
			\end{pmatrix}
	 \begin{pmatrix}
			a^{(B)}_{\bn_A}(\bk-2 \pi \Omega \bn_A) & {b}^{(B)}_{\bn_B}(\bk-2 \pi \Omega \bn_A) \\
			(b^{(B)}_{\bn_B}(\bk-2 \pi \Omega \bn_A))^* & -(a^{(B)}_{\bn_B}(\bk-2 \pi \Omega \bn_A))^* \\
			\end{pmatrix} \\
			&=\begin{pmatrix}
				a_\bn(\bk) & b_\bn(\bk) \\
				b_\bn^*(\bk) & - a_\bn^*(\bk)
			\end{pmatrix}
	\end{split}
\]
with 
\[
	\begin{split}
		a_\bn(\bk)&=\beta(\bk,\bn_A) a_{\bn_A}^{(B)}(\bk-2 \pi \Omega \bn_A)+\alpha(\bk, \bn_A) (b^{(B)}_{\bn_B}(\bk-2 \pi \Omega \bn_A))^* \, , \\
		b_\bn(\bk)&=\beta(\bk,\bn_A) b_{\bn_B}^{(B)}(\bk-2 \pi \Omega \bn_A)-\alpha(\bk,\bn_A) (a_{\bn_B}^{(B)}(\bk-2 \pi \Omega \bn_A))^* \, .
	\end{split}
\]
From the symmetry properties of $a,b,\alpha$ and $\beta$ it follows that under the exchange $\{\bn_A,\bn_B, \bk\} \mapsto \{-\bn_A,-\bn_B,-\bk\}$ one has that $a_{\bn}(\bk) =-a_{-\bn}(-\bk)$ and $b_{\bn}(\bk)=b_{-\bn}(-\bk)$. This completes the proof for the graph with two vertices.
By the inductive hypothesis we assume that
the property holds for the product of the matrices associated to the sub-graph of the first $q-1$ points, and repeat the argument.

We note now that, calling $\mathrm{Val}(\Gamma_{\{\underline \bn_v\}}(\bk))$
the value $W_\G$ with $\G\in \mathcal{G}_{\bn,q}$
\begin{equation}\label{eq:RisommazioneFurba}
	\begin{split}
		\sum_{\bk \in \mathcal{D}_{\balpha}} \bpsi_{-\bk} \cdot \mathcal{V}_0(\bk) \bpsi_\bk &=\frac{1}{4}\sum_{\bk \in \mathcal{D}_\balpha} \bpsi_{-\bk} \cdot \sum_{\Gamma_{\{\underline \bn_v\}} \in \mathcal{G}_{0,q}^{ \xi}}\left[\mathrm{Val}(\Gamma_{\{\underline \bn_v\}}(\bk))-\left(\mathrm{Val}(\Gamma_{\{-
\underline \bn_v
\}}(-\bk))\right)^T+\right. \\
		&\qquad \left.+\mathrm{Val}(\Gamma_{\{-\underline \bn_v\}}(\bk))-\left(\mathrm{Val}(\Gamma_{\{\underline \bn_v\}}(-\bk))\right)^T \right] \bpsi_\bk \, .
	\end{split}
\end{equation}
 We start by computing
\begin{equation}
	\begin{split}
		\mathrm{Val} \, \Gamma_{\{\underline \bn_v\}}(\bk) - \mathrm{Val} \, \Gamma_{\{-\bn_n\}}(-\bk)^T \;&=\; \left(\prod_{v \in \Gamma} \widehat{A}_{\bn_v}^{(j_v)} \right) G_{\{\underline \bn_v\}}(\bk) -\left(\prod_{v \in \Gamma} \widehat{A}_{-\underline\bn_v}^{(j_v)} \right) \left(G_{\{-\bn_v\}}(-\bk)\right)^T \\
	\end{split}
\end{equation}
It is convenient to call $f_\bn:=\prod_{v \in \Gamma} \widehat{A}_{\bn_v}^{(j_v)} $. Then, using that $f_{-\bn}=f^*_{\bn}$, we have
\begin{equation}
	\begin{split}
		\mathrm{Val} \, \Gamma_{\{\underline \bn_v\}}(\bk) &- \left(\mathrm{Val} \, \Gamma_{\{-\underline \bn_v\}}(-\bk)\right)^T\\
		& \;=\; f_\bn \begin{pmatrix}
		a_\bn(\bk) & b_\bn(\bk) \\
		b^*_\bn(\bk) & - a^*_\bn(\bk) 
	\end{pmatrix}-f^*_{\bn} \begin{pmatrix}
		a_{-\bn}(-\bk) & b^*_{-\bn}(-\bk) \\
		{b_{-\bn}(-\bk)} & - a^*_{-\bn}(-\bk)
	\end{pmatrix} \\
	&\!\!\!\!\!\!\!\!=\; f_\bn \begin{pmatrix}
		a_\bn(\bk) & b_\bn(\bk) \\
		b^*_\bn(\bk) & - a^*_\bn(\bk) 
	\end{pmatrix}-{f_{\bn}^*} \begin{pmatrix}
		-a_{\bn}(\bk) & b^*_{\bn}(\bk) \\
		{b_{\bn}(\bk)} &  a^*_{\bn}(\bk)
	\end{pmatrix} \\
	&\;=\;\begin{pmatrix}
		(f_\bn+f^*_\bn) a_\bn(\bk) & f_\bn b_\bn(\bk)-f^*_\bn b^*_{\bn}(\bk) \\
		f_\bn b^*_\bn(\bk) - f^*_\bn b_\bn(\bk) & -(f_\bn(\bk)+f^*_\bn) a^*_\bn(\bk)
	\end{pmatrix} =\; \begin{pmatrix}
		\alpha_\bn(\bk) & \beta_\bn(\bk) \\
		\beta^*_\bn(\bk) & - \alpha^*_\bn(\bk)
	\end{pmatrix}
	\end{split}
\end{equation}
with
\begin{equation}\label{eq:AlphaBetaCoefficients}
		\alpha_\bn(\bk)=(f_\bn+f^*_\bn) a_\bn(\bk) \quad\quad
		\beta_\bn(\bk)=f_\bn b_\bn(\bk)-f^*_\bn b^*_{\bn}(\bk) \, .
\end{equation}
With those explicit expressions at hand, it is clear that $\beta_\bn(\bk) \in \ii \mathbb{R}$ and $\alpha_\bn(\bk) \in \mathbb{C}$, in general.
Finally
\begin{equation}
	\begin{split}
		\mathrm{Val} \, &\Gamma_{\{\underline \bn_v\}}(\bk)- \mathrm{Val} \, \Gamma_{\{-\underline \bn_v\}}(-\bk)^T+\mathrm{Val} \, \Gamma_{\{-
\underline \bn_v\}}(\bk)-\mathrm{Val} \, \Gamma_{\{-\underline \bn_v\}}(-\bk)^T \\
		&=\;\begin{pmatrix}
		\alpha_{\bn}(\bk) & \beta_{\bn}(\bk) \\
		 {\beta^*_{\bn }(\bk)} & -  {\alpha^*_{\bn }(\bk)}
	\end{pmatrix}+\begin{pmatrix}
		\alpha_{-\bn }(\bk) & \beta_{-\bn }(\bk) \\
		 {\beta^*_{-\bn }(\bk)} & -  {\alpha^*_{-\bn }(\bk)}
	\end{pmatrix} \\
	&=\begin{pmatrix}
		\alpha_{\bn }(\bk)+\alpha_{{-\bn }}(\bk) & \beta_{\bn }(\bk)+\beta_{-\bn }(\bk) \\
		 {\beta^*_{\bn }(\bk)}+ {\beta^*_{-\bn }(\bk)} & -  {\alpha^*_{\bn }(\bk)}- {\alpha^*_{-\bn }(\bk)}
	\end{pmatrix} \\
	&=\begin{pmatrix}
		\alpha_{\bn }(\bk)-\alpha_{\bn }(-\bk) & \beta_{\bn }(\bk)+\beta_{\bn }(-\bk) \\
		 {\beta^*_{\bn }(\bk)}+ {\beta^*_{\bn }(-\bk)} & -  {\alpha^*_{\bn }(\bk)}+ {\alpha^*_{\bn }(-\bk)}
	\end{pmatrix} \, .
	\end{split}
\end{equation}

\begin{remark} In the case of  layered disorder constant in one direction, say $\mathbf{e}_0$, the theory is translation invariant
in one direction and 
one has to the additional property that $a_\bn \in \ii \mathbb{R}$, implying that the velocities are real.
Indeed only 
$P^{(1)}$ matrices are present and, as we are interested in the case $k_0=k_1=0$ with $\omega_1 \neq 0$. This implies that also the entries of the propagator are purely imaginary, and the product of an odd number of imaginary numbers is imaginary.
\end{remark}
\textcolor{black}{
\begin{remark}
With respect to the symmetries of Section II.D in \cite{M3} we break manifestly symmetries 1)-3) (which are, in order, parity, diagonal reflection and orthogonal reflection). Moreover, note that all kernels $K_\bn(\bk)$ appearing in Section \ref{sec:Grassmann} are such that $K_\bn(\bk)=[K_{-\bn}(-\bk)]^*$ which is nothing but the symmetry by complex conjugation (i.e., the symmetry 4) in Section II.D of \cite{M3}.
\end{remark}
}

\section{Action of the $\mathcal{R}$ operation}
\label{appendix:LemmaGG1}
In this appendix, for $j\in\mathbb{N}$, we denote by $\mathbf{R}_j$ the set of resonant clusters strictly contained in $\mathbf{R}_{j-1}$ and not in any other \emph{resonant} cluster. (We also denote by $\mathbf{R}:=\bigcup_{j=1}^{+\infty} \mathbf{R}_j$.)
Denoting with $\mathbf{R}_2$ the set of maximal resonances contained in $\mathbf{R}_1$, the value of renormalized resonant cluster can now be estimated as, if $\tilde T$ is a resonance
\begin{equation}\label{eq:RHS_renormalized}
\left\Vert \mathcal{R} W_{\tilde{T}}^{(h_{\tilde{T}})}(\bk_{\tilde{T}}) \right\Vert \leq \sup_{t \in [0,1]} \left\Vert \frac{\ud^2}{\ud t^2} W^{(h_{\tilde{T}})}_{\tilde{T}}(t \bk_T) \right\Vert
\end{equation}

One has now to analyze what happens when a derivative acts on a renormalized cluster.  

If two derivatives corresponding to a resonance $\tilde{T}$ acts on the value of some renormalized resonant cluster $\tilde{T}' \subset \tilde{T}$, recalling that $\bk_{\tilde{T}'}=t \bk+\bq$ for suitable $\bq$, one has
\begin{equation}
	\begin{split}
		\frac{\ud^2}{\ud t^2} \mathcal{R}W^{(h_{\tilde{T}'})}_{\tilde{T}'}(t\bk+\bq) \;&=\; \frac{\ud^2}{\ud t^2} \left[W^{(h_{\tilde{T}'})}_{\tilde{T}'}(t\bk+\bq)-W^{(h_{\tilde{T}'})}_{\tilde{T}'}(0)-(t \bk+\bq)\cdot \partial_\bk W^{(h_{\tilde{T}'})}_{\tilde{T}'}(0) \right] \\
		&=\;\frac{\ud^2}{\ud t^2} W^{(h_{\tilde{T}'})}_{\tilde{T}'}(t \bk+\bq) \, .
	\end{split}
\end{equation}
If one derivative acts on a renormalized cluster, we have instead
\begin{equation}
		\frac{\ud}{\ud t} \mathcal{R}W^{(h_{\tilde{T}'})}_{\tilde{T}'}(t \bk+\bq) \;=\;  \int_0^1 \frac{\ud}{\ud t} \frac{\ud}{\ud s} W^{(h_{\tilde{T}'})}_{\tilde{T}'}(s(t \bk+\bq)) \, \ud s \, .
\end{equation}
Whence we get the two bounds
\begin{equation}
	\left\Vert \frac{\ud^2}{\ud t^2} \mathcal{R}W^{(h_{\tilde{T}'})}_{\tilde{T}'}(t \bk+\bq) \right\Vert = \left\Vert \frac{\ud^2}{\ud t^2} W^{(h_{\tilde{T}'})}_{\tilde{T}'}(t \bk+\bq) \right\Vert  \, ,
\end{equation}
\begin{equation}
	\left\Vert \frac{\ud}{\ud t} \mathcal{R} W^{(h_{\tilde{T}'})}_{\tilde{T}'}(t \bk+\bq) \right\Vert  \leq \sup_{s,t \in [0,1]} \left \Vert \frac{\ud}{\ud t} \frac{\ud }{\ud s}  W^{(h_{\tilde{T}'})}_{\tilde{T}'}( s(t\bk+\bq)) \right\Vert \, .
\end{equation}
So, summarizing, for the estimate we have the following:
\begin{itemize}
\item if two derivatives corresponding to a resonance $\tilde{T}$ act on the value of some resonance $\tilde{T}' \subset \tilde{T}$, one can replace with $\mathbf{1}$ the $\mathcal{R}$ operator; 
\item if one derivative corresponding to a resonance ${\tilde{T}}$ acts on the value of some resonance $\tilde{T}' \subset \tilde{T}$, one can replace with $\frac{\ud}{\ud s}$ the $\mathcal{R}$ operator and take the supremum over $s \in [0,1]$; 
\item if no derivatives act on a resonance, one can replace $\mathcal{R}$ with $\frac{\ud^2}{\ud s^2}$ and take the supremum over $s \in [0,1]$.
\end{itemize}

These remarks permit us to iterate this procedure considering the action of derivatives on resonances inside resonances. Proceeding in this way, we see that the R.H.S. of \eqref{eq:RHS_renormalized}  can be bounded in the following way. We denote by $f$ either a line or a vertex and with $T_R\in \mathbf{R}$ a resonant cluster.
\begin{itemize}
	\item There is one term for each ordered pair $(f_1,f_2)$, with $f_1, f_2 \in T_R$, not necessarily different (\textit{i.e.} it may happen that $f_1=f_2$).
	\item If $f_1 \in {\tilde{T}}_0$ and ${\tilde{T}}$ is a cluster contained in $T_R$, then ${\tilde{T}}=T^{(r)} \subset T^{(r-1)} \subset \dots \subset T^{(1)}=T_R$ is the chain of clusters associated to $f_1$ containing ${\tilde{T}}$ and contained in $T_R$. Similarly, if $f_2 \in \hat{T}_0$ and $\hat{T}$ is a cluster contained in $T_R$, one constructs the chain of clusters associated to $f_2$ containing $\hat{T}$ and contained in $T_R$.
	\item { At this point we replaced the $\mathcal{R}$ operator acting on the cluster $T_R$ with two derivatives.}
	\item { If a resonant cluster belongs to both the chain of clusters (the one associated with $f_1$ and the one associated with $f_2$), then its $\mathcal{R}$ operator is removed. }
	\item { If instead there is a cluster (say, $T_V$) belonging to only one of the chain of clusters, then there is one term for any $f_3 \in T_V$. If $f_3 \in (T_V')_0 \subset T_V$, then one considers the chain of cluster associated to $f_3$, containing $T_V'$ and contained in $T_V$. One replaced the $\mathcal{R}$ operator acting on $T_V$.}
	\item This construction is repeated until all $\mathcal{R}$ operators are { replaced}. At this point each cluster inside a resonance belongs to two chains of vertices.
	\item From their explicit expression, it is also obvious that one can estimate the action of a derivative on a vertex with the action of a derivative on a propagator on the same scale.
	\item Last, the number of terms that are generated in this procedure is estimated by $9^q$ (that is the number of terms generated when each vertex or each line can be derived zero, one or two times without any constraint).
\end{itemize}

Note that, an adaptation of this argument permits to treat the terms $\partial_\bk^s$ appearing in \eqref{eq:BoundOnEffectivePotential1} and \eqref{eq:ResonantClustersW}.

\vspace{0.2cm}

\noindent
\textbf{Acknowledgments.} We thank Rafael L.\ Greenblatt and Marcello Porta for fruitful discussions and the anonymous referees for their fruitful criticism. We acknowledge financial support of the MIUR-PRIN 2017 project MaQuMA cod.\ 2017ASFLJR, the European Research Council (ERC) under the European Union’s Horizon 2020 research and innovation program ERC
StG MaMBoQ, n.802901.
We also thank GNFM, the Italian National Group of Mathematical Physics. M.G.\ acknowledges Dipartimento di Matematica ``F. Enriques'', University of Milan, where part of this work was carried out. V.M.\ acknowledges Institute for Advanced Studies (Princeton) where part of this work was carried out.

\vspace{0.2cm}

\noindent
\textbf{Data availability statement.} This manuscript has no associated data.

\end{document}